\documentclass[11pt]{article}

\usepackage[a4paper,margin=1in]{geometry}
\usepackage{amsmath,amssymb,amsthm,mathtools}
\usepackage{microtype}
\usepackage{xcolor}
\usepackage{booktabs}
\usepackage{enumitem}
\usepackage{authblk}
\usepackage{algorithm}
\usepackage[noend]{algpseudocode}
\usepackage{tikz}
\usetikzlibrary{arrows.meta,decorations.pathreplacing,positioning}
\usepackage[
  colorlinks=true,
  linkcolor=blue!55!black,
  citecolor=blue!55!black,
  urlcolor=blue!55!black
]{hyperref}

\newtheorem{theorem}{Theorem}[section]
\newtheorem{lemma}[theorem]{Lemma}
\newtheorem{proposition}[theorem]{Proposition}
\newtheorem{corollary}[theorem]{Corollary}
\theoremstyle{definition}

\newtheorem{openproblem}[theorem]{Open Problem}
\newtheorem{remark}[theorem]{Remark}

\newcommand{\ALG}{\operatorname{ALG}}
\newcommand{\OPT}{\operatorname{OPT}}
\newcommand{\LP}{\operatorname{LP}}
\newcommand{\E}{\mathbb{E}}
\newcommand{\Delay}{\operatorname{Delay}}
\newcommand{\Pages}{\mathcal{P}}
\newcommand{\Cache}{\mathcal{C}}
\newcommand{\Holes}{\mathcal{H}}
\newcommand{\Pending}{\mathcal{B}}

\title{Paging with Per-Replacement Maximum Delay}
\author{
Tianhang Lu,
Runtian Ren,
Shengcai Liu
}
\affil{
Guangdong Provincial Key Laboratory of Brain-Inspired Intelligent Computation,\\
Department of Computer Science and Engineering,\\
Southern University of Science and Technology, Shenzhen 518055, China\\
liusc3@sustech.edu.cn
}
\date{}

\begin{document}

\maketitle
\begin{abstract}
Classical paging serves every miss immediately.  We study paging with
per-replacement maximum delay, where loading a pending page costs one movement
plus the age of its oldest outstanding request and clears that page's entire
episode.  Equivalently, the holding rate is the number of pending pages.

For cache size $k$, threshold LRU is strictly $(5k+3)$-competitive, while a
randomized algorithm is strictly $5H_k$-competitive against an oblivious
adversary; classical constructions give matching $\Omega(k)$ and
$\Omega(H_k)$ orders.  Offline, we obtain an exact $O(nk)$ dynamic program
with one hole, an exact configuration algorithm for any fixed number of
holes, and a nonproactive polynomial-time $5$-approximation in general.
Physical farthest-next-use can nevertheless fail with only three pages.

For page-dependent fetch costs with spread
$\rho=w_{\max}/w_{\min}$, the exact fixed-number-of-holes algorithms persist.
We obtain a $(3\rho+2)$-approximation and online guarantees with multiplicative
factors $O(\rho k)$ deterministically and $O(\rho H_k)$ randomly against an
oblivious adversary, plus any additive term inherited from the corresponding
classical guarantee.  A strict $\Omega(\sqrt\rho)$ randomized lower bound
already holds for one cache slot.  Thus maximum delay preserves the unit-cost
competitive hierarchy but disrupts classical offline structure and makes
weighted timing spread-sensitive.
\end{abstract}

\medskip
\noindent\textbf{Keywords:} paging; online algorithms; competitive analysis;
maximum delay; weighted paging; randomized algorithms; approximation algorithms

\section{Introduction}
\label{sec:introduction}

Paging is a standard model of online computation.  A cache holds $k$ pages,
requests arrive sequentially, and every miss is served immediately by loading
the requested page and evicting a cached page.  Three classical conclusions
shape the subject: deterministic algorithms have the characteristic
$\Theta(k)$ scale, randomized algorithms have the $\Theta(H_k)$ scale against
an oblivious adversary, and the offline optimum admits Belady's
farthest-next-use rule
\cite{belady1966study,sleator1985amortized,fiat1991competitive,
mcgeoch1991strongly}.

\paragraph{A motivating systems abstraction.}
Consider a cache manager between a small fast memory and a slower backing
store.  Moving an absent page into the fast tier consumes transfer bandwidth
and may evict a resident page.  Modern nonblocking memory systems can
\emph{combine} concurrent misses to the same block: later misses join one
pending entry, and a single returned block serves all of them
\cite{asiatici2022request,atre2020delayed}.  In the delayed-hits model, the
first miss starts a fetch of fixed duration.  We isolate a complementary
scheduling question for a system that may also control when a pending fill is
committed to a capacity-constrained cache.  An eager fill reduces the current
wait but may destroy a useful resident page and cause another transfer soon
afterwards; postponing it preserves current hits but increases the age of the
oldest request waiting for that page.

After normalizing one page transfer and replacement to unit cost, we model
this design as \emph{paging with per-replacement maximum delay}.  Requests
arrive on a continuous timeline and a miss may remain pending.  When a page
$p$ is eventually loaded, the system pays one unit of movement plus the age of
the oldest outstanding request to $p$, and all requests waiting for that page
are cleared together.  The delay term should therefore be read as the tail
waiting time of a page-level fill obligation, not as the sum or average of
individual request latencies.
The unit-cost model isolates the timing effect.  Section~\ref{sec:weighted}
later allows page-dependent transfer costs, as arise from heterogeneous
retrieval paths, remote-storage tiers, or per-page processing costs under a
fixed-slot abstraction.

The delay has a useful exact representation.  If $\Pending(t)$ is the set of
pages that are both missing and pending at time $t$, then
\begin{equation}
  \label{eq:intro-holding}
  \Delay=\int_0^\infty |\Pending(t)|\,dt.
\end{equation}
Thus the objective tracks active missing-page obligations rather than raw
request mass.  After one request has made a page pending, further requests to
that page do not increase the holding rate.  Requests to different missing
pages remain separate obligations because clearing them requires separate
replacements.

This accounting differs sharply from request-additive paging with delay.  If
one thousand requests arrive for the same absent page before it is loaded, a
request-additive objective counts one thousand waiting requests.  Here they
form one pending page episode, so that page contributes exactly one to the
holding rate until it is loaded.  The difference is therefore in what creates
urgency, not merely in a constant factor.

Waiting creates two coupled decisions absent from classical paging.  The
algorithm must decide \emph{when} to end a pending episode as well as which
cached page to evict.  Moreover, an eviction determines which future request
will become pending.  This makes the usual rent-or-buy analogy incomplete: a
pending page rents at unit rate and costs one to buy, but buying it changes the
state in which every later rental decision is made.  Randomization adds a
further difficulty.  If service times are derived from the physical cache,
the sequence seen by a delayed Marker rule may depend on its earlier victim
coins even when the raw timed input is oblivious.

Our main conclusion is asymmetric:
\begin{quote}
\emph{Maximum delay preserves paging's competitive hierarchy more robustly
than its classical offline optimality structure.}
\end{quote}
This statement describes the unit-cost model.  With unequal fetch costs,
the same structural tools survive, but arbitrary weight spread creates an
additional timing difficulty already with one cache slot.
In the unit-cost model, the online $\Theta(k)$ and $\Theta(H_k)$ orders
survive, but not by merely
slowing down classical algorithms.  A page-specific timer supplies the
missing temporal decision for LRU, while cache-independent windows give the
randomized paging subroutine an input-determined request sequence fixed before
its random choices.  Thus the classical
eviction rules remain useful only after a new timing rule removes the
dependence created by waiting.

Offline, one hole admits a compact exact dynamic program, and a fixed number
of holes admits an exact configuration dynamic program.  The same temporal
reduction used online also gives a polynomial-time $5$-approximation for an
unrestricted number of holes.  Nevertheless, the physical delayed optimum
need not obey farthest-next-use victim selection: waiting can make it optimal
to evict a page requested sooner.  Belady remains valid only inside the
ordinary virtual paging instance used by the approximation.  The exact
complexity for an unrestricted number of holes remains unresolved.

The weighted extension makes a second boundary precise.  Its structural
tools remain valid, but both our upper bounds and a one-slot lower bound now
depend on the weight spread.  Unlike classical weighted paging, the delayed
model must decide whether a cheap pending page should displace an expensive
resident page before that resident is requested again.

\paragraph{Scope of the contributions.}
The results have three layers.  Sections~\ref{sec:model}--\ref{sec:online-general-randomized}
give the core unit-cost paging theory.  Section~\ref{sec:weighted} is a full
weighted extension and identifies a strict spread-dependent timing barrier.
Appendix~\ref{sec:kserver-transfer} is only a general-metric outlook: it gives
a black-box transfer and isolates why a metric-independent theorem needs new
ideas.  Table~\ref{tab:intro-results} summarizes the first two layers.

One classical structural principle survives in an especially strong form.
Although the model permits proactive loading, a causal discrepancy projection
turns every schedule into a nonproactive one without increasing either its
movement or its delay.  Thus waiting is useful, but prefetching a page with no
pending obligation is not; the final cache being free is essential to this
statement.

These results have a simple explanation:
\begin{quote}
\emph{Page-wise maximum delay removes repeated requests from the urgency count
but leaves persistent cache state intact.}
\end{quote}

\paragraph{Techniques.}
The proofs use three main ideas.  First, the holding identity supports the
event dynamic programs and charges timer length directly to an offline
schedule without expanding repeated requests.

Second, two results use the same one-sided cache-discrepancy potential
\[
  \Phi(S,C)=|S\setminus C|.
\]
When a shadow replacement costs $f$ and a physical repair costs $g$, the
local inequality
\[
  g+\Delta\Phi\le f
\]
pays for the repair.  Applied to an arbitrary schedule, this is the causal
lazy projection: batch-time repairs reduce $\Phi$ for free and proactive
movement is never needed.  Applied after temporal aggregation, the same
potential implements a classical sequence of shadow-cache states using only
pending-page physical replacements.  Weighting each term of $\Phi$ by its
page's fetch cost gives the weighted form.

Third, the deterministic and randomized algorithms use deliberately
different timing rules.  Threshold LRU keeps the physical coupling and uses
OPT phases to charge activations, timers, and quarantined exceptions.  The
randomized algorithm instead removes cache dependence before choosing victims:
\[
  \text{raw input}
  \longrightarrow \sigma_\theta
  \longrightarrow \text{shadow paging}
  \longrightarrow \text{physical projection}.
\]
The two comparison inequalities
\[
  \ALG\le(1+\theta)F,
  \qquad
  \OPT_{\rm pg}(\sigma_\theta)
     \le \frac{2D}{\theta}+3M
\]
then have two uses: exact classical paging yields the offline
factor five, while the optimal classical randomized paging guarantee yields
the online factor $5H_k$.  Marker gives a simpler explicit $10H_k$
version.  For weights, page-dependent windows and a two-part charge for bad
representatives preserve the same architecture.

\subsection{Results}

Let $m$, $k$, and $r=m-k$ denote the page-universe size, the cache size, and
the number of holes, respectively.

\paragraph{Offline structure and algorithms.}
Every schedule has a cost-nonincreasing causal projection that loads only
currently pending pages.  With one hole,
the exact state consists of the hole identity and one pending bit, giving an
$O(nk)$ algorithm.  For arbitrary $r$, the exact state is a hole set together
with its pending subset, giving
\[
  O\!\left(n\binom mr2^r(1+rk)\right)
\]
time and hence polynomial time for fixed $r$.  We also characterize exactly
which states can be joined by a same-epoch lazy action word and quotient the
state space by the actual request support.

This exact state cannot currently be collapsed by choosing victims according
to their next use: a three-page instance makes physical farthest-next-use
strictly suboptimal.  For unrestricted $r$, however, fixed temporal windows
reduce the input to an ordinary paging sequence.  Solving that virtual
instance exactly and projecting it back gives a deterministic nonproactive
polynomial-time $5$-approximation.

\begin{table}[h]
  \centering
  \small
  \setlength{\tabcolsep}{6pt}
  \begin{tabular}{@{}p{0.22\linewidth}p{0.70\linewidth}@{}}
    \toprule
    \multicolumn{2}{@{}l}{\textbf{Offline structure and algorithms}} \\
    \midrule
    One hole
      & Exact $O(nk)$ dynamic program. \\
    Fixed $r$
      & Exact XP algorithm in $m^{O(r)}\operatorname{poly}(n)$ time. \\
    Arbitrary $r$
      & Exact complexity unresolved; deterministic nonproactive
        polynomial-time $5$-approximation. \\
    \midrule
    \multicolumn{2}{@{}l}{\textbf{Online competitive bounds}} \\
    \midrule
    One hole
      & Deterministic $(k+3)$ Timer-LRU; service-anchored Marker satisfies
        $\E[\ALG]\le 6H_k\OPT+2H_k+2$. \\
    General universe
      & Deterministic $(5k+3)$ Timer-LRU and randomized
        $5H_k$ AW-Partition against an oblivious adversary; both orders are
        asymptotically tight. \\
    \midrule
    \multicolumn{2}{@{}l}{\textbf{Weighted fetch costs}
      ($\rho=w_{\max}/w_{\min}$)} \\
    \midrule
    Fixed $r$, offline
      & Exact XP algorithm with the same state bound. \\
    General, offline
      & Deterministic nonproactive polynomial-time
        $(3\rho+2)$-approximation. \\
    General, online
      & Deterministic multiplicative factor $O(\rho k)$; randomized factor
        $O(\rho H_k)$ against an oblivious adversary.  A strict
        $\Omega(\sqrt\rho)$ randomized lower bound holds for $k=1$. \\
    \bottomrule
  \end{tabular}
  \caption{Summary of the results.  Here $r=m-k$ is the number of pages
  outside a cache of size $k$ in a universe of size $m$.  The general online
  bounds do not depend on $r$; one hole admits sharper constants.  Weighted
  bounds depend on the weight spread and recover constant-factor or
  classical-order guarantees when $\rho=O(1)$; the weighted online transfers
  inherit any additive term in the corresponding classical guarantee.}
  \label{tab:intro-results}
\end{table}

\paragraph{Online competitive bounds.}
For an arbitrary page universe, threshold LRU satisfies
\[
  N\le \frac{2D}{\theta}+(5k+1)M
\]
against every lazy comparator with movement $M$ and holding cost $D$, where
$N$ is the online service count.  Choosing $\theta=2/(5k+1)$ gives a strict
$(5k+3)$ ratio.  Running McGeoch and Sleator's partitioning algorithm in the
shadow cache after the same temporal aggregation satisfies
\[
  \E[\ALG]\le 5H_k\OPT
\]
against an oblivious adversary.  Classical lower-bound constructions transfer
to give $\Omega(k)$ and $\Omega(H_k)$, so both asymptotic orders are tight.
Using Marker instead gives a simpler explicit $10H_k$ version.
The lower bounds use requests separated by long time gaps.  Serving before
the next arrival then reproduces classical paging, while waiting until that
arrival already incurs a large holding cost.  Thus the classical
$\Omega(k)$ and $\Omega(H_k)$ examples remain valid in order.
When $r=1$, we additionally give a $(k+3)$-competitive threshold-LRU rule and
a service-anchored Marker bound of
$6H_k\OPT+2H_k+2$; these refinements appear in
Appendix~\ref{sec:online-one-hole}.

\paragraph{Weighted fetch costs.}
For page weights $w_p>0$, weighted causal laziness and the exact one-hole and
fixed-$r$ algorithms retain their unit-cost running times.  For arbitrary
$r$, page-dependent aggregation gives a deterministic polynomial-time
$(3\rho+2)$-approximation, a deterministic online guarantee with
multiplicative factor $O(\rho k)$, and a randomized guarantee with factor
$O(\rho H_k)$ against an oblivious adversary, where
$\rho=w_{\max}/w_{\min}$.  These upper bounds recover the unit-cost orders
when $\rho=O(1)$, with any additive term inherited from the underlying
classical weighted-paging algorithm.  A separate two-page construction gives a strict
$\Omega(\sqrt\rho)$ randomized lower bound already for $k=1$, showing that
arbitrary weight spread cannot disappear from a strict competitive guarantee.

\paragraph{LP relaxation and open questions.}
We formulate a polynomial marginal relaxation whose variables record clean
and pending hole mass.  It is not integral: a four-page instance gives an
integrality gap of at least $8/7$.  The $5$-approximation does not round this
LP, so improving the factor and determining the relaxation's full gap remain
separate questions.  Appendix~\ref{sec:approximation} gives the formulation
and certified gap.

\paragraph{Related work.}
Classical paging supplies three sharp benchmarks for the present study.
Belady's farthest-in-future rule is offline optimal
\cite{belady1966study}.  Sleator and Tarjan proved that LRU and FIFO are
$k$-competitive and that no deterministic online paging algorithm can improve
this factor \cite{sleator1985amortized}.  Fiat et al. introduced the randomized
marking framework, obtaining the harmonic $\Theta(H_k)$ competitive order
against an oblivious adversary \cite{fiat1991competitive}.  McGeoch and
Sleator's partitioning algorithm attained the optimal strict $H_k$ upper
bound for every fixed classical request sequence
\cite{mcgeoch1991strongly}.  Thus both online orders are classical, whereas
the temporal mechanisms used to recover them and the failure of physical
farthest-next-use are specific to maximum delay.

Classical weighted paging assigns each page $p$ a fetch cost $w_p$.
GreedyDual gives the optimal deterministic $k$ scale, and Bansal, Buchbinder,
and Naor give the optimal randomized $O(H_k)$ scale; these guarantees do
not depend on the ratio between the largest and smallest page weights
\cite{young1994kserver,bansal2007weighted}.  With maximum delay, the cache
manager must additionally decide whether a cheap pending page should displace
an expensive resident page before that resident is requested again.
Section~\ref{sec:weighted} shows that this timing choice forces weight-spread
dependence in strict competitiveness even when $k=1$.

Classical paging is also the uniform-metric special case of $k$-server.
On arbitrary metrics, the work-function algorithm is
$(2k-1)$-competitive under the standard convention
\cite{koutsoupias1995kserver}; a strict $(4k-2)$ form is also known
\cite{emek2010additive}.  Appendix~\ref{sec:kserver-transfer} gives a first
aspect-ratio-dependent transfer of any such classical guarantee to the direct
endpoint-batch version of our delayed model.  We keep this extension outside
the main development because removing the aspect ratio, and allowing several
moves to share one maximum-delay charge, require new ideas.

In weighted request-additive paging with delay, each outstanding request
contributes its own penalty.  Polylogarithmic online algorithms and
constant-factor offline approximations are known in that model
\cite{gupta2022caching}.  Our accounting replaces request mass by the number
of active missing-page obligations, so those results do not transfer.

Delayed hits are the closest systems-level comparison.  A miss starts a fetch
that completes after a fixed latency $Z$, and requests arriving for that page
in the meantime wait for the same return.  Atre et al. showed that classical
Belady need not minimize total request latency in this setting and developed
an offline latency-minimization algorithm \cite{atre2020delayed}.
Gurushankar et al. subsequently proved that LRU, and more generally marking
algorithms, are $O(Zk)$-competitive \cite{gurushankar2025latency}.  In our
model the service time is chosen by the algorithm, repeated requests add no
separate delay, and movement is an explicit part of the objective.  Thus the
two Belady failures concern different optimization problems, and neither the
offline algorithm nor the online guarantee transfers directly.

\begin{table}[h]
  \centering
  \small
  \setlength{\tabcolsep}{5pt}
  \newcommand{\modelrowdash}{%
    \noalign{\vskip2pt%
      \hbox to \dimexpr0.92\linewidth+4\tabcolsep\relax{%
        \color{black!35}%
        \leaders\hbox{\rule{2pt}{0.25pt}\hskip2pt}\hfill}%
      \vskip2pt}}
  \begin{tabular}{@{}p{0.21\linewidth}p{0.22\linewidth}p{0.49\linewidth}@{}}
    \toprule
    \raggedright Model
      & \raggedright Service timing
      & \raggedright Waiting and objective \tabularnewline
    \midrule
    \raggedright Classical paging
      & \raggedright Immediate on a miss
      & \raggedright Replacement count; no waiting term \tabularnewline
    \modelrowdash
    \raggedright Request-additive delay
      & \raggedright Chosen by the algorithm
      & \raggedright Movement plus a penalty for every pending request
        \tabularnewline
    \modelrowdash
    \raggedright Delayed hits
      & \raggedright Fixed $Z$ after the first miss
      & \raggedright Total latency of all requests \tabularnewline
    \modelrowdash
    \raggedright Bounded reordering
      & \raggedright Chosen within a prescribed window
      & \raggedright Paging cost subject to a hard service constraint
        \tabularnewline
    \modelrowdash
    \raggedright This work (uniform or weighted fetch costs)
      & \raggedright Chosen by the algorithm
      & \raggedright Page-load cost plus one oldest-age charge per page episode
        \tabularnewline
    \bottomrule
  \end{tabular}
  \caption{Nearby caching models differ in who controls service time and in
  what creates urgency.  All retain the persistent cache state that couples
  one replacement decision to later requests.}
  \label{tab:model-comparison}
\end{table}

Paging with bounded request reordering is the closest hard-constraint
counterpart.  Feder et al. allow each request to be delayed by at most a
prescribed number of time steps.  They give an offline dynamic program and
tight online bounds of $k+O(1)$ for deterministic algorithms and
$\Theta(\log k)$ for randomized algorithms \cite{feder2004combining}.
Albers later studied arbitrary cache sizes in a request-reordering model that
permits bypassing; the general offline problem is NP-hard, and small
constant-factor approximations follow from a reduction to batched service
schedules \cite{albers2010reordering}.  Our model replaces the prescribed
window by a soft episode-age cost and permits arbitrary postponement.

Reordering buffer management gives a broader comparison.  Colored items wait
for output in a capacity-limited buffer, and the objective counts color
changes.  Waiting is free but limited by buffer capacity, rather than priced
as in our model.  Its optimal randomized competitive ratio is
$\Theta(\log\log K)$ for buffer size $K$
\cite{adamaszek2011almost,avigdor2013rbm}, illustrating
that reordering does not in general preserve paging's $\Theta(H_k)$ order.

Bounded reordering and input time windows constrain when service is feasible,
whereas our delay is a soft cost and service remains feasible later.  In
particular, the fixed windows used in
Section~\ref{sec:online-general-randomized} are algorithmic aggregation
devices, not input deadlines.  On a uniform metric, request-additive
$k$-server with delay has a tight deterministic ratio $2k+1$
\cite{krnetic2020kserver}; its additive urgency accounting does not apply to
the present objective.

The objective is motivated by maximum-delay aggregation \cite{bhore2026general, lu2026mlamax}.  In multi-level
aggregation, younger requests may likewise join an older service without
increasing its delay, and a static submodular service structure supports a
polynomial offline dynamic program and DP-driven online timers
\cite{lu2026mlamax}.  Paging likewise counts one pending obligation per absent
page, but its state is not static: every replacement changes the cache from
which later decisions begin, coupling service timing with victim selection.
Consequently, the static-service dynamic-programming machinery does not carry
over directly; the present work instead requires configuration states,
victim-aware quarantine, and shadow-to-physical cache projections.  Online
service with delay also studies batching against waiting and identifies
reordering buffer management as a related problem \cite{azar2017osd}.
Methodologically, both analyses separate waiting obligations from a bounded
family paid by movement.  These are objective-level and proof-level analogies,
not reductions.

\subsection{Organization}

Section~\ref{sec:model} fixes the event semantics and holding identity.
Section~\ref{sec:offline} gives the exact offline structure and algorithms.
Section~\ref{sec:online-general} develops the deterministic online algorithm.
Section~\ref{sec:online-general-randomized} gives the common temporal
reduction, its offline approximation, and its randomized online
use.  Section~\ref{sec:weighted} extends the structural and temporal tools to
page-dependent fetch costs and proves the strict weight-spread lower bound.
Section~\ref{sec:conclusion} discusses what survives and what
changes.  Appendix~\ref{sec:online-one-hole} records the sharper one-hole
online bounds, and Appendix~\ref{sec:approximation} studies the separate
marginal LP.  Appendix~\ref{sec:kserver-transfer} records a first
general-metric $k$-server transfer and its single-scale limitation.

\section{Model and Structural Preliminaries}
\label{sec:model}

Let $\Pages$ denote a finite page set of size $m$, and assume $1\le k<m$; the
degenerate case $k=m$ has zero cost.  A cache configuration is a $k$-set
$\Cache\subseteq\Pages$; its complementary set
\[
  \Holes=\Pages\setminus\Cache,
  \qquad |\Holes|=r=m-k,
\]
is the set of \emph{holes}.  The initial cache $C_0$ is fixed and shared by the
online algorithm and the offline comparator.

All request and action times lie in $[0,\infty)$.  For every nonempty input,
translate time so that the first arrival epoch is zero.  A schedule starts
with the common initial cache at the event-order cut immediately before that
first complete batch; no action is permitted before this cut.  The empty
input has cost zero and will be omitted below.

Requests arrive at distinct epochs $0=t_1<t_2<\cdots<t_n$.  The atomic batch
$A_i\subseteq\Pages$ is revealed in full at time $t_i$.  Every request to a
cached page is satisfied immediately.  If a hole page $p$ is requested and
has no earlier outstanding request, it starts a pending episode at time
$a_p=t_i$.  Later requests to the same missing page join that episode and do
not reset $a_p$.
Multiple copies of one page at the same epoch may be collapsed without loss:
they all hit, or they start or join the same episode without changing its
oldest arrival time.  Thus representing each atomic batch as a set is
without loss of generality.

A replacement loads a hole page $p$ and evicts a cached page $q$.  The cache
and hole sets become
\[
  \Cache'=(\Cache\setminus\{q\})\cup\{p\},
  \qquad
  \Holes'=(\Holes\setminus\{p\})\cup\{q\}.
\]
If $p$ is pending and is loaded at time $s$, the replacement clears all
outstanding requests to $p$ and costs
\begin{equation}
  \label{eq:replacement-cost}
  1+(s-a_p).
\end{equation}
A replacement that loads a pending page will also be called a \emph{service}.
A replacement that loads a nonpending page is \emph{proactive} and costs
one.  A schedule with no proactive replacement is \emph{nonproactive} or
\emph{lazy}.  The final cache configuration is unrestricted, but every
pending request must eventually be served.

For every pending episode $e$, let $a_e$ be its activation time and let
$s_e\ge a_e$ be the time at which it is cleared.  We distinguish its
\emph{holding support}
\[
  \mathcal H_e=[a_e,s_e)
\]
from its \emph{closed service span}
\[
  \mathcal J_e=[a_e,s_e].
\]
Only $\mathcal H_e$ contributes to holding cost.  The closed span $\mathcal
J_e$ is used when a counting argument must retain activations and services at
its endpoints.  In particular, if $s_e=a_e$, then $\mathcal H_e$ is empty but
$\mathcal J_e=\{a_e\}$.  Both intervals have length $s_e-a_e$.

We write $M$ for the number of replacements and $D$ for the sum of the
page-episode delays.  The total objective is $M+D$.  Thus
Equation~\eqref{eq:replacement-cost} describes the combined contribution of
one lazy replacement, whereas a proactive replacement contributes only to
$M$.

\begin{remark}[Arrival-first convention]
If an arrival epoch coincides with one or more actions, the complete batch is
processed before any action.  A page that is cached during the batch is a
hit; if it is evicted later at the same timestamp, it becomes a clean hole.
Several zero-time replacements are ordered micro-actions, with no arrival
interleaved between them.
For a schedule $O$, write $C_O^-(t)$ for its cache at the event-order cut
immediately after the complete batch at real time $t$ and before its first
same-time action, and write $C_O^+(t)$ for the cache after all of its actions
at that timestamp.  A statement that an activation precedes a post-action
phase start refers to this event order; the activation and the phase start may
have the same real timestamp.
\end{remark}

Figure~\ref{fig:waiting-toy} isolates the new timing decision in the smallest
useful example.  Waiting can preserve a raw hit and strictly reduce total
cost even when the victim is unique.

\begin{figure}[t]
  \centering
  \begin{tikzpicture}[x=1cm,y=0.82cm,>=Latex,font=\small]
    \begin{scope}
      \node[font=\bfseries] at (3.10,3.45) {Immediate replacement};
      \node[draw,rounded corners,fill=blue!8,minimum width=1.15cm]
        at (0.70,2.55) {$\{c\}$};
      \node[draw,rounded corners,fill=blue!8,minimum width=1.15cm]
        at (2.90,2.55) {$\{a\}$};
      \node[draw,rounded corners,fill=blue!8,minimum width=1.15cm]
        at (5.50,2.55) {$\{c\}$};
      \draw[blue!65!black,thick,->] (1.28,2.55) -- (2.32,2.55);
      \draw[blue!65!black,thick,->] (3.48,2.55) -- (4.92,2.55);
      \node[blue!65!black,font=\scriptsize] at (2.90,1.95)
        {load $a$};
      \node[blue!65!black,font=\scriptsize] at (5.50,1.95)
        {load $c$};
      \draw[->,thick] (0.35,0.75) -- (6.00,0.75) node[right] {time};
      \draw (2.90,0.85) -- (2.90,0.65) node[below=3pt] {$a@0$};
      \draw (5.50,0.85) -- (5.50,0.65) node[below=3pt] {$c@1/2$};
      \node[align=center] at (3.10,-0.30)
        {two movements; delay $0$; cost $2$};
    \end{scope}

    \draw[gray!45] (7.20,-0.45) -- (7.20,3.75);

    \begin{scope}[xshift=7.55cm]
      \node[font=\bfseries] at (3.10,3.45) {Wait, then replace};
      \node[draw,rounded corners,fill=blue!8,minimum width=1.15cm]
        at (0.70,2.55) {$\{c\}$};
      \node[draw,rounded corners,fill=blue!8,minimum width=1.15cm]
        at (5.50,2.55) {$\{a\}$};
      \draw[blue!65!black,thick,->] (1.28,2.55) -- (4.92,2.55);
      \node[red!70!black,font=\scriptsize] at (2.90,1.95)
        {wait};
      \node[green!45!black,font=\scriptsize] at (5.50,1.95)
        {$c$ hits; load $a$};
      \draw[red!70!black,dashed,very thick] (2.90,1.45) -- (5.50,1.45)
        node[midway,below=2pt,font=\scriptsize] {$a$ pending};
      \draw[->,thick] (0.35,0.75) -- (6.00,0.75) node[right] {time};
      \draw (2.90,0.85) -- (2.90,0.65) node[below=3pt] {$a@0$};
      \draw (5.50,0.85) -- (5.50,0.65) node[below=3pt] {$c@1/2$};
      \node[align=center] at (3.10,-0.30)
        {one movement; delay $1/2$; cost $3/2$};
    \end{scope}
  \end{tikzpicture}
  \caption{Waiting can be strictly useful even with one cache slot.  The
  initial cache is $\{c\}$ and the requests are $a@0$ and $c@1/2$.
  Arrival-first processing lets the second request hit before the delayed
  load of $a$ evicts $c$.  This is a timing effect, not a victim-choice
  effect.}
  \label{fig:waiting-toy}
\end{figure}

\paragraph{Competitive analysis.}
For a timed input $I$, let $\OPT(I)$ denote the minimum total replacement and
delay cost of an offline schedule with the common initial cache.  For
$c\ge1$ and $\beta\ge0$, a deterministic algorithm has a
$(c,\beta)$-competitive guarantee if
$\ALG(I)\le c\OPT(I)+\beta$ for every $I$.  A randomized algorithm has the
same guarantee against an oblivious adversary if
$\E[\ALG(I)]\le c\OPT(I)+\beta$ for every input fixed independently of its
random choices.  Here $\beta$ is independent of $I$.  When $\beta=0$, we
abbreviate $(c,0)$-competitive to $c$-competitive and may say
\emph{strictly $c$-competitive} to emphasize the absence of an additive
term.  Our upper bounds are strict unless an additive term is displayed
explicitly.  Inputs are revealed only
through their timed arrivals; there is no separate end-of-input notification.

\subsection{Holding-cost identity}

Let $\Pending(t)\subseteq\Holes(t)$ denote the pending holes at time $t$.

\begin{lemma}[Holding representation]
\label{lem:holding}
For every feasible schedule,
\[
  \Delay=\int_0^\infty |\Pending(t)|\,dt.
\]
\end{lemma}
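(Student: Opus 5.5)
The plan is a Fubini/Tonelli argument: write the delay as a sum of indicator integrals over pending episodes, then exchange the sum and the integral.

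First, make precise what $\Delay$ is. By \eqref{eq:replacement-cost}, each service of a pending page $p$ at time $s$ contributes $s-a_p$. Every pending episode $e$ is cleared by exactly one service, at time $s_e$, and feasibility gives $s_e<\infty$. Hence
\[
  \Delay=\sum_e (s_e-a_e)=\sum_e \int_0^\infty \mathbf 1[t\in\mathcal H_e]\,dt ,
\]
where $\mathcal H_e=[a_e,s_e)$ is the holding support. An episode with $s_e=a_e$ has empty support and contributes zero.

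Second, prove a pointwise identity: for every $t$,
\[
  |\Pending(t)|=\#\{e: t\in\mathcal H_e\}.
\]
I would establish this by showing that $e\mapsto$ (its page) is a bijection from episodes whose support contains $t$ onto $\Pending(t)$. The three parts are as follows.
\begin{enumerate}[label=(\roman*)]
\item If $t\in[a_e,s_e)$, then the page $p$ of $e$ has an outstanding request at time $t$, and it is missing. It became a hole before $a_e$, since the episode started on a request to a hole page. It cannot be loaded in $[a_e,s_e)$: any load of $p$ while it is pending is a service and clears the episode, which happens only at $s_e$. So $p\in\Pending(t)$.
\item Conversely, a pending hole $p$ at time $t$ has an oldest outstanding request, and this request started an episode $e$ with $a_e\le t<s_e$.
\item Injectivity: two episodes of the same page cannot overlap. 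A new episode for $p$ starts only when $p$ has no earlier outstanding request. Hence the supports of the episodes of one page are disjoint half-open intervals.
\end{enumerate}
Summing over episodes then gives the count.

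Third, conclude by Tonelli's theorem, which applies because everything is a nonnegative integrand and the input has finitely many requests, hence finitely many episodes:
\[
  \sum_e\int \mathbf 1_{\mathcal H_e}=\int\sum_e \mathbf 1_{\mathcal H_e}=\int_0^\infty|\Pending(t)|\,dt .
\]

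The main obstacle is the event-order bookkeeping at coincident timestamps in step two. Under the arrival-first convention, a page can become pending and be served at the same real time $t$. It can also be served and then immediately re-requested at the same time, starting a fresh episode. So the value of $\Pending(t)$ at a single instant depends on the cut. I would handle this by fixing the convention that $\Pending(t)$ is read on the half-open supports, for example at the cut just after the actions at time $t$ (matching $[a_e,s_e)$). Then I would note that any two conventions differ only on the finite set of event epochs $\{t_i\}$ together with the service times. That set has measure zero, so the integral is unaffected. This also absorbs the zero-length episodes with $s_e=a_e$.
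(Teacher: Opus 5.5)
Your proof is correct and follows the paper's argument. You write $\Delay=\sum_e(s_e-a_e)=\sum_e\int_0^\infty\mathbf 1_{\mathcal H_e}(t)\,dt$, identify $|\Pending(t)|$ with the number of live holding supports outside a finite set of event epochs, and exchange the finite sum with the integral. One of your listed worries is vacuous: under the arrival-first convention the whole batch at $t$ precedes every action at $t$, so a page served at time $t$ cannot be re-requested at that same timestamp. Your measure-zero argument would absorb that case anyway.
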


\begin{proof}
Outside the finite set of activation and episode-clearing timestamps, a page
is pending exactly when
the time belongs to the holding support $\mathcal H_e$ of its current episode.
Thus, up to a measure-zero set,
\[
  |\Pending(t)|=\sum_e \mathbf 1_{\mathcal H_e}(t).
\]
Every episode contributes
\[
  \int_0^\infty \mathbf 1_{\mathcal H_e}(t)\,dt=s_e-a_e.
\]
Summing over the finitely many episodes proves the identity.  A zero-time
episode has empty holding support and contributes zero, as required.
\end{proof}

Thus an event-layer dynamic program need not store the ages $a_p$.  Past
waiting is already included in the state label, while the future incremental
cost over a gap of length $\Delta$ is $|\Pending|\Delta$.

\begin{lemma}[Arrival-layer normal form]
\label{lem:arrival-layer-normal-form}
Every finite feasible schedule can be transformed, without increasing cost
and without changing the cache seen by any arrival batch or the final cache,
so that all actions occur at post-batch cuts.  More precisely, for each
$i<n$, execute immediately after the batch at $t_i$ the complete action word
originally occurring in $[t_i,t_{i+1})$, retaining its chronological
micro-order; execute the terminal word $[t_n,\infty)$ after the batch at
$t_n$.
\end{lemma}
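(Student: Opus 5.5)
The plan is to build the transformed schedule $O'$ from the original schedule $O$ window by window and compare costs through the holding identity of Lemma~\ref{lem:holding}. For each $i<n$, let $W_i$ be the action word that $O$ executes in $[t_i,t_{i+1})$, in chronological micro-order, and let $W_n$ be its terminal word in $[t_n,\infty)$. No action precedes the first batch. The schedule $O'$ executes $W_i$ immediately after the batch at $t_i$.

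The first step is to show that $O'$ is feasible and sees the same caches at arrivals. Inside $[t_i,t_{i+1})$ no batch arrives, so the cache of $O$ at the cut before the batch at $t_{i+1}$ is exactly the result of applying $W_i$ to $C_O^-(t_i)$. I will argue by induction on $i$ that $C_{O'}^-(t_i)=C_O^-(t_i)$. Each micro-action of $W_i$ is then legal in $O'$: it loads a page that is a hole and evicts a cached page, and legality depends only on the cache sequence, which is reproduced exactly. It follows that every batch sees the same cache, so hits, activations and joins are identical. The final cache is the result of $W_n$ in both schedules.

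The second step compares costs. The number of replacements $M$ is unchanged. For holding cost I use Lemma~\ref{lem:holding}. In the open interval $(t_i,t_{i+1})$, schedule $O'$ has a fixed pending set $\Pending'$ equal to the set pending after $W_i$. This set equals the pending set of $O$ at the end of the window, because the cache sequences agree and there are no intermediate arrivals. In $O$, however, the pending set can only shrink during the window: actions without arrivals can clear pending pages but never create them, since an evicted page becomes a clean hole. Hence $|\Pending_{O'}(t)|\le|\Pending_O(t)|$ pointwise on every gap. For the terminal window, all pending requests are cleared by $W_n$, so $O'$ has zero holding cost after $t_n$.

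Integrating gives $D'\le D$. Because the episodes are the same and each clearing happens no later in $O'$, this also shows that every episode's delay $s_e-a_e$ does not increase.

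The main obstacle is the bookkeeping in the pending-set claim. I must check that a page's pending status after a micro-action depends only on the pre-window state and the action prefix, and not on real time. I also need the arrival-first convention at coinciding timestamps: an action of $O$ at exactly $t_i$ already follows the batch, so it belongs to $W_i$ and not to $W_{i-1}$. A further point is that $O$ must have finitely many actions, so the words are finite; this is where the hypothesis "finite" is used.
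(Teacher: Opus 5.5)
Your proof is correct and follows essentially the paper's argument. Each inter-arrival action word is shifted intact to its left post-batch cut; since no arrival is crossed, the cache trajectory, the caches seen by batches, the movement count and the final cache are all preserved, and delay can only drop. Your pointwise comparison of pending sets via Lemma~\ref{lem:holding} is an integrated form of the paper's direct observation that each service age weakly decreases, which you also record at the end.
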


\begin{proof}
No request occurs after the batch at $t_i$ and before the batch at
$t_{i+1}$.  Moving the complete ordered action word for this half-open layer
to its left post-batch cut therefore crosses no arrival, preserves the cache
trajectory within the word, and leaves the cache seen by the next batch
unchanged.  Every page served by the word is pending after the left batch, so
each service age weakly decreases; movement and proactive costs are
unchanged.  The same argument applies to the terminal word after $t_n$.
\end{proof}

\subsection{Causal laziness}

\begin{lemma}[Causal lazy projection]
\label{lem:lazy-optimum}
When movement is uniform and the final cache configuration is free, every
feasible schedule $O$ admits a nonproactive schedule $L$ with
\[
  M_L\le M_O,
  \qquad
  D_L\le D_O.
\]
The projection is causal: if $O$ is generated online, $L$ can maintain $O$ as
an internal shadow execution and uses no future input.  Consequently,
proactive replacement cannot improve the offline optimum or the competitive
guarantee against an oblivious input.
\end{lemma}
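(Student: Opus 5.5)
We plan to construct $L$ online as a shadow-following schedule governed by the discrepancy potential $\Phi(S,C)=|S\setminus C|$ from the introduction. Here $S=C_O$ is the shadow cache of $O$ and $C=C_L$ is the physical cache of $L$. First apply Lemma~\ref{lem:arrival-layer-normal-form} to $O$, so all its actions occur at post-batch cuts; this costs nothing and keeps the argument discrete. The rule for $L$ is as follows. After each batch, $L$ first simulates the whole action word of $O$ at that cut, updating $S$. Then, while some page $p$ is pending for $L$ and lies in $S$, $L$ loads $p$ and evicts some $q\in C\setminus S$. Such a $q$ exists: $p\in S\setminus C$ and $|S|=|C|=k$, so $C\not\subseteq S$. $L$ does nothing else. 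Every load of $L$ is therefore a service, so $L$ is nonproactive. $L$ uses only the past input and the shadow's past actions, so it is causal. The final cache of $L$ may differ from that of $O$, which is allowed because the final cache is free.

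The movement bound comes from the local inequality $g+\Delta\Phi\le f$. A shadow replacement ($f=1$, $g=0$) changes $S$ by one page and hence raises $\Phi$ by at most one. A physical repair ($f=0$, $g=1$) moves a page of $S\setminus C$ into $C$ and evicts a page outside $S$, so $\Phi$ drops by exactly one. Initially $\Phi=0$ because the caches are common, and $\Phi\ge0$ throughout. Summing gives $M_L\le M_O$. Proactive loads of $O$ are included in $M_O$, so they are fully accounted for.

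The delay bound comes from the holding identity, Lemma~\ref{lem:holding}. It suffices to show $\Pending_L(t)\subseteq\Pending_O(t)$ for almost every $t$, and I would prove this as an invariant at every post-action cut. Suppose $p$ is pending for $L$ after the repairs at a cut. By the rule, $p\notin S$, so $p$ is a hole of $O$. Take the last request to $p$ not yet cleared by $L$, and suppose $O$ had already served $p$ after that request. Then $O$ loaded $p$ at some cut at or after that request's epoch. At that moment $p\in S$ and $p$ was pending for $L$. Indeed, $L$ could not have served $p$ in between, since $L$ clears all outstanding requests when it serves. So the repair loop would have loaded $p$ immediately at that cut. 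The requests after that cut are then newer activations, and applying the same reasoning to them shows that any request to $p$ outstanding for $L$ is also outstanding for $O$. A cleaner formulation of the invariant is: at every post-action cut, every page pending for $L$ is pending for $O$, and the activation time of $L$'s episode is at least that of $O$'s current episode. We then check this invariant under batch arrivals, shadow actions, and repairs.

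The main obstacle is this delay invariant. One subtle case is a page requested while it sits in $C\setminus S$, which hits for $L$ but starts an episode for $O$; this is harmless because it makes $L$'s pending set smaller. The other subtle case is a page evicted by $L$ that is still in $S$. That cannot happen, since evictions come only from $C\setminus S$, so $C\cap S$ is monotone under repairs. I would verify each event type in turn, using the arrival-first convention at cuts that carry both arrivals and actions. The final claim about the offline optimum and competitive ratios follows directly: $L$ has cost at most that of $O$, and the construction is causal.
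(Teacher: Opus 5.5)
\textbf{There is a genuine gap in when you perform repairs.} Your rule first applies the whole action word of $O$ at a cut and only then loads pending pages of $S\setminus C$. This misses exactly the case you did not list: a page $p\in S\setminus C$ that is requested (a hit for $O$, a miss for $L$) and then evicted by $O$ at the same timestamp.

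Concretely, take $k=1$, pages $\{p,q\}$, $C_0=\{q\}$, and batches $\{q\}@0$ and $\{p\}@1$. Let $O$ proactively load $p$ after the batch at $0$ and proactively reload $q$ after the batch at $1$. Then $O$ is feasible with $M_O=2$ and $D_O=0$. In your construction:
\begin{itemize}
\item the request $p@1$ misses $C=\{q\}$ and becomes pending for $L$;
\item the word of $O$ then makes $S=\{q\}$;
\item the repair loop finds no pending page in $S$.
\end{itemize}
Since $p$ never re-enters $S$, $L$ never serves $p$. So $L$ is infeasible, and your invariant ``pending for $L$ implies pending for $O$'' fails. The same failure occurs when $O$ loads a pending page and evicts it again within one word.

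The paper avoids this by repairing at two distinct moments:
\begin{itemize}
\item immediately after the batch, tested against the pre-action caches and before any same-time shadow action, so such a request is cleared at age zero;
\item immediately after each individual shadow micro-action whose loaded page is physically pending.
\end{itemize}
With that ordering the delay comparison goes through. A positive-length $L$-episode $[a,s)$ for $p$ forces $p\notin S$ at the batch at $a$, hence an $O$-episode starting at some $b\le a$. No shadow load of $p$ can occur in $[a,s)$ without triggering a repair. So both episodes end at the same shadow load, and $s-a\le s-b$, which is equivalent to your pointwise inclusion of pending sets.

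\textbf{Your preliminary normalization is also incompatible with the causality claim.} Applying Lemma~\ref{lem:arrival-layer-normal-form} shifts $O$'s actions from $[t_i,t_{i+1})$ back to the cut at $t_i$. Doing that online requires knowing $t_{i+1}$, that is, which of $O$'s later actions precede the next arrival. Causality is not a side remark here: it underlies the ``competitive guarantee against an oblivious input'' part of the statement. The paper also uses the causal projection of an online algorithm in the $\Omega(H_k)$ lower bound. The fix is to drop the normalization and perform the per-micro-action repairs at the actual times of $O$'s actions. Your potential argument $g+\Delta\Phi\le f$ is unaffected, and the delay argument above does not need discrete action times.
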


\begin{proof}
Fix an arbitrary feasible schedule $O$, possibly proactive, and regard its
cache $S$ as a shadow trajectory.  We construct a lazy physical schedule with
cache $C$, starting from $C=S$, and use
\[
  \Phi=|S\setminus C|.
\]
Process raw batches and the ordered micro-actions of $O$ chronologically.
Fix a total order on pages once and for all; whenever several repairs or
victims are eligible, use this order.  Thus the projection introduces no
noncausal choice.

After a complete raw batch is evaluated against the pre-action caches, load
every page $p$ that is pending for the physical schedule and belongs to
$S\setminus C$.  Equal cache sizes give a page
$q\in C\setminus S$; replace $q$ by $p$.  This action is lazy and decreases
$\Phi$ by exactly one.  The batch itself did not change $S$, so the action
and the potential drop cancel.

Next process each replacement of $O$.  Its unit swap changes $S$ and can
increase $\Phi$ by at most one.  If its loaded page $p$ is physically pending
after the shadow update, then $p\in S\setminus C$; perform the same lazy
repair as above.  This repair decreases $\Phi$ by one.  Consequently, if
$f=1$ denotes the current shadow replacement and $g\in\{0,1\}$ the physical
repair, every shadow micro-action satisfies
\[
  g+\Delta\Phi\le f,
\]
while every batch-time repair satisfies $g+\Delta\Phi=0$.  The caches agree
initially and the final potential is nonnegative, so telescoping gives
\[
  M_L+\Phi_{\rm final}\le M_O+\Phi_{\rm initial}=M_O.
\]
In particular, the constructed schedule uses at most as many replacements as
$O$.

It remains to compare delay.  Consider a positive-length physical pending
episode $e=[a,s)$ for page $p$.  Had $p\in S$ when the batch at $a$ was
processed, the batch-time repair would have cleared $e$ at age zero.  Hence
$p\notin S$ at $a$, so the shadow execution has an active $p$-episode $e_O$
beginning at some $b\le a$.  No shadow action loads $p$ during $[a,s)$,
since its first such load would immediately trigger a physical repair.
Nor can a later batch-time repair be the first event clearing $e$: for $p$ to
enter $S$ after $a$, a preceding shadow load would already have cleared it.
Thus the shadow load at $s$ clears both episodes, so $e_O=[b,s)$ and
\[
  |e|=s-a\le s-b=|e_O|.
\]
Distinct positive-length physical episodes map to distinct shadow episodes,
because the common shadow load ends the associated episode before another
one can begin.  Zero-length repairs contribute no delay.  This injection
proves $D_L\le D_O$; feasibility of $O$ also implies that no physical episode
remains pending at the end.

Applying the construction to an optimum $O$ gives a lazy schedule of no
larger total cost.  Its final cache may differ from that of $O$, which is
permitted by the free-terminal-cache convention.  Every repair is determined
by the current raw batch, the current shadow micro-action, and the two current
caches, so the construction is causal.  For a randomized shadow algorithm
the inequalities hold pathwise for every fixed input $I$ and coin outcome
$\omega$, and hence also in expectation when $I$ is fixed obliviously.
\end{proof}

\begin{remark}[Boundary cases of the projection]
\label{rem:projection-boundaries}
The construction uses the arrival-first cut literally.  It tests the entire
batch against the caches that exist before any same-time action and then
processes shadow micro-actions sequentially.  Consequently, evicting a page
later at that timestamp cannot turn its earlier hit into a pending request.
A batch-time repair has age zero, and a shadow load of a page already in the
physical cache needs no repair.  After the last raw arrival, the projection
continues to follow the terminal shadow actions until all episodes are
cleared; this requires no advance notification that the input has ended.
The free-terminal-cache convention is used only to discard the nonnegative
final potential and does not permit any request to remain unserved.
\end{remark}

The free final configuration is essential: a prescribed terminal cache may
make proactive loading useful.

\section{Offline Structure and Algorithms}
\label{sec:offline}
\label{sec:offline-one-hole}
\label{sec:offline-general}

The causal lazy projection reduces exact optimization to schedules that load
only pending pages.  We first exploit the resulting one-dimensional state
when there is one hole, then show why classical victim normalization fails
and retain the full hole configuration for general $r$.

\subsection{Exact optimization with one hole}

Assume $m=k+1$, so the cache is represented by its unique hole $h$.  The
only additional state information is a bit $b\in\{0,1\}$ indicating whether
$h$ is pending.

\begin{theorem}[Exact one-hole dynamic program]
\label{thm:one-hole-dp}
The offline optimum can be computed with $2(k+1)$ states per arrival epoch,
$O(k)$ working space, and $O(nk)$ time.
\end{theorem}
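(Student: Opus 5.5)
By Lemma~\ref{lem:lazy-optimum} we may restrict attention to nonproactive schedules, and by Lemma~\ref{lem:arrival-layer-normal-form} to schedules whose actions all occur at post-batch cuts. With $m=k+1$ the cache is determined by its hole $h$. In a lazy schedule a replacement is possible only when the hole is pending, and it loads exactly $h$. Hence the state after the batch at $t_i$ and after all actions at that cut is a pair $(h,b)$ with $b\in\{0,1\}$. By Lemma~\ref{lem:holding}, the cost accumulated over the gap $[t_i,t_{i+1})$ is exactly $b\,(t_{i+1}-t_i)$. No ages need to be stored, and there are $2(k+1)$ states per epoch.

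\textbf{Transitions at one epoch.} Fix a state $(h,b)$ entering epoch $i$ and process the batch $A_i$ first, following the arrival-first convention. If $h\in A_i$, set $b:=1$, keeping the old episode if it was already pending; otherwise $b$ is unchanged. Next we need to know which lazy action words are possible at this cut. A lazy replacement requires $b=1$; it loads $h$, evicts some cached $q$, and leaves the new hole $q$ clean, because $q$ was cached during the batch and any request to it hit. The new hole therefore has $b=0$, and no further lazy action can follow at this cut. So every action word at a cut is either empty or a single service, and the post-action state is either $(h,b)$ or, when $b=1$, any $(q,0)$ with $q\ne h$, at cost $1$. The age term is already counted by the holding integral.

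\textbf{Recurrence and complexity.} Let $V_i(h,b)$ be the minimum cost up to the post-action cut at epoch $i$. For the terminal layer, add the requirement that the final state has $b=0$, since all pending requests must be served; the answer is $\min_q V_n(q,0)$. The transition ``serve and move to any $(q,0)$ with $q\neq h$'' is a one-to-all step that naively costs $O(k^2)$ per epoch. This is the step needing care, and I expect it to be the main obstacle. I would handle it with a min/second-min trick. Compute $\mu=\min_{h} \bigl(V^{\text{pre}}_i(h,1)+1\bigr)$ together with its argmin and the second minimum. Then the best value for entering $(q,0)$ by a service is $\mu$ if $q$ is not the argmin, and the second minimum otherwise. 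Each epoch then costs $O(k)$ time: recomputing the batch updates takes $O(|A_i|)\le O(k+1)$ by touching every state once. Only two layers are stored, giving $O(k)$ space and $O(nk)$ time overall.

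\textbf{Optimality.} Finally I would argue that the DP value equals $\OPT$. The upper bound holds because every DP path is a feasible lazy schedule whose cost, by Lemma~\ref{lem:holding}, equals the sum of its transition and gap charges. The lower bound holds because any optimum can first be made lazy (Lemma~\ref{lem:lazy-optimum}) and then normalized to post-batch cuts (Lemma~\ref{lem:arrival-layer-normal-form}) without increasing cost. The resulting schedule then follows a DP path by the single-service-per-cut observation above.
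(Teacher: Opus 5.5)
Your proposal is correct and follows essentially the paper's proof. It uses the same $(h,b)$ state, justified by the lazy projection and the arrival-layer normal form, the same at-most-one-service-per-epoch observation, the terminal restriction to $b=0$, and the same minimum/second-minimum trick for the one-to-all service transition. The only difference is in implementation: the paper stores pending labels as intercepts $P_h$ (true cost $P_h+t$), whereas you add $b\,(t_i-t_{i-1})$ to every state at each layer, which is equally $O(k)$ per epoch.
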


\begin{proof}
Let $\{h_0\}=\Pages\setminus C_0$ define the unique initial hole.  Introduce a
pre-arrival layer at $t_0=t_1=0$ with
\[
  D_0(h_0,0)=0,
  \qquad
  D_0(h,b)=+\infty\quad\text{for }(h,b)\ne(h_0,0).
\]
For $i\ge1$, let $D_i(h,b)$ denote the minimum cost after processing the batch
at $t_i$ and all actions at that timestamp.  Before batch $A_i$, a state from
layer $i-1$ with pending bit $b$ incurs $b(t_i-t_{i-1})$.  After $A_i$
arrives, update
\[
  b\leftarrow b\lor [h\in A_i].
\]
The schedule may wait.  If $b=1$, it may instead load $h$, evict any
$v\ne h$, pay one, and enter state $(v,0)$.  Only states with $b=0$ are
accepted at the end.

For an $O(nk)$ implementation, keep clean values $U_h$ and pending
intercepts $P_h$, where a pending state's true cost at time $t$ is $P_h+t$.
At epoch $t_i$, for every state whose hole is requested, set
\[
  P_h\leftarrow\min\{P_h,U_h-t_i\},
  \qquad U_h\leftarrow+\infty.
\]
Using one common post-arrival snapshot, a service that creates hole $v$ gives
\[
  U_v\leftarrow
  \min\left\{U_v,\,1+t_i+\min_{h\ne v}P_h\right\}.
\]
The minimum and second minimum of the $P_h$ values evaluate all exclusions in
$O(k)$ time per layer.  In the intercept implementation, initialize
$U_{h_0}=0$ and every other $U_h$ and $P_h$ to infinity.  The answer is
$\min_h U_h$ after the final closure.
\end{proof}

\begin{remark}[Why the pending bit is necessary]
The hole alone is not a sufficient layered state.  With cache size one,
initial hole $A$, and requests $A@0$, $B@\varepsilon$, and $B@1.5$, the final
hole $A$ is reachable both as pending with a smaller accumulated label and as
clean with a larger accumulated label.  The pending path must still pay for a
terminal service and can be globally worse.  A page-only service-event DAG is
possible, but it stores the same distinction in long edges rather than
eliminating it.
\end{remark}

Lemma~\ref{lem:arrival-layer-normal-form} justifies the event layers above.
Moreover, Lemma~\ref{lem:lazy-optimum} supplies an optimum with no proactive
replacement; hence at most one replacement occurs at each epoch when there is
one hole.

\subsection{Physical farthest-next-use can fail}

Classical Belady does not supply a victim normal form for the delayed
problem.  For any action time $s$, write $\nu_s(x)$ for the next request to
$x$ strictly after $s$, evaluated after the complete batch when $s$ is an
arrival epoch, with value $\infty$ if none exists.  A physical
farthest-next-use rule evicts a cached page maximizing $\nu_s$, with a fixed
tie rule.

\begin{proposition}[Failure of physical farthest-next-use]
\label{ex:delayed-belady-fails}
Take $k=2$, page set $\{p,q,f\}$, and initial cache $\{q,f\}$.  The
singleton requests are
\[
  p@0,\qquad q@3,\qquad f@\tfrac72,
  \qquad p@5,\qquad q@10.
\]
At time zero, $\nu_0(q)=3<7/2=\nu_0(f)$, so a farthest-next-use rule uniquely
evicts $f$ when it loads $p$.  Nevertheless, the unrestricted optimum evicts
$q$ and has cost $5/2$, whereas every schedule using a farthest-next-use
victim at each replacement has cost at least $3$.
\end{proposition}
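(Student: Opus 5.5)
The plan is to treat the two halves of the statement separately: exhibit an explicit schedule of cost $5/2$, then run a short exhaustive case analysis over farthest-next-use (FNU) schedules, organised by the time of the first replacement.

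\emph{Upper bound.} At time $0$, after the batch $p@0$, load $p$ and evict $q$, paying $1$; the cache becomes $\{p,f\}$. The request $q@3$ then starts a pending episode, and $f@\tfrac72$ hits. Immediately after the batch at $\tfrac72$, load $q$ and evict $f$, paying $1+\tfrac12$. The cache $\{p,q\}$ serves $p@5$ and $q@10$ as hits. Hence $M=2$, $D=\tfrac12$, and the total is $5/2$. I would also note that $\OPT\ge 5/2$ (for instance, at least two movements are forced, and any two-movement schedule incurs delay at least $\tfrac12$), so that the phrase ``the unrestricted optimum'' is justified. However, the essential claim is the contrast with the FNU bound.

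\emph{Lower bound for FNU schedules.} Since there is a single hole, initially $p$, the first replacement of any schedule must load $p$. Let $s$ be its time; $p$ is pending from time $0$, so this replacement costs $1+s$.

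If $s\ge 3$, the cost is already at least $4$.

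If $s<3$, the next uses after $s$ are $\nu_s(q)=3<\tfrac72=\nu_s(f)$, so FNU evicts $f$ and the cache becomes $\{p,q\}$. Any replacement before $\tfrac72$ is proactive: it can only load $f$, and FNU would then evict $p$, which is next used later than $q$. Such a step costs one more movement, and I would argue it only increases cost. Since each such step already brings the movement count toward $3$, I can treat these as a subcase rather than excluding them outright.

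In the main branch, $f@\tfrac72$ misses and becomes pending, so some replacement loading $f$ occurs at a time $s'\ge\tfrac72$, costing $1+(s'-\tfrac72)$. Now split on $s'$:
\begin{itemize}[nosep]
\item If $s'<5$, the next uses are $\nu(p)=5<10=\nu(q)$, so FNU evicts $q$. Then $q@10$ misses and forces a third movement, giving a total of at least $3$.
\item If $s'\ge 5$, the delay of $f$ is at least $\tfrac32$, so the total is at least $1+1+\tfrac32>3$. This includes the tie at $s'=5$, where the batch is processed first, so $\nu(p)=\infty$ and $p$ is the FNU victim.
\end{itemize}

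The main obstacle is making the case split genuinely exhaustive over all FNU schedules. These may include proactive replacements (the projection of Lemma~\ref{lem:lazy-optimum} could change victims, so I will not invoke it) and same-timestamp micro-actions. I would handle this by counting movements: every branch either reaches three movements, or has at most two movements and then necessarily pays delay of at least $1$. I would also rely on the arrival-first convention to fix the $\nu_s$ values at the boundary epochs $3$, $\tfrac72$ and $5$.
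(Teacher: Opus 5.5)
Your overall route matches the paper's: the same explicit cost-$5/2$ schedule, and a movement-counting case analysis over farthest-next-use schedules. You split directly ($s\ge3$ versus $s<3$, then $s'<5$ versus $s'\ge5$). The paper instead argues by contradiction: cost below $3$ forces at most two replacements, then $s<1$ and $u<9/2$. Your main branch, including the tie at $5$, is correct. Two steps, however, are wrong as written, although both are easy to repair.

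First, in the proactive subcase you assert that a farthest-next-use reload of $f$ before $7/2$ evicts $p$. That is true only before time $3$. For $t\in[3,7/2)$, evaluated after the batch $q@3$ (a hit), $\nu_t(q)=10>5=\nu_t(p)$, so the rule evicts $q$. The subcase is closed by the paper's victim-independent observation, which you leave as ``I would argue it only increases cost'': whichever of $p,q$ is evicted still has a later request ($p@5$ or $q@10$). So a third movement, and hence cost at least $3$, is forced.

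Second, the optimality of $5/2$ is part of the statement, and your sketch of it is not correct as phrased. Two movements are not forced in general: waiting until after the $f@7/2$ batch and then loading $p$ while evicting $f$ is feasible with one movement (cost $9/2$). What is true is that cost below $5/2$ forces the load of $p$ before time $3/2$, when both candidate victims are still requested, so a second movement is needed. With exactly two movements, cost below $5/2$ also forces $s<1/2$.

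The two-movement delay bound then needs an argument you do not give:
\begin{itemize}
\item If $q$ is evicted, reloading it before the $f@7/2$ batch would evict $p$ or $f$, both still requested. So the $q$-episode opened at $3$ lasts at least $1/2$.
\item If $f$ is evicted, reloading it before the $p@5$ batch would evict $p$ or $q$. So its delay is at least $3/2$.
\end{itemize}
With one hole the only possible load is the current hole, so this reasoning covers proactive moves directly. Alternatively you may invoke Lemma~\ref{lem:lazy-optimum} for this part, as the paper does. Your reason for avoiding it (that it can change victims) matters only for the farthest-next-use bound, not for the value of the unrestricted optimum.
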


\begin{proof}
For the unrestricted upper bound, load $p$ and evict $q$ at time zero.  At
time $3$, page $q$ becomes pending.  The request to cached $f$ at time $7/2$
hits; immediately afterward, load $q$ and evict $f$.  The two movements cost
$2$, the only delay is $1/2$ for $q$, and the requests to $p@5$ and $q@10$
hit.

We next prove that $5/2$ is optimal.  By Lemma~\ref{lem:lazy-optimum}, it
suffices to consider nonproactive schedules.  Any schedule of cost below
$5/2$ has at most two replacements.  One replacement is infeasible: its
first action must load the initially missing page $p$, and cost below $5/2$
forces this action before time $3/2$, when either possible victim still has a
future request.  Thus there are exactly two replacements.  If the first
loads $p$ at time $s$, then $2+s<5/2$ gives $s<1/2$.  If it evicts $q$, the
second cannot reload $q$ before the complete
$f@7/2$ batch without evicting $p$ or $f$, both of which still have a future
request.  Hence the $q$-episode has delay at least $1/2$.  If the first
evicts $f$, the second cannot reload $f$ before the complete $p@5$ batch for
the same reason, so its delay is at least $3/2$.  In either case two
movements incur holding at least $1/2$, proving the lower bound $5/2$.

Now impose farthest-next-use on every replacement, including proactive ones.
Suppose for contradiction that a feasible restricted schedule costs below
$3$.  It has at most two replacements.  Zero is infeasible, and with one,
cost below $3$ forces the sole load of $p$ before time $2$, when either
possible victim has a later request.  Hence it has exactly two replacements.
Its first replacement
necessarily loads $p$ at some time $s<1$, and therefore uniquely evicts $f$
because $\nu_s(q)=3<7/2=\nu_s(f)$.  If the second replacement reloads $f$
proactively before time $7/2$, it evicts either $p$ or $q$, each of which has
a later request, so a third replacement is necessary.  Otherwise, let
$u\ge7/2$ denote the reload time.  Cost below $3$ implies
\[
  s+(u-7/2)<1,
\]
and hence $u<9/2<5$.  At that time
$\nu_u(p)=5<10=\nu_u(q)$, so farthest-next-use uniquely evicts $q$ and again
forces a third replacement at $q@10$.  This contradiction proves restricted
cost at least $3$.  Equality is attained by loading $p$ at $0$, $f$ at
$7/2$, and $q$ at $10$.

The failed exchange reverses its discrepancy at time $7/2$: replacing the
original $q$-load by an $f$-load improves the current delay but creates a
$q$-hole that must be served at time $10$.  A single banked movement cannot
both clear the $f$-episode and pay for this later reversal.
\end{proof}

Figure~\ref{fig:belady-failure} displays the two cache trajectories.  The
counterexample is temporal rather than a tie-breaking artifact: the victim
that returns sooner is the one the optimum deliberately evicts.

\begin{figure}[t]
  \centering
  \begin{tikzpicture}[
      x=0.9cm,
      y=1cm,
      >=Latex,
      cache/.style={draw,rounded corners=1.5pt,fill=blue!4,
                    inner sep=2.5pt,font=\scriptsize},
      action/.style={font=\scriptsize,align=center,fill=white,inner sep=1.2pt},
      request/.style={font=\scriptsize,anchor=north},
      track/.style={font=\scriptsize\bfseries,align=right,anchor=east,
                    text width=2.75cm}
    ]
    \draw[->] (0,0) -- (12.0,0) node[right] {$t$};
    \foreach \x/\lab in {0/0,3/3,3.5/{7/2},5/5,10/10} {
      \draw (\x,0.08)--(\x,-0.08);
      \node[request] at (\x,-0.12) {$\lab$};
    }
    \node[font=\scriptsize,anchor=south] at (0,0.1) {$p$};
    \node[font=\scriptsize,anchor=south] at (3,0.1) {$q$};
    \node[font=\scriptsize,anchor=south] at (3.5,0.1) {$f$};
    \node[font=\scriptsize,anchor=south] at (5,0.1) {$p$};
    \node[font=\scriptsize,anchor=south] at (10,0.1) {$q$};

    \node[track] at (-0.2,1.7) {unrestricted\\OPT};
    \node[cache] (u0) at (0.6,1.7) {$\{q,f\}$};
    \node[cache] (u1) at (4.0,1.7) {$\{p,f\}$};
    \draw[->] (u0)--node[action,above]{load $p$\\evict $q$} (u1);
    \node[cache] (u2) at (8.0,1.7) {$\{p,q\}$};
    \draw[->] (u1)--node[action,above]{at $7/2$: load $q$\\delay $1/2$} (u2);
    \node[action,anchor=west] at (8.85,1.7)
      {later requests hit; total $2+\tfrac12=\tfrac52$};

    \node[track] at (-0.2,3.35) {physical\\farthest-next-use};
    \node[cache] (b0) at (0.6,3.35) {$\{q,f\}$};
    \node[cache] (b1) at (4.0,3.35) {$\{p,q\}$};
    \draw[->] (b0)--node[action,above]{load $p$\\evict $f$} (b1);
    \node[cache] (b2) at (8.0,3.35) {$\{p,f\}$};
    \draw[->] (b1)--node[action,above]{at $7/2$: load $f$\\evict $q$} (b2);
    \node[cache] (b3) at (11.3,3.35) {$q$ reloaded};
    \draw[->] (b2)--node[action,above]{at $10$} (b3);
    \node[action,anchor=west] at (12.15,3.35) {total $3$};
  \end{tikzpicture}
  \caption{Why physical farthest-next-use fails in
  Proposition~\ref{ex:delayed-belady-fails}.  Evicting the sooner-requested
  page $q$ preserves the hit on $f$ and permits a half-unit delayed reload of
  $q$; the classical victim choice creates a third movement.  Cache boxes
  are spaced schematically; the arrow annotations give the action times.}
  \label{fig:belady-failure}
\end{figure}

The proposition rules out a next-use-only rule for choosing physical victims.
Consequently, the exact offline state below must retain the current hole set:
future request times alone do not determine which cached page should be
evicted.

\subsection{Configuration dynamic program}

For general $r=m-k$, the exact state is
\[
  (H,B),\qquad |H|=r,\qquad B\subseteq H,
\]
where $H$ is the hole set and $B$ is its pending subset.  The Belady
counterexample above is why the state retains the entire set $H$: the victim
cannot be fixed from future request times alone.

At epoch $t_i$, the arrival update is
\[
  B\leftarrow B\cup(A_i\cap H).
\]
A lazy service selects $p\in B$ and a cached victim $v\notin H$, pays one,
and performs
\begin{equation}
  \label{eq:general-service-edge}
  H'=(H\setminus\{p\})\cup\{v\},
  \qquad B'=B\setminus\{p\}.
\end{equation}
Because $v$ was cached when the complete batch arrived, it is a clean hole
after eviction.  Every service edge decreases $|B|$ by one, so the zero-time
closure is a DAG and contains at most $r$ actions along a path.

Let $H_0=\Pages\setminus C_0$, put $t_0=t_1$, and define a virtual pre-first
layer by
\[
  D_0(H,B)=
  \begin{cases}
    0,&(H,B)=(H_0,\varnothing),\\
    +\infty,&\text{otherwise}.
  \end{cases}
\]
For $i=1,\ldots,n$, put $\Delta_i=t_i-t_{i-1}$.  Before the zero-time
service closure at epoch $i$, set
\[
  \widehat D_i(H,B')=
  \min_{\substack{B\subseteq H:\\
          B'=B\cup(A_i\cap H)}}
  \left\{D_{i-1}(H,B)+|B|\Delta_i\right\}.
\]
The minimum explicitly resolves collisions under the arrival map.  Since
$t_0=t_1$, the first layer receives no artificial pre-input holding charge.
Starting simultaneously from all finite labels $\widehat D_i(H,B)$, relax
all edges~\eqref{eq:general-service-edge} in decreasing order of $|B|$; the
resulting post-action labels are $D_i$.  After the final closure,
\[
  \OPT=\min_{|H|=r}D_n(H,\varnothing).
\]

\begin{theorem}[Exact general-hole dynamic program]
\label{thm:general-dp}
The recurrence is exact and has
\[
  S=\binom mr2^r
\]
states.  It can be implemented in
\[
  O\!\left(n\binom mr2^r(1+rk)\right)
\]
arithmetic relaxations and $O(S)$ numeric-label working space, in the
unit-cost indexed-state model where membership, a one-page swap, and successor
lookup take constant time.  With an explicit representation, the same
recurrence remains polynomial for fixed $r$, with the corresponding set and
indexing cost multiplying the edge bound.
\end{theorem}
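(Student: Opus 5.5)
The argument splits into exactness of the recurrence and an accounting of its cost.

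\textbf{Reduction to layered lazy schedules.} By Lemma~\ref{lem:lazy-optimum} some optimal schedule is nonproactive. By Lemma~\ref{lem:arrival-layer-normal-form} it may be assumed to act only at post-batch cuts, executing there the ordered lazy action word of each layer. Neither transformation increases cost, and their order does not matter. The layer move only shifts lazy services earlier, so laziness is preserved; alternatively, apply the projection after the layering. Such a schedule is therefore described completely by its sequence of post-action states $(H,B)$ at the epochs.

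\textbf{Soundness of the recurrence.} For any fixed state, the cost between consecutive layers equals $|B|\Delta_i$ by Lemma~\ref{lem:holding}, since $B$ is constant on $(t_{i-1},t_i)$. Under the arrival-first convention, the arrival map $B\mapsto B\cup(A_i\cap H)$ is exactly the effect of the batch. Each lazy service pays $1$ and realizes edge~\eqref{eq:general-service-edge}. Its victim $v$ was cached when the complete batch arrived, and it is not requested again within the same timestamp, so $v$ is clean and $B'=B\setminus\{p\}$. Thus every schedule yields a path whose label equals its cost. Conversely, every path through the finite labels can be realized by a lazy schedule of the same cost. Terminal acceptance of $B=\varnothing$ encodes that every episode is cleared. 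The initial layer with $t_0=t_1$ charges nothing before the first batch. Hence $\OPT=\min_H D_n(H,\varnothing)$.

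\textbf{Closure and running time.} Since every edge decreases $|B|$ by one, the zero-time closure graph is acyclic. Relaxing states in decreasing order of $|B|$ therefore computes the shortest-path labels from all sources simultaneously, and a path has at most $r$ actions.

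Counting gives $S=\binom mr\,2^r$ states. The arrival step touches each state once, taking the minimum over collisions, for $O(S)$ work. The closure step gives each state at most $|B|\cdot k\le rk$ out-edges, one per pending $p$ and cached victim $v$, each relaxed in constant time under the indexed model. Together this is $O(S(1+rk))$ per layer and $O(nS(1+rk))$ overall.

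Working space is two label arrays of size $S$, reused from layer to layer. The buckets by $|B|$ can be enumerated implicitly. With an explicit set representation, each operation costs $\mathrm{poly}(m)$, which multiplies the edge bound while remaining polynomial for fixed $r$.

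\textbf{Main obstacle.} The step needing the most care is the exact correspondence inside one timestamp. The proof must show that post-batch ordered words capture all same-time behaviour. In particular, a victim evicted after the batch is clean, and loading a pending page clears its whole episode with age already accounted for by the integrated $|B|\Delta$ charges. This is where the arrival-first remark is invoked explicitly.
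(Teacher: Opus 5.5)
Your proposal is correct and follows essentially the same route as the paper. You restrict to lazy, post-batch-layered schedules via Lemmas~\ref{lem:lazy-optimum} and~\ref{lem:arrival-layer-normal-form}, and check that the holding increment $|B|\Delta_i$, the forced arrival map and the service edges~\eqref{eq:general-service-edge} encode every lazy schedule exactly. You then use the strict decrease of $|B|$ to make the zero-time closure a DAG, and count $\binom mr2^r$ states with at most $rk$ out-edges each.

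One small imprecision: a victim need not have been cached when the batch arrived. It may be a page loaded earlier in the same word, as in the chains of Proposition~\ref{prop:same-epoch-closure}. It is still clean, because its load cleared its episode and no arrival intervenes, so your conclusion $B'=B\setminus\{p\}$ stands.
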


\begin{proof}
Lemma~\ref{lem:lazy-optimum} restricts attention to lazy replacements.  The
displayed base layer is exactly the common initial configuration with no
pending episode.  Between arrivals, Lemma~\ref{lem:holding} gives exactly the
holding increment $|B|\Delta_i$.  At an arrival, the displayed arrival map is
forced, and taking the minimum preserves the best label when several states
coalesce.  Every
possible lazy replacement is exactly one edge of
\eqref{eq:general-service-edge}, and every such edge is feasible.  Since
$|B|$ strictly decreases, the closure is acyclic and its shortest paths
enumerate every possible zero-time service sequence exactly.  There are
$\binom mr$ choices of $H$ and $2^r$ choices of $B$.  Each state has at most
$rk$ service edges, yielding the stated bounds.
\end{proof}

\subsubsection{Same-epoch closure}

\begin{proposition}[Same-epoch closure]
\label{prop:same-epoch-closure}
Fix a post-arrival state $(H,B)$.  A nonempty lazy action word reaches
$(H',B')$ if and only if
\[
  H'\ne H,
  \qquad B'\subseteq B\cap H',
  \qquad H\setminus H'\subseteq B\setminus B'.
\]
Every such word uses exactly $|B|-|B'|$ replacements, and some word of that
length reaches every pair satisfying the conditions.
\end{proposition}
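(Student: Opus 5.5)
The plan is to prove the two directions separately. First, record the invariants of a lazy action word inside one epoch. Then build an explicit word for sufficiency.

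\emph{Invariants.} Since no arrival is interleaved with the micro-actions of one epoch, the pending set can only shrink. By \eqref{eq:general-service-edge}, each lazy replacement removes exactly one page from $B$ and evicts a cached page, which becomes a \emph{clean} hole. A clean hole is not pending, so it cannot be loaded again within the word. Consequently:
\begin{itemize}[nosep]
\item Every replacement removes one page from $B$ permanently. Hence $B'\subseteq B$, and a word reaching $(H',B')$ has exactly $|B|-|B'|$ replacements.
\item Pending pages are always holes, so $B'\subseteq H'$.
\item A page of $H\setminus H'$ was loaded at some point. Loads take only pending pages, so the page lies in $B$. Once loaded, it has left the pending set and cannot re-enter it, so it lies in $B\setminus B'$.
\item For $H'\ne H$, look at the victim $v$ of the first action. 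We have $v\notin H$, and $v$ is clean from then on, so it is never reloaded. Thus $v\in H'\setminus H$.
\end{itemize}
This settles necessity and the length claim.

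\emph{Sufficiency.} Assume the three conditions and partition the served pages $B\setminus B'$ into two sets:
\begin{itemize}[nosep]
\item $L=H\setminus H'$, the pages that end cached;
\item $X=(B\setminus B')\cap H'$, the pages served but ending as holes.
\end{itemize}
Put $E=H'\setminus H$. These are originally cached pages that must end evicted, and $|E|=|L|$ because $|H|=|H'|=r$. The condition $H'\ne H$ forces $L\neq\varnothing$, so $E\neq\varnothing$ as well. Enumerate $L=\{l_1,\dots,l_a\}$, $E=\{e_1,\dots,e_a\}$ and $X=\{x_1,\dots,x_t\}$.

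Use $e_1$ as a parking victim and run the following chain:
\begin{itemize}[nosep]
\item load $x_1$, evicting $e_1$;
\item for $j=2,\dots,t$, load $x_j$, evicting $x_{j-1}$;
\item load $l_1$, evicting $x_t$ (or evicting $e_1$ if $t=0$);
\item for $i\ge2$, load $l_i$, evicting $e_i$.
\end{itemize}

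We then verify the chain step by step:
\begin{itemize}[nosep]
\item Each loaded page is pending at its load time, since it lies in $B\setminus B'$ and has not yet been served.
\item Each victim is cached at its eviction: $e_i\notin H$, and each $x_j$ was just loaded.
\item The final hole set is $(H\setminus(L\cup X))\cup X\cup E=H'$.
\item The final pending set is $B\setminus(L\cup X)=B'$.
\item The length is $t+a=|B\setminus B'|$.
\end{itemize}

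The main obstacle is in this sufficiency direction. Pages of $X$ must be loaded and then re-evicted without disturbing the target hole set, and this needs an available initially cached victim. That victim exists exactly because $H'\ne H$; without it, as in the case $L=\varnothing$ with $X\ne\varnothing$, no word exists.
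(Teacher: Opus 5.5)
Your proof is correct and follows the paper's argument. Necessity rests on the same invariants: each load clears one pending page, each victim becomes a clean hole that cannot be reloaded within the word, and the first victim certifies $H'\neq H$. These give necessity and the count $|B|-|B'|$. Your sufficiency chain is the paper's construction with all of $X=(B\setminus B')\cap H'$ placed on the single pair $(e_1,l_1)$, rather than distributed arbitrarily among the matched pairs of $E$ and $H\setminus H'$.
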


\begin{proof}
Every action loads and clears one current pending page and creates one clean
victim hole.  Thus an unserved pending page remains a hole, every initial
hole removed from $H$ was served, and the word length is $|B|-|B'|$.  A
nonempty word cannot return to $H'=H$: the initially cached victim of its
first action becomes a clean hole and, with no intervening arrival, cannot be
loaded later in the word.

Conversely, put
\[
  S=B\setminus B',\quad
  E=H'\setminus H,\quad
  R=H\setminus H',\quad
  I=S\cap H'.
\]
The conditions give $|E|=|R|\ge1$.  Match $E$ to $R$ and distribute the
pages of $I$ arbitrarily among the matched pairs.  For a resulting chain
$e,i_1,\ldots,i_ell,r$, first load $i_1$ while evicting $e$, then load
$i_2$ while evicting $i_1$, and so on, finally loading $r$ while evicting
$i_\ell$; for an empty chain, load $r$ while evicting $e$.  Every loaded page
belongs to $S$ and is still pending when loaded, while every victim is cached
at that micro-action.  The final holes are exactly $H'$, the remaining
pending set is $B'$, and every member of $S$ is loaded once.
\end{proof}

For fixed $r$, this is an XP algorithm with running time
$m^{O(r)}\operatorname{poly}(n)$; it is not an FPT result and does not rule out
a different polynomial or FPT algorithm.

\subsection{Exact support compression}

Let $R$ denote the set of pages requested at least once, let $d$ denote $|R|$,
and let $C_0$ denote the initial cache.  Define
\[
  \mu=|R\setminus C_0|.
\]

\begin{proposition}[Exact request-support quotient]
\label{prop:support-quotient}
The configuration dynamic program has an exact quotient whose states are
$(H_R,B)$, where $H_R\subseteq R$ is the set of requested holes and
$B\subseteq H_R$ is its pending subset, with
\[
  \max\{0,d-k\}\le |H_R|\le\mu.
\]
Its state count is
\[
  \sum_{h=\max\{0,d-k\}}^\mu \binom dh2^h,
\]
independent of the raw universe size.
\end{proposition}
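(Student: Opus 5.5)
We will show that unrequested pages are interchangeable and can be dropped from the configuration state. The tool is a projection map. Write $U=\Pages\setminus R$ for the never-requested pages. The map sends a full state $(H,B)$ to $(H\cap R,B)$. This is well defined, because $B\subseteq H$ and every pending page has been requested, so $B\subseteq H\cap R$.

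\textbf{Step 1: the quotient cost is well defined.} We check that the recurrence of Theorem~\ref{thm:general-dp} commutes with this projection. The holding increment $|B|\Delta_i$ depends only on $B$. The arrival update $B\cup(A_i\cap H)=B\cup(A_i\cap H_R)$ depends only on $H_R$, since $A_i\subseteq R$. For a service edge loading $p\in B$ and evicting a cached $v$, there are two cases. If $v\in R$, the new requested-hole set is $(H_R\setminus\{p\})\cup\{v\}$. If $v\in U$, it is $H_R\setminus\{p\}$. In both cases the result depends only on $H_R$ and the choice of $v$, not on which unrequested pages are holes. The one exception is availability: evicting an unrequested page requires a cached one, that is, $|H\cap U|<|U|$, equivalently $|H_R|>r-|U|=d-k$. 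So every quotient edge is realizable from every lift, and every lifted edge projects to a quotient edge.

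Minimizing $D_i$ over the fibers then gives exact quotient labels by induction over the layers. The answer $\OPT=\min D_n(H,\varnothing)$ is preserved because the terminal cache is free.

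\textbf{Step 2: the bounds on $|H_R|$.} The lower bound comes from capacity. At most $k$ requested pages can be cached, so $|H_R|\ge d-k$, and trivially $|H_R|\ge 0$.

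The upper bound $|H_R|\le\mu$ is the main obstacle. It holds on optimal lazy trajectories but not on every reachable state, since evicting requested pages for requested pages keeps $|H_R|$ fixed while evicting a requested page for an unrequested one cannot occur lazily. Consider the lazy dynamics. The initial value is $|H_R|=|R\setminus C_0|=\mu$. A lazy service loads a requested pending page, so it changes $|H_R|$ by $0$ (victim in $R$) or by $-1$ (victim in $U$). Arrivals never change $H_R$. Hence $|H_R|$ is nonincreasing along every lazy trajectory and never exceeds $\mu$. Since Lemma~\ref{lem:lazy-optimum} already restricts the program to lazy edges, every reachable state satisfies $|H_R|\le\mu$. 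Unreachable states carry label $+\infty$ and may be discarded without affecting exactness.

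\textbf{Step 3: counting.} For each size $h$ in the admissible range there are $\binom dh$ choices of $H_R$ and $2^h$ choices of $B\subseteq H_R$. Summing over $h$ gives the stated count, which involves only $d$, $k$ and $\mu$ and is independent of $m$.

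The delicate points are the two just handled: the fiber-independence of the eviction edge into $U$ (availability exactly when $|H_R|>d-k$), and the monotonicity that gives the upper bound $\mu$.
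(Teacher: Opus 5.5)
Your argument is correct and is essentially the paper's: project onto requested-hole membership, treat unrequested pages as interchangeable victims that are available exactly when $|H_R|>d-k$, and use laziness (unrequested pages are never loaded) to obtain $|H_R|\le\mu$. The one difference is in how the bound $\mu$ is obtained: the paper first deletes the silent initial holes and reads it off statically from the dummy pool $C_0\setminus R$ of size $k-d+\mu$, whereas you keep those holes and use monotonicity of $|H_R|$ along lazy trajectories. There is also a wording slip: your sentence saying the bound fails ``on every reachable state'' contradicts your own conclusion. What you mean is that $|H_R|\le\mu$ can fail on unreachable states of the full configuration space, which is possible only when silent holes exist.
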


\begin{proof}
Pages outside $C_0\cup R$ are silent initial holes.  A lazy schedule never
loads them, so they may be deleted.  The pages of $C_0\setminus R$ are never
requested and are interchangeable resident dummies.  If $h=|H_R|$, then the
cache contains $d-h$ requested pages and therefore $k-d+h$ dummy pages.  This
quantity is nonnegative, while the number of available dummies is
$|C_0\setminus R|=k-d+\mu$; equivalently,
$\max\{0,d-k\}\le h\le\mu$.

The arrival map acts only on $H_R$.  A service of $p\in B$ either evicts a
requested resident $q\in R\setminus H_R$, producing
$H_R'=H_R-p+q$, or, when $k-d+h>0$, evicts an arbitrary resident dummy and
produces $H_R'=H_R-p$.  In both cases $p$ is removed from $B$ and the new
hole is clean.  These are all quotient transitions.

Projecting a full lazy schedule to requested-page membership gives exactly
this trajectory.  Conversely, lift a requested-victim transition using its
named victim, and lift a dummy-victim transition using any currently resident
dummy.  Dummy counts show that such a choice exists, and dummy identity can
never affect a future request or service delay.  The lifted and quotient
schedules therefore have identical movement and holding costs.  Counting
$H_R$ and its pending subsets gives the displayed state count.
\end{proof}

If $k\ge d$, every initially missing requested page can be loaded once while
evicting a distinct dummy, and $\OPT=\mu$.  Thus the exponential state
growth is controlled by the pages that are actually requested, not by silent
pages in the full universe.

\section{Deterministic Online Algorithms for a General Page Universe}
\label{sec:online-general}

We now permit an arbitrary number of holes and hence many simultaneous
pending pages.  Fix $\theta>0$.  Each page starts a timer at the oldest
request of its current pending episode and is loaded exactly $\theta$ later.
Fix once and for all a total order $\prec$ on $\Pages$.  Recency keys are
compared lexicographically, with larger keys meaning more recent.  Every
initially cached page $p$ receives the artificial key
\[
  (-\infty,0,\operatorname{rank}_{\prec}(p)),
\]
and an actual arrival of $p$ at time $t$ assigns
$(t,0,\operatorname{rank}_{\prec}(p))$.  Thus simultaneous arrivals are tied
deterministically without imposing a physical order on the atomic batch.
Deadline ties and minimum-key victim ties use $\prec$.  We study two victim
rules.
\begin{description}[leftmargin=3.2cm,labelwidth=2.9cm,style=multiline]
\item[Arrival-LRU]
Loading a page does not update its key; its key remains its last actual
arrival.
\item[{\parbox[t]{2.9cm}{\raggedright Service-touch\\LRU}}]
The $j$th service micro-action at time $t$ assigns the loaded page key
$(t,1,j)$, after the complete arrival batch and in service micro-order.
\end{description}
Both rules permit a newly loaded page to be evicted by a later service at the
same timestamp.  Algorithm~\ref{alg:timer-lru} fixes the event order explicitly.

\begin{algorithm}[H]
\caption{Timer-LRU$(\theta)$, with either recency convention}
\label{alg:timer-lru}
\begin{algorithmic}[1]
\State $C\gets C_0$; initialize pending episodes and recency keys as above
\ForAll{event times $t$ in increasing order}
  \State Process the complete raw batch at $t$ against the pre-action cache
  $C$, and update arrival-recency keys
  \ForAll{pages $p$ whose request at $t$ starts a pending episode}
    \State Set the timer of $p$ to expire at $t+\theta$
  \EndFor
  \ForAll{timers of still-pending pages $p$ expiring at $t$, in fixed order}
    \State Choose a minimum-key page $q$ of $C$ under the chosen LRU rule
    \State Replace $q$ by $p$; clear the pending episode of $p$
    \If{using Service-touch LRU}
      \State Set the key of $p$ to the current service micro-order
    \EndIf
  \EndFor
\EndFor
\end{algorithmic}
\end{algorithm}

Every service has cost at most $1+\theta$.  It remains to bound the number
$N$ of services against an arbitrary feasible lazy comparator $O$.  Write
$M$ and $D$ for its movement and holding costs.  The \emph{busy interval} of
an online service activated at $a$ is the closed interval $[a,a+\theta]$;
touching an $O$ micro-action at either endpoint counts as crossing.  A service
for page $p$ is an activation miss iff $p\notin C_O^-(a)$, and an activation
hit otherwise.

\begin{lemma}[Activation misses and crossing hits]
\label{lem:det-activation-crossing}
The number of activation misses satisfies
\[
  N_{\rm miss}\le D/\theta+M.
\]
Partition the event order into half-open $O$-phases between consecutive
post-action cuts, on each of which its cache is fixed.  The number of
activation-hit timers that touch the next $O$ movement
is at most $kM$.
\end{lemma}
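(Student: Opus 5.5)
We prove the two bounds separately. Both rest on the same observation about activation: an online service for page $p$ is activated at time $a$ because a request to $p$ arrived at $a$ while $p$ was physically missing.

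\textbf{Activation misses.} Here $p\notin C_O^-(a)$, so the request at $a$ either joins or starts a pending $p$-episode $e_O$ of the comparator. That episode has some start time $b\le a$ and ends with an $O$-load of $p$ at some $s\ge a$.

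We split activation misses according to $s-a$.
\begin{itemize}
\item \emph{Long case, $s\ge a+\theta$.} Then $e_O$ covers the interval $[a,a+\theta)$, which is the holding part of the online busy interval.
\item \emph{Short case, $s<a+\theta$.} The $O$-load at $s$ is an $O$ movement inside the busy interval.
\end{itemize}

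For the short case we charge the activation miss to the $O$-load at $s$. This map is injective. An online $p$-episode cannot restart while $p$ is still pending online, so two online $p$-activations charged to the same $O$-load would both lie in $[b,s]$ with $p$ continuously pending online. That forces them to be the same activation, because the online episode activated at $a$ is cleared only at $a+\theta$ and $a+\theta>s$. So short misses number at most $M$.

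For the long case, the windows $[a,a+\theta)$ belonging to the same page are pairwise disjoint, since the online $p$-episodes are disjoint. Each such window lies inside a comparator holding support for $p$. By Lemma~\ref{lem:holding}, the total length of these windows is at most $D$, so long misses number at most $D/\theta$.

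Adding the two cases gives $N_{\rm miss}\le D/\theta+M$.

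\textbf{Crossing activation hits.} Now $p\in C_O^-(a)$, yet $p$ is physically missing, which means the online algorithm evicted $p$ earlier. If the timer touches the next $O$ movement, we charge the activation to the first $O$ micro-action in $[a,a+\theta]$, that is, the movement ending the $O$-phase containing $a$.

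Fix one $O$ movement. Every activation hit charged to it starts in the preceding $O$-phase and concerns a page of that phase's fixed cache $C_O$, which has size $k$. Its busy interval contains the end of that phase.

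For a single page $p$, at most one online $p$-episode can have its activation in this phase and its busy interval covering the phase end. Any two online $p$-episodes are disjoint, and each has length exactly $\theta$. Two such episodes would both contain the phase endpoint in their closed spans, so the earlier would have to be cleared exactly at the endpoint while the later is activated there as well. We resolve this with the arrival-first convention together with the half-open phase convention.

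Hence each movement receives at most $k$ charges, and the total is at most $kM$.

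\textbf{Main obstacle.} The delicate step is the endpoint and tie bookkeeping: zero-length comparator episodes, same-timestamp micro-actions, and touching counted as crossing. This is where the closed span $\mathcal J_e$ and the event-order cuts $C_O^\pm$ must be used carefully. In particular, we must ensure that a short-case activation miss whose $O$-load coincides with $a$ is still charged injectively.
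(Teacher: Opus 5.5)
Your proof is correct. The crossing-hit part is the paper's argument: charge each touching hit timer to the first $O$ movement it touches, use that its page lies in the fixed $k$-page cache of the phase containing the activation, and note that one page cannot contribute two such timers.

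For activation misses you use a different decomposition. The paper groups misses by the $O$-episode whose closed span $\mathcal J_e$ contains them and applies a packing count. A span of length $L$ contains at most $L/\theta+1$ activations of its page, and summing over the $M$ episodes gives $D/\theta+M$. You instead classify each miss individually. You charge it either to the window $[a,a+\theta)$ inside $O$'s holding support, or injectively to the $O$-load that clears its episode. Per episode this charges at most $L/\theta$ to holding and at most one to the load, so the totals coincide. Your version names the comparator resource that pays for each miss and treats the zero-length case $s=a$ as an ordinary short miss. The paper's version is a one-line count.

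Both rest on the same fact, which is worth stating once: under the arrival-first convention, two activations $a<a'$ of one page satisfy $a'>a+\theta$ strictly, because the batch at $a+\theta$ still finds the page pending online. This fact settles the tie case in your hit part immediately, since $a'\le s\le a+\theta<a'$, with no appeal to the half-open phase convention. It is also the precise reason your short-case charge is injective: the earlier online episode stays pending through $s$. You do not need $p$ to be continuously pending online on all of $[b,s]$.
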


\begin{proof}
Consider an activation miss for page $p$ at time $a$.  Since
$p\notin C_O^-(a)$, the arrival at $a$ starts or joins a unique $O$-pending
episode $e$ for $p$.  By definition of the closed service span,
\[
  a\in\mathcal J_e=[a_e,s_e],
\]
including when $O$ serves the episode immediately at $a=s_e$.  Activations of
one online page are spaced strictly more than $\theta$ apart: its timer is
cleared at one deadline, and the arrival-first convention permits a new
activation only at a later arrival epoch.  Consequently, an $O$-episode of
length $L=s_e-a_e$ contains at most $L/\theta+1$ activations of its page.
This includes $L=0$, when $\mathcal J_e$ is a singleton.  Since $O$ is lazy
and feasible, its episodes are in bijection with its $M$ replacements.
Summing over them and using $\sum_e(s_e-a_e)=D$ proves the first bound.

For activation hits, charge every busy interval that touches the next $O$
movement to the first movement it touches.  At the activation the timer's
page lies in the fixed pre-movement cache, which has $k$ pages, and a page has
at most one live timer.  Thus at most $k$ timers are charged to each $O$
movement.
\end{proof}

The remaining timers are activation hits whose full busy intervals lie in a
single fixed-cache $O$ phase.  The complication is that the online cache may
temporarily contain a page outside the fixed $O$ cache even though $O$ has no
pending request for that page.  Figure~\ref{fig:quarantine} depicts the short
quarantine after which LRU removes precisely these clean exceptions.

\begin{figure}[t]
  \centering
  \begin{tikzpicture}[x=1.22cm,y=0.84cm,>=Latex,font=\small]
    \draw[->,thick] (0,0) -- (11.0,0) node[right] {time};
    \draw (1,0.12) -- (1,-0.12) node[below=3pt] {$s$};
    \draw (3.2,0.12) -- (3.2,-0.12) node[below=3pt] {$s+\theta$};
    \draw (5.4,0.12) -- (5.4,-0.12) node[below=3pt] {$w=s+2\theta$};
    \draw (7.6,0.12) -- (7.6,-0.12) node[below=3pt] {purge};
    \draw (10.2,0.12) -- (10.2,-0.12) node[below=3pt] {phase end};

    \draw[fill=orange!12,draw=none] (1,0.2) rectangle (5.4,4.4);
    \node[orange!70!black] at (3.2,4.65) {quarantine};
    \draw[fill=green!8,draw=none] (7.6,0.2) rectangle (10.2,4.4);
    \node[green!45!black] at (8.9,4.65) {holding-charge region};

    \draw[red!70!black,very thick] (0.25,3.70) -- (1.65,3.70);
    \node[anchor=west] at (1.8,3.70) {crossing timer $\to$ OPT movement};

    \draw[orange!80!black,very thick] (1.35,2.85) -- (3.55,2.85);
    \draw[orange!80!black,very thick] (3.75,2.35) -- (5.95,2.35);
    \draw[orange!80!black,very thick] (5.7,1.75) -- (7.9,1.75);
    \node[anchor=east,orange!70!black] at (7.45,2.05)
      {at most $4k$ exceptions};

    \draw[green!45!black,very thick] (7.9,0.80) -- (10.1,0.80);
    \draw[green!45!black,very thick] (8.45,0.42) -- (10.1,0.42);
    \node[green!45!black] at (8.9,1.20)
      {$|C_{\ALG}\setminus S|\le|B_{\OPT}|$};
  \end{tikzpicture}
  \caption{An OPT phase with fixed cache $S$.  Timers touching the phase
  boundary are charged to the crossing OPT movement.  A length-$2\theta$
  quarantine and at most $k$ subsequent services purge old clean exceptions;
  timers straddling the purge are also exceptional.  Afterwards every
  outsider in the online cache is OPT-pending, so the remaining timer length
  is charged by integration to OPT holding.}
  \label{fig:quarantine}
\end{figure}

\paragraph{Proof outline.}
Activation misses pay to offline holding or movement, and hit timers that
reach an offline cache change pay to that change.  Inside an interval where
the offline cache is fixed, the only obstacle is an online page outside that
cache that is not offline-pending.  After waiting $2\theta$, LRU makes every
such old exception eligible for removal, and at most $k$ later services remove
all of them.  At most $4k$ timers are exceptional: at most $2k$ start during
the waiting period, at most $k$ finish between its end and the purge, and at
most $k$ remain active across the purge.  Afterward, online pages outside the
offline cache are offline-pending, so regular timer length can be charged to
offline holding.

\begin{lemma}[Quarantine and purge]
\label{lem:det-quarantine}
In every noninitial fixed-cache $O$ phase, at most $4k$ contained
activation-hit timers are exceptional.  After these exceptional timers and
the ensuing purge, every page of $C_{\ALG}\setminus S$ is $O$-pending, where
$S$ is the fixed $O$ cache of the phase.  The initial phase has no clean
exception.
\end{lemma}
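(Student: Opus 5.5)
Fix a noninitial $O$-phase with fixed cache $S$, starting at a post-action cut $s$ and ending at the next $O$ movement. The plan has three steps: a quarantine of length $2\theta$, a purge of at most $k$ services, and a count of exceptional timers. Call a page of $C_{\ALG}\setminus S$ a \emph{clean exception} at time $t$ if it is not $O$-pending at $t$. The key observation is that $O$ is lazy and its cache is fixed during the phase. Hence a page $x\notin S$ that is requested during the phase becomes $O$-pending and stays so until the phase ends. So a clean exception at time $t\ge s$ has had no request in $[s,t]$.

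\textbf{Step 1: recency after the quarantine.} Set $w=s+2\theta$. I claim that after $w$ every clean exception has recency key below that of every page loaded or requested in $[s+\theta,\infty)$.
\begin{itemize}[leftmargin=*]
\item Under Arrival-LRU, a clean exception's key is its last arrival, which precedes $s$.
\item Under Service-touch LRU, a clean exception may have been loaded after $s$. Then it was pending online, hence requested. Its activation $a$ was then after $s$, or its timer covered $s$; in the latter case the timer touches the $O$ movement at $s$ and is a crossing timer (Lemma~\ref{lem:det-activation-crossing}). In the former case it would be $O$-pending, a contradiction.
\end{itemize}
So in either case any clean exception surviving past $w$ has key older than every page with an activity inside $[s,w]$. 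I would formalize this as: after $w$, every clean exception has key $<s$, while every service after $w$ loads a page activated after $s+\theta$ with a fresh key. This is the step I expect to be the main obstacle. The delicate points are keeping the boundary conventions of the arrival-first cut straight, and getting the exact length $2\theta$, rather than $\theta$, out of the service-touch convention.

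\textbf{Step 2: the purge.} Every service after $w$ evicts a minimum-key page. While a clean exception remains, either it is the victim, or the victim has an even older key. An older-keyed page is an $S$-page last touched before $s$, or another old exception. No new clean exception can be created, since loads are of online-pending pages, which were requested during the phase. Pages outside $S$ that are requested become $O$-pending. There are at most $k$ pages with key $<s$ in the cache, so after at most $k$ services past $w$ no clean exception remains. I call the time of the last of these services the \emph{purge}. The claim that every page of $C_{\ALG}\setminus S$ is then $O$-pending follows from the same observation: a newly loaded page outside $S$ is $O$-pending.

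\textbf{Step 3: counting exceptional timers.} A contained activation-hit timer is \emph{exceptional} if its busy interval meets $[s,\text{purge}]$ in a way that blocks charging to holding.
\begin{itemize}[leftmargin=*]
\item Timers activated in $[s,w)$: each activation hit has its page in $S$, and has at most one live timer per page per $\theta$-window. So each of the two $\theta$-windows contributes at most $k$, giving $2k$.
\item Timers activated before $w$ but finishing between $w$ and the purge: at most $k$, since each finish is one of the at most $k$ purge services.
\item Timers active across the purge: at most $k$ (one per $S$-page).
\end{itemize}
This gives at most $4k$ in total.

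\textbf{The initial phase.} Here $C_{\ALG}=C_0=S$ until the first online service. Every loaded page is online-pending, hence requested, hence outside $S$ and $O$-pending. So no clean exception ever appears, and no quarantine is needed.
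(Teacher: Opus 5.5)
Your plan follows the paper's proof: a $2\theta$ quarantine, an LRU purge within $k$ services after $w$, the count $2k+k+k$, and equal caches in the initial phase. Two steps fail as written, however.

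\textbf{Step 1.} The formalization ``after $w$ every clean exception has key $<s$'' is false under Service-touch LRU, and Step 2's count depends on it. Your own case analysis shows why. A timer activated at $a\in(s-\theta,s]$, while its page is still in $O$'s pre-movement cache, may load that page at $a+\theta\in(s,s+\theta]$, after $O$ has evicted it at $s$. This produces a clean exception whose service key $(a+\theta,1,j)$ has first coordinate exceeding $s$. Noting that this is a crossing timer settles that timer's own charge. It does not settle the key of the page it inserts.

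The paper's invariant holds for both rules: every clean exception has key at most $s+\theta$, and no new clean exception arises after $s+\theta$. Arrival keys are at most $s$, with equality possible because a request at real time $s$ precedes the phase-start cut. This is exactly where $2\theta$ is used. A service strictly after $w=s+2\theta$ was activated after $s+\theta$, so its inserted key exceeds $s+\theta$ under Arrival-LRU (its last arrival is at least its activation) and under Service-touch (the key is the service time). With $s+\theta$ in place of $s$, your Step 2 goes through. While an exception remains, each such service evicts a page of key at most $s+\theta$ and inserts a newer one, and keys never decrease. So the at most $k$ cached pages with key at most $s+\theta$ are exhausted within $k$ services.

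\textbf{Step 3.} Your middle bullet counts timers ``activated before $w$'' that finish between $w$ and the purge. Contained timers are activated after the phase-start cut, so these are already in your first bullet. Meanwhile, contained timers activated at or after $w$ that finish by the purge are counted nowhere; your half-open $[s,w)$ also drops activations exactly at $w$. These timers must be exceptional: their busy intervals lie where clean exceptions may persist, so Lemma~\ref{lem:det-post-purge} cannot charge them.

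The paper's second category is every contained hit timer finishing after $w$ and no later than the purge action. There are at most $k$ of these, because each such finish is one of the at most $k$ services after $w$ up to the purge. Add at most $2k$ activations in the closed quarantine $[s,w]$ and at most $k$ timers live across the purge. This covers every timer meeting the pre-purge region, including a phase that ends before any purge.

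\textbf{Initial phase.} An online-pending page may lie in $S$ (an activation hit), so ``online-pending, hence outside $S$'' is wrong. What you need is only that every inserted page outside $S$ was requested during the phase and is therefore $O$-pending.
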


\begin{proof}
Let $s$ denote the real time of the phase start, let $\xi_s$ be the
post-action event-order cut that starts the phase, set $w=s+2\theta$, and call a page in
$C_{\ALG}\setminus S$ a \emph{clean exception} if it is not $O$-pending.
We first record when such an exception can be inserted.  Suppose that an
online service during the phase inserts $p\notin S$ and that $p$ is not
$O$-pending at the insertion time.  If its timer activation occurred after
$\xi_s$, then the activating request would miss the fixed cache $S$ and start
an $O$-pending episode.  That episode could not end before the insertion:
loading $p$ would be an $O$ replacement and hence would end the phase.  This
is a contradiction.  Thus the activation precedes $\xi_s$; in real time its
timestamp $a$ satisfies $a\le s$, and the insertion occurs by $s+\theta$.

A clean exception also receives no arrival after the phase-start cut
$\xi_s$: such an arrival misses $S$ and makes the page $O$-pending for the
rest of the phase.
Consequently, under Arrival-LRU its key is at most $s$, and under
Service-touch LRU every insertion that could give it a newer key occurs by
$s+\theta$.  Hence every clean exception present after $s+\theta$ has LRU
key at most $s+\theta$, and no new clean exception can be created later in
the phase.

Every service strictly after $w$ was activated after $s+\theta$ and inserts
a page with a strictly newer key.  Thus after at most $k$ such services all
old clean exceptions have been evicted.  If the phase contains fewer than
$k$ further services, declare all of them exceptional and there is no
post-purge region to analyze.  Otherwise call the first action after which no
old clean exception remains the purge action.  At most $2k$ contained hit
timers activate during the length-$2\theta$ quarantine: their pages belong to
$S$, and successive activations of one page are separated by more than
$\theta$.  The purge occurs within the first $k$ services after $w$, so at
most $k$ hit timers finish after $w$ but no later than the purge action.
Finally, at most one live hit timer belongs to each page of $S$, so at most
$k$ remain active across that action.  Hence at most $4k$ contained timers
are exceptional.

The purge removes every clean exception present at that point.  No later
service can create a new one, and an outsider that is $O$-pending cannot
become clean before the next $O$ replacement.  Hence
$C_{\ALG}\setminus S\subseteq B_O$ throughout the post-purge region.  In
the initial phase the caches agree, and every subsequently inserted outsider
starts an $O$-pending episode that cannot end while the initial $O$ cache
remains fixed; thus the same invariant holds from the start.
\end{proof}

\begin{lemma}[Post-purge holding charge]
\label{lem:det-post-purge}
Apart from the exceptional timers of Lemma~\ref{lem:det-quarantine}, the
number of contained activation-hit services satisfies
\[
  N_{\rm hit}^{\rm contained,regular}\le D/\theta.
\]
\end{lemma}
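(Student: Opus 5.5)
The plan is to charge each regular contained activation-hit timer, by integration, to $O$-holding inside the post-purge region of its phase. Fix a noninitial fixed-cache $O$ phase with cache $S$; the initial phase is handled the same way, with the whole phase playing the role of the post-purge region. By Lemma~\ref{lem:det-quarantine}, every regular contained timer has its busy interval $[a,a+\theta]$ inside the post-purge region, where $C_{\ALG}\setminus S\subseteq B_O$ holds throughout. Consider such a timer for a page $p$. At activation $p\in S$, so $p$ is an $O$-hit. Because $p$ becomes online-pending at $a$, it is not in $C_{\ALG}$ then. Since $|C_{\ALG}|=|S|=k$, every page of $S\setminus C_{\ALG}$ must be matched by a page of $C_{\ALG}\setminus S$. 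Hence
\[
  |S\setminus C_{\ALG}(t)|=|C_{\ALG}(t)\setminus S|\le |B_O(t)|
\]
at every post-purge time $t$.

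Next, integrate over the busy intervals. Throughout $[a,a+\theta)$ the page $p$ stays outside $C_{\ALG}$: the online algorithm loads $p$ only at its deadline $a+\theta$, and it was already absent at $a$. So $p\in S\setminus C_{\ALG}(t)$ on the whole half-open interval. A page has at most one live timer at a time. Summing indicator functions over regular contained timers of the phase therefore gives, for almost every $t$,
\[
  \sum_{\text{regular timers }\tau}\mathbf 1_{[a_\tau,a_\tau+\theta)}(t)\le |S\setminus C_{\ALG}(t)|\le |B_O(t)|.
\]
Integrating over the phase yields $\theta$ times the number of regular timers on the left. On the right it yields at most the $O$-holding accumulated in that phase, by Lemma~\ref{lem:holding}.

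Finally, the $O$-phases partition the timeline and contained intervals lie within a single phase. Summing over phases therefore gives $\theta\,N_{\rm hit}^{\rm contained,regular}\le D$.

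The main obstacle is the event-order bookkeeping at shared timestamps. The issue is whether $p$ might be loaded and evicted within the interval, or whether $C_{\ALG}$ at a micro-cut could briefly violate the invariant. I would dispose of this by noting three facts: only deadline services load $p$; micro-actions occupy measure zero; and the invariant of Lemma~\ref{lem:det-quarantine} holds at every post-purge cut, so the integrand inequality holds almost everywhere.
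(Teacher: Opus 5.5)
Your proposal is correct and follows the paper's own proof. It bounds the number of live regular timers pointwise by $|S\setminus C_{\ALG}|=|C_{\ALG}\setminus S|\le|B_O|$, using equal cache sizes and the post-purge invariant, and then integrates over all $O$ phases via Lemma~\ref{lem:holding}. Your additional remarks, that regular timers start after the purge action and that micro-action cuts have measure zero, only spell out what the paper leaves implicit.
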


\begin{proof}
After the purge, an active contained hit timer requests a page
$p\in S\setminus C_{\ALG}$: it lies in $S$ at activation and remains absent
from the online cache until its deadline.  Equal cache sizes give
\[
  \#\{\text{active regular timers}\}
  \le |S\setminus C_{\ALG}|
  =|C_{\ALG}\setminus S|
  \le |B_O|,
\]
where the last inequality is the post-purge invariant and $B_O$ is the
set of $O$-pending pages.  Integrate this pointwise inequality over all $O$
phases.  Every regular timer contributes its full length $\theta$ to the
left-hand side, while Lemma~\ref{lem:holding} identifies the integral on the
right with $D$.
\end{proof}

\begin{theorem}[Unified deterministic bound]
\label{thm:general-lru}
For either recency rule and every feasible lazy offline comparator $O$ with
movement $M$ and holding cost $D$,
\begin{equation}
  \label{eq:service-count-general}
  N\le \frac{2D}{\theta}+(5k+1)M.
\end{equation}
Consequently, choosing the lazy optimum guaranteed by
Lemma~\ref{lem:lazy-optimum},
\[
  \ALG\le
  (1+\theta)\max\left\{\frac2\theta,5k+1\right\}\OPT.
\]
In particular, the choice $\theta=2/(5k+1)$ gives
$\ALG\le(5k+3)\OPT$.
\end{theorem}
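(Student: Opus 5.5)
The plan is to classify every online service by its activation and then add up the bounds from Lemmas~\ref{lem:det-activation-crossing}, \ref{lem:det-quarantine} and \ref{lem:det-post-purge}. Each service is either an activation miss or an activation hit. An activation hit either has a busy interval $[a,a+\theta]$ that touches an $O$ micro-action, or its busy interval lies inside a single fixed-cache $O$ phase. Contained hits are further split into the exceptional timers of Lemma~\ref{lem:det-quarantine} and the regular ones. This gives
\[
  N\le N_{\rm miss}+N_{\rm hit}^{\rm cross}+N_{\rm hit}^{\rm contained,exc}+N_{\rm hit}^{\rm contained,regular}.
\]

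Next I bound each term. Lemma~\ref{lem:det-activation-crossing} gives $N_{\rm miss}\le D/\theta+M$ and $N_{\rm hit}^{\rm cross}\le kM$. Lemma~\ref{lem:det-post-purge} gives $N_{\rm hit}^{\rm contained,regular}\le D/\theta$. For the exceptional timers, Lemma~\ref{lem:det-quarantine} gives at most $4k$ per noninitial phase and none in the initial phase. Every noninitial phase starts at a post-action cut of an $O$ replacement, so there are at most $M$ noninitial phases, contributing at most $4kM$ in total. Summing yields $N\le 2D/\theta+(1+k+4k)M$, which is \eqref{eq:service-count-general}.

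The step that needs care is the bookkeeping of phases. Several $O$ replacements at the same timestamp create several post-action cuts and hence possibly empty phases. Each such phase is still indexed by a distinct replacement, so the count of noninitial phases stays at most $M$. I also have to check that the last phase, which is unbounded to the right, falls under the same analysis. The purge argument covers it, and its crossing timers are simply absent. Activation hits need a similar check: a timer touching a cut that has several micro-actions is charged to the first of them, as in Lemma~\ref{lem:det-activation-crossing}. With that, no timer is counted in two classes and none is missed, so the four classes genuinely partition the services. This partition check is the main obstacle.

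For the cost bound, every service is lazy and costs at most $1+\theta$, so $\ALG\le(1+\theta)N$. Take $O$ to be the lazy optimum from Lemma~\ref{lem:lazy-optimum}. Then $M+D=\OPT$, and
\[
  N\le \tfrac{2}{\theta}D+(5k+1)M\le\max\{2/\theta,\,5k+1\}\OPT.
\]
Setting $\theta=2/(5k+1)$ makes both entries of the maximum equal to $5k+1$, so $(1+\theta)(5k+1)=5k+1+2=5k+3$.
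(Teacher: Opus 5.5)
Your proposal is correct and follows the paper's proof essentially step for step. You split services into activation misses, crossing hits, and exceptional and regular contained hits, bound them by Lemmas~\ref{lem:det-activation-crossing}--\ref{lem:det-post-purge} using the fact that each $O$ movement starts at most one noninitial phase, and then balance $(1+\theta)\max\{2/\theta,5k+1\}$ at $\theta=2/(5k+1)$ to obtain $5k+3$.
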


\begin{proof}
Lemmas~\ref{lem:det-activation-crossing}--\ref{lem:det-post-purge} and the
fact that each $O$ movement starts at most one noninitial phase give
\[
  N_{\rm hit}
  \le D/\theta+4kM+kM
  =D/\theta+5kM.
\]
Adding the activation-miss bound proves~\eqref{eq:service-count-general}.
Each of the $N$ services pays one movement and holding at most $\theta$.
Multiplying the service-count bound by $1+\theta$ and balancing its two
coefficients proves the theorem.  A service exactly at $w$ belongs to the
quarantine, and simultaneous services are the ordered zero-time microstates
specified in Algorithm~\ref{alg:timer-lru}.
\end{proof}

\subsection{Tightness of the deterministic order}

The two variants can behave differently.  With $k=r=2$, initial holes
$\{0,1\}$, batches $\{0,1,2\}@0$ and $\{0\}@2$, threshold one, and a suitable
fixed tie order, Arrival-LRU pays six while Service-touch LRU pays four and
OPT pays two.  Arrival-LRU immediately re-evicts page $0$ because its actual
arrival key is old; Service-touch protects it.

Neither rule improves the asymptotic lower bound.  Start with cached pages
$c_1,\ldots,c_k$ and a hole $x$.  Request
\[
  x,c_1,c_2,\ldots,c_{k-1}
\]
with every request arriving after the preceding deadline.  LRU performs $k$
services, while OPT immediately loads $x$, evicts $c_k$, and hits every later
request.  The ratio is $k(1+\theta)=\Omega(k)$.

\begin{proposition}[Deterministic lower-bound order]
Every deterministic online algorithm has competitive ratio $\Omega(k)$.
\end{proposition}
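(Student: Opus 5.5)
We reduce to the classical Sleator--Tarjan adversary, using long time gaps so that the delayed model degenerates into ordinary paging. Take $m=k+1$ pages, so there is exactly one hole at every time. The adversary works against a fixed deterministic online algorithm $A$ and fixes a large gap $L$, say $L=2$ or larger. It issues singleton requests at times $0, L, 2L, \dots$, and at each epoch it requests the page that is currently missing from $A$'s cache.

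To define this sequence properly, the adversary simulates $A$ after each request. It waits until $A$ has either served the pending page or let the age of its oldest request reach $L$. In the first case the adversary requests the new hole at the next epoch. In the second case $A$ has already paid holding at least $L\ge 1$ on that episode. The adversary then repeats a request to the same missing page, or simply continues from $A$'s current hole. Because $A$ is deterministic, this adaptive choice yields a fixed input, so the argument is valid for deterministic algorithms.

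The accounting for $A$ goes as follows. Every request hits a page that is absent at its arrival, so it creates or continues a pending episode. Within each window of length $L$, one of two things happens. Either $A$ performs a replacement, costing at least one, or an existing episode accrues holding at least $L\ge1$. Hence over $n$ requests we get $\ALG\ge n$, up to an additive constant for the final episode, since every episode must eventually be cleared.

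For the comparator we use the classical Belady bound. The request sequence, read as an ordinary paging sequence over $k+1$ pages, has a schedule with at most $\lceil n/k\rceil$ faults: on each fault, evict the page whose next request is farthest away, and that page is not requested during the next $k-1$ requests. We realize this offline as an immediate-service schedule. Each fault is served at its arrival time with zero delay. Then $\OPT\le n/k+1$, and the ratio $\ALG/\OPT\ge k-o(1)$ as $n\to\infty$, which gives $\Omega(k)$. This is true even for $(c,\beta)$ guarantees with additive $\beta$, because $n$ is unbounded.

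The main obstacle is the case where $A$ never serves, or serves only after several requests have accumulated. The adversary's next request must stay well defined, and $A$ must still be charged about one unit per request. We handle this with the rule above: whenever $A$'s current hole has been pending for time $L$ without service, the adversary issues the next request to that same page. $A$ has then paid holding $\ge L\ge1$ for that window, so each epoch still costs $A$ at least one unit. A repeated request to a page that is still missing is just a repeat in the classical sequence, and it is free for the offline schedule only if that page is cached offline. To be safe, we note that Belady's bound of one fault per $k$ requests holds for any sequence over $k+1$ pages, so the comparator bound is unaffected.
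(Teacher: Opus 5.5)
Your argument is correct and is essentially the paper's proof: you run the Sleator--Tarjan adversary on $k+1$ pages with long time gaps, charge the algorithm at least one unit per request, and bound the delayed optimum by the immediate-service classical optimum of at most $\lceil n/k\rceil$ faults. The only difference is that the paper waits for the algorithm to clear each episode before issuing the next request, whereas your fixed grid of spacing $L\ge1$ charges each window to a movement or to $L$ units of holding; state the trigger as the next grid time rather than the age of the oldest request, since a repeated episode is already older than $L$ when the window opens.
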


\begin{proof}
Run the standard adaptive classical paging adversary on $k+1$ pages
\cite{sleator1985amortized}.  After the algorithm has cleared the current
pending episode, issue the next request, at a strictly later time, to a page
outside its current cache.  No future request is revealed while the current
one is pending, so waiting supplies no information and only adds holding
cost.  Every adversarial request therefore forces at least the movement paid
by the induced classical execution.  The delayed offline optimum can serve
the resulting classical sequence immediately and pay its classical paging
cost.  The usual phase lower bound, and hence its $\Omega(k)$ ratio, transfers
unchanged.
\end{proof}

\section{Temporal Aggregation for a General Page Universe}
\label{sec:online-general-randomized}

The aggregation below has both a deterministic offline use and a randomized
online use.  The main online difficulty with a literal
delayed Marker is that its service stream
depends on earlier random evictions.  Even for an oblivious raw input, that
derived stream can therefore behave adaptively.  We avoid this correlation by
performing temporal aggregation \emph{before} running a classical paging
algorithm.
This separates the timing step from the later eviction choices: the
aggregation map is cache-independent, while the later paging algorithm
sees only a virtual sequence fixed by the input.  The claim is not
that timing and eviction decouple in every physical schedule; the
deterministic algorithm of Section~\ref{sec:online-general} handles their
coupling directly.

Figure~\ref{fig:temporal-reduction} summarizes the four layers.  The first
arrow is cache-independent.  The second may use either an exact deterministic
paging optimum or a randomized online paging algorithm.  The last arrow is
pathwise and never performs a proactive physical replacement.

\begin{figure}[t]
  \centering
  \begin{tikzpicture}[
      >=Latex,
      font=\small,
      box/.style={draw,rounded corners=2pt,align=center,minimum height=2.0cm,
                  text width=2.7cm,fill=blue!5},
      arrow/.style={->,thick,blue!55!black}
    ]
    \node[box] (raw) {
      \textbf{Raw timed input}\\[2pt]
      input-only windows\\
      $[a_p,a_p+\theta]$
    };
    \node[box,right=0.72cm of raw] (virtual) {
      \textbf{Virtual sequence}\\[2pt]
      one representative\\
      $p@(a_p+\theta)$
    };
    \node[box,right=0.72cm of virtual] (shadow) {
      \textbf{Shadow paging}\\[2pt]
      lazy algorithm $A$\\
      cache $S$
    };
    \node[box,right=0.72cm of shadow] (physical) {
      \textbf{Physical cache}\\[2pt]
      if $p$ is pending,\\
      swap $q\in C\setminus S$ to $p$\\[-1pt]
      {\scriptsize $g+\Delta\Phi\le f$}
    };

    \draw[arrow] (raw) -- (virtual);
    \draw[arrow] (virtual) -- (shadow);
    \draw[arrow] (shadow) -- (physical);

    \node[below=0.34cm of shadow,align=center,font=\scriptsize,xshift=1.65cm]
      {A shadow fault with no physical pending episode changes only $S$.};
  \end{tikzpicture}
  \caption{The common temporal reduction.  The first two layers produce an
  ordinary-paging instance fixed independently of all cache and random
  choices.  The last layer is a pathwise,
  pending-only projection.  Choosing exact classical paging gives the
  offline $5$-approximation; at $\theta=2/3$, choosing an
  $H_k$-competitive classical algorithm gives the randomized $5H_k$ bound.}
  \label{fig:temporal-reduction}
\end{figure}

\subsection{Oblivious aggregation windows}

Fix $\theta>0$.  Independently for each page $p$, the first raw arrival not
already covered by an open window anchors a window
\[
  [a_p,d_p]=[a_p,a_p+\theta].
\]
All further arrivals to $p$ at times at most $d_p$, including an arrival at
the right endpoint, join this window.  After processing the complete raw
batch at $d_p$, output one \emph{virtual request} $p@d_p$ and close the
window.  A new $p$-window can start only at a time strictly larger than
$d_p$.  Simultaneous virtual requests use a fixed page order.

The resulting virtual sequence $\sigma_\theta$ is determined entirely by the
raw input and $\theta$; it is independent of every cache state and random
choice.

\paragraph{Algorithm AW-$A(\theta)$.}
Run a lazy classical paging algorithm $A$ on $\sigma_\theta$ in a
\emph{shadow cache} $S$, initialized to the common initial cache.  Maintain a
possibly different physical cache $C$, with $C=S$ initially.  Raw cache hits
are served immediately and do not touch $S$.  At a virtual request $p@d_p$,
first let $A$ process $p$ in $S$.  If $p$ is currently pending in the
physical system, then $p\notin C$ and $p\in S$; choose, by a fixed tie rule,
some $q\in C\setminus S$ and physically replace $q$ by $p$.  If $p$ is not
pending, take no physical action, even if the shadow request faults.

\begin{algorithm}[H]
\caption{Cache-independent aggregation with pending-only projection
  (AW-$A(\theta)$)}
\label{alg:aw}
\begin{algorithmic}[1]
\State $S\gets C_0$; $C\gets C_0$; initialize no open page windows
\ForAll{raw arrival epochs and window deadlines $t$ in increasing order}
  \State Process the complete raw batch at $t$ against $C$; hits are served
  immediately and misses start or join page-specific pending episodes
  \ForAll{raw arrivals $p@t$ in the batch}
    \If{$p@t$ is not covered by an open $p$-window}
      \State Open the window $[t,t+\theta]$ for $p$
    \EndIf
  \EndFor
  \State Let $R_t$ denote the representatives whose windows end at $t$
  \ForAll{$p@t\in R_t$ in the fixed page order}
    \State Let $A$ process $p$ in the shadow cache $S$
    \If{$p$ is physically pending}
      \State Choose $q\in C\setminus S$ by the fixed tie rule and set
      $C\gets C-q+p$
      \State Clear the pending episode of $p$
    \EndIf
    \State Close the $p$-window
  \EndFor
\EndFor
\end{algorithmic}
\end{algorithm}

\begin{lemma}[Nonproactive shadow projection]
\label{lem:aw-physical}
If $F$ is the number of shadow faults and $G$ is the number of physical
replacements, then pathwise
\[
  G\le F
  \qquad\text{and}\qquad
  \ALG\le (1+\theta)G\le (1+\theta)F.
\]
Every physical replacement is nonproactive.
\end{lemma}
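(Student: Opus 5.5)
The plan is a direct pathwise argument in three parts: well-definedness and laziness of each physical action, the counting bound $G\le F$ via the one-sided discrepancy potential $\Phi(S,C)=|S\setminus C|$ (exactly as in Lemma~\ref{lem:lazy-optimum}), and a per-service delay bound of $\theta$.

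\emph{Well-definedness and nonproactivity.} I would first check that the repair step of Algorithm~\ref{alg:aw} can always be carried out. At a virtual request $p@d_p$, the classical algorithm $A$ serves the request, so afterwards $p\in S$. If $p$ is physically pending, then $p\notin C$, since a cached page hits. Hence $p\in S\setminus C$. Because $|S|=|C|=k$, the set $C\setminus S$ is nonempty, so a victim $q$ exists. The loaded page $p$ is pending by the very test that triggers the action, so every physical replacement is a service. This proves nonproactivity.

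\emph{Counting via $\Phi$.} Initially $S=C=C_0$, so $\Phi=0$. I would account for each virtual request as a shadow step with cost $f\in\{0,1\}$ (a fault or a hit), followed by a physical step with cost $g\in\{0,1\}$.
\begin{itemize}[leftmargin=*]
\item A shadow fault swaps one page of $S$, so it raises $\Phi$ by at most one.
\item A shadow hit does not change $\Phi$.
\item A physical repair moves $p\in S\setminus C$ into $C$ and removes $q\in C\setminus S$, so it lowers $\Phi$ by exactly one.
\end{itemize}
Thus every virtual request satisfies $g+\Delta\Phi\le f$. This holds both when the shadow hits but the physical cache repairs (giving $1-1\le 0$) and when the shadow faults with no physical action. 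Raw batches change neither cache, so they contribute nothing. Telescoping gives $G+\Phi_{\rm final}\le F+\Phi_{\rm initial}=F$, hence $G\le F$.

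\emph{Delay per service.} Consider a physical pending episode of $p$ whose oldest request is at time $a$. The raw arrival at $a$ is covered by some $p$-window $[a_p,d_p]$ with $a_p\le a\le d_p=a_p+\theta$: it either opens a new window or joins an already open one. Episodes of $p$ are cleared only at virtual requests of $p$. Every earlier $p$-window closed strictly before $a$, so the episode is still pending when the virtual request $p@d_p$ is processed, which happens after the batch at $d_p$. The test therefore fires and serves the episode at time $d_p$, with age $d_p-a\le\theta$. In particular, every episode is eventually served, since each window is finite. Each physical replacement therefore costs at most $1+\theta$, which gives $\ALG\le(1+\theta)G\le(1+\theta)F$.

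The main obstacle is the event-order bookkeeping at shared timestamps. I would need to confirm that batches are processed before representatives, which is the arrival-first convention. An arrival exactly at $d_p$ joins the closing window and is then served with age zero. I would also need to show that the episode identified at time $a$ cannot be cleared earlier by a representative of a previous window, or split across two windows. This last point follows because a new window starts only strictly after the previous deadline.
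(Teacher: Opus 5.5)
Your proposal is correct and is essentially the paper's proof: the same potential $\Phi=|S\setminus C|$ with the per-representative inequality $g+\Delta\Phi\le f$ telescoped to $G\le F$, and the same window argument giving each service age at most $\theta$ (no earlier $p$-representative inside the current window, and the next anchor strictly after the previous deadline). The one unstated ingredient is the laziness of $A$, which is part of the definition of AW-$A(\theta)$; your claims that a shadow hit leaves $S$ unchanged and that a fault swaps exactly one page rely on it.
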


\begin{proof}
Put $\Phi=|S\setminus C|$.  A shadow hit leaves $S$ unchanged, and a shadow
fault changes one page of $S$, so the shadow update alone increases $\Phi$
by at most one.  If $p$ is physically pending after that update, then
$p\in S\setminus C$.  Since $S$ and $C$ are equal-size $k$-sets, a page
$q\in C\setminus S$ exists.  The physical swap $q\to p$ costs one and
decreases $\Phi$ by exactly one.  If $f_j,g_j\in\{0,1\}$ indicate the shadow
fault and physical action at the $j$th representative, respectively, then
\[
  g_j+\Phi_j-\Phi_{j-1}\le f_j.
\]
The caches agree initially.  Summing in the fixed order of all
representatives, including simultaneous ones, and using
$\Phi_{\rm final}\ge0$ gives $G\le F$.

It remains to check service.  If a raw request in a $p$-window finds $p$
absent, it starts or joins a pending episode.  Before $p@d_p$, the algorithm
cannot load $p$: physical actions load only the page of the current
representative, and the window contains no earlier $p$-representative.
Hence the episode is still pending at $d_p$ and is cleared by the prescribed
physical swap.  Inductively, the preceding $p$-representative cleared every
earlier physical $p$-episode, and the next anchor is strictly later than that
deadline.  Thus the current episode's oldest request lies in $[a_p,d_p]$, so
its delay is at most $\theta$.  Every physical swap is triggered by such an
episode and is therefore nonproactive.  Thus each of the $G$ physical
replacements costs at most $1+\theta$.
\end{proof}

\begin{remark}[Endpoint and terminal conventions]
\label{rem:aggregation-boundaries}
At a time shared by raw arrivals and window deadlines, the complete batch is
processed first and representatives are then processed in their fixed order.
An earlier representative may evict a page whose request in that batch was a
hit, but cannot make that request retroactively pending.  Each page has at
most one representative at a timestamp because its next window can start
only strictly after the current deadline.  After the last raw arrival, all
remaining deadlines are still processed; the online algorithm needs no
end-of-input notification.  These conventions also cover empty deadline
batches and zero-age service at a shared endpoint.
\end{remark}

\subsection{Comparison with a delayed optimum}

Fix any feasible delayed schedule $O$.  Let $M$ and $D$ denote its movement and
holding costs, and let $O^+(t)$ denote its cache after the raw batch and all
of its ordered actions at time $t$.  Call a representative $p@d_p$
\emph{bad} when $p\notin O^+(d_p)$, and let $Z$ denote the number of bad
representatives.

A bad window has one of two certificates: either its page stays absent for
the whole window, yielding $\theta$ units of holding, or the page appears
during the window and has a last eviction before the post-action deadline
state.  The next lemma charges these two cases separately.

\begin{lemma}[Bad-representative charge]
\label{lem:bad-representatives}
\[
  Z\le \frac{D}{\theta}+M.
\]
\end{lemma}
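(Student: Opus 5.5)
The plan is to split the bad representatives by the two certificates just described and charge each class to a different part of $O$'s cost. Fix a bad representative $p@d_p$ with window $[a_p,d_p]$, so $p\notin O^+(d_p)$. Since the window is anchored by a raw arrival of $p$ at $a_p$, that arrival either hits in $O$ or starts or joins an $O$-pending episode.

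\emph{Case A: $p$ is absent from $O$'s cache throughout the window.} By this we mean absent at every event-order cut from the post-batch cut at $a_p$ through $O^+(d_p)$. The arrival at $a_p$ then misses, so $p$ is $O$-pending from $a_p$ onward. It stays pending until $O$ loads $p$, and any load inside the window would contradict absence. Hence $p\in\Pending_O(t)$ for all $t\in[a_p,d_p)$, and the window contributes $\theta$ units to $\int|\Pending_O(t)|\,dt$. Windows of the same page are pairwise disjoint, since each new anchor lies strictly after the previous deadline. Windows of different pages contribute to different coordinates of $|\Pending_O|$. By Lemma~\ref{lem:holding}, summing over Case A windows gives $\theta Z_A\le D$.

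\emph{Case B: $p$ is in $O$'s cache at some cut in the window but $p\notin O^+(d_p)$.} Then some $O$ replacement evicts $p$ at a micro-action inside the window, after that cut and no later than the last action at $d_p$. Charge the representative to the last such eviction of $p$ before the post-action deadline state. This eviction lies in the closed window $[a_p,d_p]$, and it is an eviction of $p$ specifically. Distinct windows of the same page are disjoint in time, including at their endpoints, because the next anchor is strictly later than $d_p$. So each $O$ replacement, which evicts exactly one page, is charged at most once, and $Z_B\le M$.

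Adding the two cases gives $Z\le D/\theta+M$. The step I expect to be the main obstacle is the endpoint bookkeeping under the arrival-first convention. Two situations need care.

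First, a page may be evicted at $a_p$ itself before or after the batch. If $p$ is cached at the cut just before the batch at $a_p$ and evicted by a same-time micro-action afterwards, the arrival is a hit. This case must land in Case B, and the eviction indeed occurs at time $a_p\in[a_p,d_p]$.

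Second, a page may be loaded and re-evicted at $d_p$ within the same action word. This is also Case B, with the eviction at $d_p$.

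Both situations are covered by defining Case A through every cut in the window, namely after the $a_p$ batch and through all actions at $d_p$, and Case B as its complement. With that definition, the only injectivity issue is two consecutive windows of one page sharing a timestamp, and strict separation of anchors rules this out. A possible gap in Case A is that $p$ might be pending already before $a_p$ in $O$. This only helps, since the holding on $[a_p,d_p)$ is still counted once for that page.
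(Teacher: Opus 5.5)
Your proof is correct and follows the paper's own argument. The paper also splits bad windows into two classes: those where $p$ is absent from every microstate from the anchor batch through the post-action deadline state, each charged to $\theta$ units of page-labelled holding via Lemma~\ref{lem:holding}, and those with a last present-to-absent transition, charged injectively to that eviction because the next anchor of a page is strictly later than its previous deadline. Your endpoint discussion matches the paper's remark that zero-time load--evict sequences at an endpoint fall in the eviction class; the only slip is that under the arrival-first convention no same-time action can evict $p$ before the batch at $a_p$, which is harmless.
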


\begin{proof}
Consider a bad window $[a_p,d_p]$.  If $p$ is absent when the batch at $a_p$
arrives and belongs to no intermediate cache microstate through the
post-action state at $d_p$, then the
request at $a_p$ remains pending for the full $\theta$ time units.  Charge the
representative to that segment of page-specific holding time.

Otherwise $p$ is present initially or is loaded somewhere in the window.
Since it is absent in $O^+(d_p)$, the window contains a last transition from
present to absent; charge the representative to that eviction of $p$.  This
also covers zero-time load--evict sequences at an endpoint.

Successive windows of one page have disjoint interiors and the next anchor is
strictly later than the preceding deadline.  Hence the page-labelled holding
segments $[a_p,d_p)\times\{p\}$ are disjoint for each page, while segments of
different pages are correctly added by Lemma~\ref{lem:holding}.  Each replacement evicts one page
and can be the last eviction for at most one window of that page.  The two
classes contribute at most $D/\theta$ and $M$, respectively.
\end{proof}

\begin{lemma}[Virtual paging comparator]
\label{lem:virtual-comparator}
If $\OPT_{\mathrm{pg}}(\sigma_\theta)$ is the classical paging optimum for
the virtual sequence with the common initial cache, then
\[
  \OPT_{\mathrm{pg}}(\sigma_\theta)
  \le M+2Z
  \le \frac{2D}{\theta}+3M.
\]
\end{lemma}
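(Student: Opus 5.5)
The second inequality follows from the first by Lemma~\ref{lem:bad-representatives}: $M+2Z\le M+2(D/\theta+M)=2D/\theta+3M$. The work is therefore in $\OPT_{\mathrm{pg}}(\sigma_\theta)\le M+2Z$. I would build an explicit classical paging schedule $P$ for $\sigma_\theta$ that tracks $O$. Whenever a representative is good, $P$'s cache should agree with $O^+(d_p)$ on that page. At a bad representative, $P$ pays a bounded extra cost.

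\textbf{Construction.} Process the events in real time, with representatives at the same timestamp in their fixed order. At each representative time $t$, take the post-action cache $O^+(t)$ as the target. The paging schedule $P$ maintains a cache $P$ with potential $\Psi=|P\setminus O|$, where $O$ denotes the current $O$ cache.

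\begin{itemize}[leftmargin=*]
\item When $O$ performs a replacement, $P$ does nothing, so $\Psi$ rises by at most one. This charge is paid by one unit of $M$.
\item At a \emph{good} representative $p@t$, we have $p\in O^+(t)$. If $p\notin P$, then $p\in O\setminus P$. Load $p$ and evict some $q\in P\setminus O$, which exists because the two caches have equal size. This costs one and lowers $\Psi$ by one, so the net cost is nonpositive.
\item At a \emph{bad} representative, $p\notin O^+(t)$. If $p\notin P$, load $p$ and evict some page. This costs one and raises $\Psi$ by at most one, for a total of at most $2$.
\end{itemize}

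Telescoping from $\Psi_0=0$ to a final $\Psi\ge0$ then gives $\OPT_{\mathrm{pg}}\le\mathrm{cost}(P)\le M+2Z$.

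To keep this inside the classical model, where loads happen only on faults, I would make $P$ demand-driven exactly as above, evicting only at its own faults. Any reordering of evictions is harmless, since classical paging is lazy without loss of generality.

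\textbf{Main obstacle: event ordering.} The delicate step is the interleaving of $O$'s micro-actions with the representatives at a common timestamp $t$. I would compare each representative against $O^+(t)$, the cache after \emph{all} of $O$'s actions at $t$, since this is exactly how ``bad'' is defined. So before handling the representatives at $t$, I first apply all of $O$'s actions at $t$ to the potential, paying one unit of $M$ each. Several representatives at the same timestamp are then processed in order against that fixed target, and each satisfies the local inequality above.

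I would also check the following points:
\begin{itemize}[leftmargin=*]
\item A good representative whose page is already in $P$ costs nothing.
\item The initial caches coincide, so $\Psi_0=0$.
\item No per-window argument is needed, because the potential absorbs all discrepancies.
\end{itemize}
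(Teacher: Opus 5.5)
Your proof is correct. It reaches the bound by a different decomposition from the paper's.

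\textbf{The paper's route.} The paper first builds an explicit \emph{proactive} paging schedule that copies every replacement of $O$. At each bad representative $p@d$ it performs a two-move detour: load $p$ evicting some $q$, then reload $q$ evicting $p$. Its cache therefore equals $O^+(d)$ after every representative, and $M+2Z$ is read off directly as that schedule's cost. A separate discrepancy projection with potential $|A\setminus L|$ then converts this proactive schedule into a lazy classical one.

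\textbf{Your route.} You build the demand-paging schedule directly and amortize against $O$'s own cache with $\Psi=|P\setminus O|$. Each $O$ replacement raises $\Psi$ by at most one. A good-representative fault is paid by a unit drop in $\Psi$. A bad-representative fault costs one plus at most one unit of potential.

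\textbf{How the two relate.} Outside detours, the paper's proactive cache coincides with the current $O$ cache. So your potential is the paper's projection potential composed with its detour schedule. You have fused the two steps into a single telescoping sum and never need the intermediate proactive schedule.

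\textbf{What each buys.} The paper's version makes the constant $2$ a literal, pathwise detour cost that restores $O^+(d)$ exactly. Simultaneous bad representatives then concatenate trivially, and the paper reuses the generic lazy-projection inequality it applies elsewhere. Your version is shorter and gives a lazy witness directly, at the price of a purely amortized per-event accounting.

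\textbf{Details you got right.} You apply all of $O$'s actions at $t$ before the representatives at $t$, which matches the definition of badness via $O^+(t)$ and is exactly what is needed. The second inequality follows from Lemma~\ref{lem:bad-representatives} as you state.
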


\begin{proof}
Construct a proactive unit-cost paging schedule.  Between virtual requests it
mimics every replacement of $O$.  At time $d$, first mimic all actions of $O$
at that timestamp and then process simultaneous virtual requests in their
fixed order.  A good representative hits.  For a bad $p@d$, load $p$ while
evicting an arbitrary cached page $q$, serve $p$, and immediately reload $q$
while evicting $p$.  This two-move detour restores $O^+(d)$, so simultaneous
detours concatenate and the total cost is at most $M+2Z$.

For completeness, couple this proactive cache $A$ to a lazy cache $L$ and
put $\Phi=|A\setminus L|$.  A proactive unit swap changes $\Phi$ by at most
one.  At a virtual request served while $p\in A\setminus L$, equal cache
sizes provide a page $q\in L\setminus A$; the lazy swap $q\to p$ costs one
and decreases $\Phi$ by one.  If $p\in L$, the lazy schedule does nothing.
Summing over the ordered micro-actions and request-service instants gives
\[
  \operatorname{cost}(L)+\Phi_{\rm final}
  \le \operatorname{cost}(A)+\Phi_{\rm initial}.
\]
The caches agree initially and the terminal cache is free, so the ordinary
classical optimum is no larger than the proactive comparator.  Lemma
\ref{lem:bad-representatives} gives the second inequality.
The post-action cut in the definition of a bad representative is essential:
it includes a same-time load followed by eviction, which is charged to the
last present-to-absent transition in Lemma~\ref{lem:bad-representatives}.
\end{proof}

\subsection{Deterministic offline use}

\begin{theorem}[Polynomial-time offline approximation]
\label{thm:offline-five-approx}
For every finite page universe and every cache size $k$, offline paging with
per-replacement maximum delay admits a deterministic nonproactive
polynomial-time $5$-approximation.
\end{theorem}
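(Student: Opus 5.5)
The plan is to instantiate the common temporal reduction of Figure~\ref{fig:temporal-reduction} with an exact classical paging solver in the shadow layer. Fix $\theta>0$, to be chosen at the end, and build the virtual sequence $\sigma_\theta$ from the raw input. This is a single left-to-right pass over the $n$ arrival epochs and produces at most one representative per raw arrival, so $|\sigma_\theta|\le\sum_i|A_i|$. For the shadow algorithm $A$, use Belady's farthest-in-future rule on $\sigma_\theta$ with the common initial cache. It is lazy, runs in polynomial time, and is offline optimal for classical paging, so its fault count is $F=\OPT_{\rm pg}(\sigma_\theta)$. Then run AW-$A(\theta)$, that is, Algorithm~\ref{alg:aw}, to obtain a physical schedule. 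This is legitimate offline because the shadow is computed on the full virtual sequence beforehand, and the physical projection only replays it chronologically. The whole procedure is clearly polynomial in $n$, $m$, and $k$.

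For the cost bound, Lemma~\ref{lem:aw-physical} gives pathwise
\[
  \ALG\le(1+\theta)F=(1+\theta)\OPT_{\rm pg}(\sigma_\theta),
\]
and it also guarantees that every physical replacement is nonproactive and that every episode is cleared with delay at most $\theta$. Next, apply Lemma~\ref{lem:virtual-comparator} with $O$ an optimal delayed schedule of movement $M$ and holding $D$, so that $M+D=\OPT$. This gives $\OPT_{\rm pg}(\sigma_\theta)\le 2D/\theta+3M$. Combining the two bounds,
\[
  \ALG\le(1+\theta)\max\Big\{\tfrac{2}{\theta},3\Big\}(D+M).
\]
Choosing $\theta=2/3$ balances the two coefficients at $3$, and the factor becomes $(5/3)\cdot 3=5$.

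The only step needing care is that the comparator lemmas require no lazy or structural assumption on $O$. Lemma~\ref{lem:bad-representatives} is stated for any feasible delayed schedule, so taking an arbitrary optimum, or the lazy optimum from Lemma~\ref{lem:lazy-optimum}, is harmless. A second check is that Belady in the shadow layer is not the physical farthest-next-use rule refuted in Proposition~\ref{ex:delayed-belady-fails}. It acts only on the virtual instance, and the projection is what handles physical timing, so there is no conflict. I do not expect a real obstacle, since all the work was done in Lemmas~\ref{lem:aw-physical}--\ref{lem:virtual-comparator}; the proof is the balancing computation together with the observation about running time.
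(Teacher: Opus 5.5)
Your proposal is correct and follows the paper's own proof essentially step for step. Both build $\sigma_{2/3}$, run Belady as the shadow execution of AW-$A(2/3)$, obtain $\ALG\le\frac53F^*$ from Lemma~\ref{lem:aw-physical} and $F^*\le 3D^*+3M^*$ from Lemma~\ref{lem:virtual-comparator}, combine these into the factor $5$, and get polynomial running time from the bound on the number of representatives.
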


\begin{proof}
Construct $\sigma_{2/3}$ and compute a deterministic lazy classical paging
optimum on it, for example by Belady's farthest-in-future rule with fixed tie
breaking.  Denote its number of faults by
$F^*=\OPT_{\rm pg}(\sigma_{2/3})$, run it as the shadow execution of
AW-$A(2/3)$, and use the nonproactive projection of
Lemma~\ref{lem:aw-physical}.  This gives
\[
  \ALG_{\rm off}\le\frac53F^*.
\]
Apply Lemma~\ref{lem:virtual-comparator} to an optimal delayed schedule with
movement $M^*$ and delay $D^*$.  Since $2/(2/3)=3$,
\[
  F^*\le3D^*+3M^*=3\OPT.
\]
Consequently $\ALG_{\rm off}\le5\OPT$.

There are at most as many representatives as raw page occurrences.
Aggregation and event merging take polynomial time, and Belady's algorithm
can be implemented in $O(N\log(k+1))$ time after next occurrences are
precomputed, where $N=|\sigma_{2/3}|$.  Exact rational comparisons, or an
equivalent exact time-comparison oracle, handle arrivals at window endpoints.
\end{proof}

The approximation is nonproactive and does not round the marginal LP of
Section~\ref{sec:approximation}.  Its shadow cache is only an internal
computational state.

\begin{remark}[Limit of the present reduction]
The factor five is the best value obtainable by optimizing $\theta$ in the
proved coefficient bound
\[
  (1+\theta)
  \left(\frac{2D}{\theta}+3M\right).
\]
Indeed, its delay coefficient decreases and its movement coefficient
increases until $2/\theta=3$, namely $\theta=2/3$.  Any improvement through
this framework must therefore strengthen physical feasibility, the
bad-representative charge, or the two-move virtual comparator; retuning the
window length alone cannot improve five.
\end{remark}

\subsection{Randomized online use}

\begin{theorem}[Oblivious aggregation reduction]
\label{thm:aggregation-reduction}
Suppose a lazy classical paging algorithm $A$ is $\alpha$-competitive against
the adversary that fixes $\sigma_\theta$.  For every feasible delayed
comparator $O$ with movement $M$ and holding cost $D$, running the
nonproactive shadow projection after the above aggregation gives
\[
  \E[\ALG]
  \le
  \alpha(1+\theta)
  \left(\frac{2D}{\theta}+3M\right)
\]
against an oblivious raw-input adversary.
\end{theorem}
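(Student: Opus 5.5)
The plan is to compose three facts already available: the pathwise projection bound of Lemma~\ref{lem:aw-physical}, the competitiveness of $A$ on the fixed virtual sequence, and the virtual comparator bound of Lemma~\ref{lem:virtual-comparator}. The logical order matters only in one place. I would first fix an oblivious raw input $I$ and note that $\sigma_\theta$ is a deterministic function of $I$ and $\theta$, as stated after the definition of the aggregation windows. In particular, $\sigma_\theta$ does not depend on the physical cache, the shadow cache, or any coin of $A$. This is the step I regard as the main conceptual obstacle, since it is exactly what a literal delayed Marker lacks. Here it is immediate from construction: window anchors and deadlines are determined by raw arrival times per page, and simultaneous representatives use the fixed page order. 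Hence feeding $\sigma_\theta$ to $A$ is a legitimate oblivious classical instance, and $A$'s $\alpha$-competitiveness applies:
\[
  \E[F]\le \alpha\,\OPT_{\rm pg}(\sigma_\theta),
\]
where $F$ is the number of shadow faults. If $A$'s guarantee were stated with an additive constant, that constant would carry through, but the hypothesis is strict $\alpha$-competitiveness.

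Second, I would apply Lemma~\ref{lem:aw-physical} pathwise, for every coin outcome $\omega$, to get
\[
  \ALG(I,\omega)\le (1+\theta)F(\omega).
\]
This needs $A$ to be lazy only so that $F$ counts the shadow replacements used in the potential argument $g_j+\Delta\Phi\le f_j$; a shadow hit changes nothing. Taking expectations over $\omega$ with $I$ fixed gives
\[
  \E[\ALG]\le(1+\theta)\E[F]\le \alpha(1+\theta)\OPT_{\rm pg}(\sigma_\theta).
\]

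Third, Lemma~\ref{lem:virtual-comparator}, applied to the given feasible delayed comparator $O$, yields
\[
  \OPT_{\rm pg}(\sigma_\theta)\le 2D/\theta+3M.
\]
This bound is deterministic and independent of $\omega$, so substituting it completes the chain. A final remark I would include: no noncausal information is used. The windows close at deadlines that are known online, and the projection acts only at representatives, so the resulting algorithm is online. The expectation is over $A$'s coins alone, because the raw input and hence $\sigma_\theta$ are fixed in advance.
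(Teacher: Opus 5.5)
Your proposal is correct and follows essentially the same route as the paper. Because $\sigma_\theta$ is input-only, $A$'s classical guarantee applies without conditioning on its cache. You then apply Lemma~\ref{lem:aw-physical} pathwise, take expectations, and chain the result with Lemma~\ref{lem:virtual-comparator} for the given comparator $O$.
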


\begin{proof}
The virtual sequence is a deterministic function of the raw input, so the
classical guarantee applies without conditioning on the algorithm's random
cache.  Combine Lemmas~\ref{lem:aw-physical} and
\ref{lem:virtual-comparator}.  The argument holds for every feasible $O$;
minimizing the right-hand side gives the corresponding optimal-comparator
bound.
\end{proof}

McGeoch and Sleator's partitioning algorithm is a lazy classical paging
algorithm and, for every fixed request sequence $\sigma$ with a common
initial cache, satisfies the strict guarantee
\begin{equation}
  \label{eq:partition-classical}
  \E[F_{\rm Partition}(\sigma)]
  \le H_k\OPT_{\rm pg}(\sigma).
\end{equation}
The expectation is only over the algorithm's random choices; there is no
additive term \cite{mcgeoch1991strongly}.

\begin{corollary}[General-universe randomized bound]
\label{cor:general-partition}
For every finite page universe, every $k\ge1$, and every number of holes,
AW-Partition$(2/3)$ satisfies
\[
  \E[\ALG]\le 5H_k\OPT
\]
against an oblivious adversary.
\end{corollary}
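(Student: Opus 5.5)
The plan is to instantiate Theorem~\ref{thm:aggregation-reduction} with $A$ equal to McGeoch and Sleator's partitioning algorithm, $\alpha=H_k$, and $\theta=2/3$. The first step is to check the hypothesis of that theorem. Partition is a lazy classical paging algorithm, and the raw input is fixed obliviously. By construction, $\sigma_{2/3}$ is a deterministic function of the raw input alone and does not depend on any cache state or coin. Hence the adversary that fixes $\sigma_{2/3}$ is a legitimate oblivious adversary for Partition. Since the shadow cache starts at the common initial cache $C_0$, the strict guarantee~\eqref{eq:partition-classical} applies to $\sigma_{2/3}$ with no additive term.

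Next I would combine the pieces pathwise and then take expectations. For every coin outcome, Lemma~\ref{lem:aw-physical} gives $\ALG\le(1+\theta)F$, where $F$ is the number of shadow faults of Partition on $\sigma_{2/3}$. Taking expectations over the coins and applying~\eqref{eq:partition-classical} yields
\[
  \E[\ALG]\le \tfrac53\,\E[F]\le \tfrac53 H_k\OPT_{\rm pg}(\sigma_{2/3}).
\]
I would then apply Lemma~\ref{lem:virtual-comparator} to an optimal delayed schedule $O^*$ with movement $M^*$ and delay $D^*$. Because $2/(2/3)=3$, this gives
\[
  \OPT_{\rm pg}(\sigma_{2/3})\le 3D^*+3M^*=3\OPT.
\]
Multiplying through gives $\E[\ALG]\le 5H_k\OPT$.

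Lemma~\ref{lem:virtual-comparator} needs only a feasible comparator, so $O^*$ may be any optimum, proactive or not. Alternatively, Lemma~\ref{lem:lazy-optimum} would supply a lazy optimum if one preferred.

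The only delicate point is the obliviousness and independence issue, which the aggregation was designed to resolve. I would stress two facts here. First, the randomness of Partition enters only through the shadow cache. Second, the physical projection never feeds back into $\sigma_{2/3}$. Together these justify applying the classical expectation bound for the fixed sequence $\sigma_{2/3}$ without any conditioning. The remaining bookkeeping is routine: simultaneous representatives, $k=1$ (where $H_1=1$), and an arbitrary number of holes are all already covered by the lemmas.
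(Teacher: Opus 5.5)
Your proposal is correct and matches the paper's proof. The paper also instantiates Theorem~\ref{thm:aggregation-reduction} with Partition and $\alpha=H_k$, using that $\sigma_{2/3}$ is fixed by the raw input before any coins, so that Lemma~\ref{lem:aw-physical} and Lemma~\ref{lem:virtual-comparator} at $\theta=2/3$ give $(1+\theta)\cdot 2/\theta=3(1+\theta)=5$. Your unpacking into the two lemmas, and your remark that the comparator need not be lazy, agree with that theorem's proof and its statement for every feasible $O$.
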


\begin{proof}
The raw input fixes $\sigma_{2/3}$ before the partitioning algorithm's random
choices, so~\eqref{eq:partition-classical} supplies $\alpha=H_k$ in
Theorem~\ref{thm:aggregation-reduction}.  At $\theta=2/3$,
\[
  (1+\theta)\frac{2}{\theta}=5
  \qquad\text{and}\qquad
  3(1+\theta)=5,
\]
which gives the claim without an additive constant.
\end{proof}

\paragraph{A simpler Marker version.}
For Marker, the phases are the maximal consecutive blocks of
$\sigma_\theta$ containing at most $k$ distinct pages.  These, rather than
phases of the uncompressed raw arrivals, are the phase and clean-page notions
used below.  If $P_i$ is the distinct-page set of phase $i$, define
$c_1=|P_1\setminus C_0|$ and
$c_i=|P_i\setminus P_{i-1}|$ for $i\ge2$, and put $H_0=0$.

For completeness, the classical clean-page lemma gives Marker
\[
  \E[F]\le 2H_k\OPT_{\mathrm{pg}}(\sigma_\theta).
\]
Indeed, if $c_i$ pages are clean in virtual phase $i$, deferred decisions on
the unmarked old pages give expected phase cost at most
\[
  c_i+c_i(H_k-H_{c_i})\le c_iH_k.
\]
For any paging schedule with phase costs $f_i$, the union of two consecutive
full phase sets has size $k+c_i$, so $c_1\le f_1$ and
$c_i\le f_{i-1}+f_i$ for $i\ge2$.  Therefore
$\sum_i c_i\le2\sum_i f_i$.  The common initial cache removes the usual
first-phase additive term.

\begin{corollary}[Explicit Marker bound]
\label{cor:general-marker}
For every finite page universe, every $k\ge1$, and every number of holes,
AW-Marker$(2/3)$ satisfies $\E[\ALG]\le 10H_k\OPT$ against an oblivious
adversary.
\end{corollary}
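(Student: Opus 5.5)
The plan is to instantiate Theorem~\ref{thm:aggregation-reduction} with $A=$ Marker and $\theta=2/3$, exactly as Corollary~\ref{cor:general-partition} does for Partition. Only the classical constant changes, from $H_k$ to $2H_k$.

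First we check the hypotheses. Marker is lazy: it loads a page only on a fault for that page and evicts only then. The virtual sequence $\sigma_{2/3}$ is a deterministic function of the raw input. Hence an oblivious raw-input adversary induces a fixed classical request sequence, chosen before Marker's coins. In particular, Marker's guarantee applies to $\sigma_{2/3}$ without conditioning on the random cache.

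Second, we establish the strict classical bound $\E[F]\le 2H_k\OPT_{\rm pg}(\sigma_{2/3})$ with no additive term, using the phase decomposition recorded just before the corollary.

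\begin{itemize}
\item Per-phase cost. The expected cost of phase $i$ is at most $c_i+c_i(H_k-H_{c_i})\le c_iH_k$. This is the deferred-decision argument on unmarked old pages: each clean page costs one, and the $j$th requested stale page is missing with probability at most $c_i/(k-j+1)$ or so.
\item Comparison with any schedule. Every paging schedule with phase costs $f_i$ satisfies $c_1\le f_1$ and $c_i\le f_{i-1}+f_i$. For $c_1$, the schedule starts from the common cache $C_0$ and must load the $c_1$ pages of $P_1\setminus C_0$. For $i\ge2$, the union $P_{i-1}\cup P_i$ has $k+c_i$ distinct pages, so at least $c_i$ faults fall in these two phases.
\item Summing. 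We get $\sum_i c_i\le 2\sum_i f_i=2\OPT_{\rm pg}$, hence $\E[F]\le H_k\sum_i c_i\le 2H_kc\OPT_{\rm pg}$.
\end{itemize}

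The point needing care is that the initial phase uses $C_0$ rather than an arbitrary start, which is what removes the usual additive constant. A possibly incomplete last phase is harmless, since both the per-phase bound and the counting inequality remain valid for it.

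Finally, we apply Theorem~\ref{thm:aggregation-reduction} with $\alpha=2H_k$ to an optimal delayed schedule, which we may take to be lazy by Lemma~\ref{lem:lazy-optimum}. With $D^*$ and $M^*$ its holding and movement costs, this gives
\[
\E[\ALG]\le 2H_k\cdot\tfrac53\,(3D^*+3M^*)=10H_k\OPT .
\]
Here we used $(1+\theta)\cdot 2/\theta=5$ and $3(1+\theta)=5$ at $\theta=2/3$.

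The main obstacle is the strictness of Marker's classical bound. It requires that phases and clean pages be defined on $\sigma_{2/3}$ itself, including simultaneous representatives taken in the fixed page order, and that the first-phase inequality $c_1\le f_1$ be justified via the shared initial cache. All other steps are direct substitutions into results already proved.
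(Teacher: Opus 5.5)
Your proposal is correct and follows the paper's route: the paper also proves the strict classical bound $\E[F]\le 2H_k\OPT_{\mathrm{pg}}(\sigma_\theta)$ by the clean-page lemma on virtual phases, using $c_1\le f_1$ from the common initial cache and $c_i\le f_{i-1}+f_i$, and then substitutes $\alpha=2H_k$ and $\theta=2/3$ into Theorem~\ref{thm:aggregation-reduction}. The only blemish is a stray factor in ``$2H_kc\OPT_{\rm pg}$'', which should read $2H_k\OPT_{\rm pg}$.
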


\begin{proof}
Use $\alpha=2H_k$ in Theorem~\ref{thm:aggregation-reduction}.  At
$\theta=2/3$, the coefficients of $D$ and $M$ both become $10H_k$.
\end{proof}

\begin{remark}[Why the shadow cache matters]
If a page was hit early in its aggregation window but evicted before the
representative, the representative can fault in the shadow cache while no
physical request is pending.  The algorithm still processes that shadow
fault but performs no physical replacement.  Cancelling the representative
in the shadow execution would make the processed virtual stream depend on
earlier random evictions and would bring back the adaptive-sequence
problem.  The projection lemma avoids proactive physical actions without
making this cancellation.
\end{remark}

\begin{remark}[Why direct multi-hole service phases fail]
With $k\ge3$ and at least $k+1$ initially missing pages requested in one
batch, their common deadline produces a completed reset block containing
$k$ deterministic services, already larger than $H_k$.  Reset windows may
also have unbounded overlap when many holes share a deadline.  Virtual phases
avoid both failures: their harmonic lemma scales with the number of clean
representatives, and their sequence is fixed before the victim coins.
\end{remark}

\subsection{Randomized lower-bound order}

\begin{proposition}
Already on a universe of $k+1$ pages, every randomized online algorithm has
competitive ratio $\Omega(H_k)$ against an oblivious adversary.
\end{proposition}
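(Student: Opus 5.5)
We plan to transfer the classical randomized lower bound of Fiat et al.\ via Yao's principle, exactly as in the deterministic proposition, using long time gaps so that waiting is costly and carries no information. On the universe of $k+1$ pages with common initial cache, we take the standard hard distribution: a sequence of classical requests $x_1,x_2,\ldots,x_L$, each drawn uniformly from the $k+1$ pages (or from the pages other than the previous one), and timed as $x_j@jT$ with a large gap $T$ (say $T\ge 2$). Yao's principle reduces the statement to showing that every \emph{deterministic} online algorithm has expected cost $\Omega(H_k)$ times the expected delayed optimum on this distribution, with the ratio not absorbed by any additive constant because $L$ can be taken arbitrarily long.

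For the online side, we argue that on this input any deterministic delayed algorithm induces a classical paging algorithm whose fault count is at most its delayed cost. By Lemma~\ref{lem:lazy-optimum} we may assume the algorithm is nonproactive (the projection is causal, so it preserves online-ness and does not increase cost). A request $x_j$ that misses creates a pending episode; if it is not served before $(j+1)T$ it pays holding at least $T\ge1$, which already matches a movement. If it is served before the next arrival, the algorithm behaves like a classical algorithm on the prefix. Thus we define a classical online algorithm that, on each miss, mimics the delayed algorithm's victim choice if service happens within the gap, and otherwise makes an arbitrary fixed choice; each classical fault is charged to at least one unit of delayed cost (a movement or $\ge T$ holding). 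Some care is needed when a delayed service occurs after later arrivals have been processed: there we charge the fault to the holding of length at least $T$, and the subsequent divergence of caches is handled by letting the classical algorithm re-synchronise via a bounded number of extra swaps per such episode, each also covered by the $T$ holding if $T$ is chosen as a large constant multiple of $k$ or simply by charging the classical cost $\le k$ per long episode against holding $T\ge k$. Hence $\E[\ALG_{\rm delay}]\ge \Omega(1)\cdot\E[F_{\rm classical}]\ge \Omega(H_k)\cdot(\text{expected number of phases})$ by the classical coupon-collector phase argument.

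For the offline side, the delayed optimum can simply serve every request immediately following the classical optimum (Belady), so $\OPT\le\OPT_{\rm pg}$, which is at most one fault per phase on this distribution. Combining gives the $\Omega(H_k)$ ratio. The main obstacle is the re-synchronisation step in the online reduction: making rigorous that a delayed algorithm serving late cannot gain more than a constant factor over the induced classical algorithm. I would first try choosing $T=k$, so that any episode crossing an arrival pays holding at least $k$, which dominates the at most $k$ resynchronising classical swaps, keeping the overall charge constant.
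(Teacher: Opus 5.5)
Your outline matches the paper's: Yao's principle over a classical hard distribution with requests spaced by long gaps, and passage to the causal lazy projection $B'$ of a deterministic delayed algorithm. You then build an induced classical algorithm that copies $B'$'s victim whenever $B'$ clears the miss before the next arrival; to decide this at the fault it simulates $B'$ through the known empty gap, as the paper notes. The offline side is $\OPT_{\rm delay}\le\OPT_{\rm pg}$ by serving immediately.

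\textbf{The gap: crossing episodes.} The step that does not work as written is your treatment of crossing episodes. You charge ``at most $k$ resynchronising swaps per long episode'' against holding $T\ge k$. But one pending episode of $B'$ can span arbitrarily many gaps. On the uniform distribution, every classical algorithm faults on each request with probability $1/(k+1)$. So the classical cost incurred during a single such episode grows with the number of gaps it spans, and admits no per-episode bound.

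\textbf{The repair: charge per gap.} With $k+1$ pages, a lazy $B'$ has at most one pending page. Suppose your algorithm restores $C_A=C_{B'}$ at the end of every gap. It then pays at most two swaps per gap: the forced load of $B'$'s pending hole, plus one swap back if $B'$ has not served it by the next arrival. Meanwhile, each gap fully spanned by a $B'$-episode contributes holding $T$, and every other swap is matched by a movement of $B'$. Any constant $T\ge2$ therefore yields $\operatorname{cost}(A)\le\operatorname{cost}(B')$ pathwise. The factor $k$ in your choice $T=k$ plays no role.

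\textbf{How the paper avoids this.} The paper sidesteps resynchronisation entirely. It takes a finite-support hard distribution whose sequences have length at most $L$, and sets $T=L+1$. At the first crossing, the induced algorithm switches permanently to a fixed lazy rule. Its at most $L$ remaining faults are dominated by the holding $T>L$ of that one crossing episode, so $\operatorname{cost}(A_{B'})\le\operatorname{cost}(B')$ is immediate.

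\textbf{Comparison.} Your repaired per-gap version buys a gap length independent of $L$, i.e.\ one timing scale for all sequence lengths, at the cost of maintaining the synchronisation invariant. The paper's version needs no invariant and applies verbatim to any bounded-length classical hard distribution.
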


\begin{proof}
Use a finite-support classical paging lower-bound distribution in which every
sequence has length at most $L$ and every deterministic classical algorithm
has expected cost at least $cH_k$ times the expected classical optimum, for an
absolute constant $c>0$ \cite{fiat1991competitive}.  Place its consecutive
requests at times $0,T,2T,\ldots$ with $T=L+1$.

Fix a deterministic delayed algorithm $B$, and let $B'$ denote its causal lazy
projection from Lemma~\ref{lem:lazy-optimum}.  Construct a deterministic
classical algorithm $A_{B'}$ as follows.  At a classical fault at epoch $jT$,
process the request at $jT$ and then simulate $B'$ through the empty time
until $(j+1)T$.  If $B'$ clears the miss before that endpoint, make its
victim choice immediately in the classical cache.  Otherwise make any fixed
lazy victim choice and permanently switch to a fixed lazy paging rule.  This
simulation uses the known gap $T$ but not the identity of the next page; a
service exactly at $(j+1)T$ is crossing under the arrival-first convention.

Before the first crossing, $A_{B'}$ and $B'$ make the same movements.  On a
crossing path, the delayed episode alone contributes $T>L$
holding, whereas from that request onward the classical fallback incurs at
most $L$ faults.  Therefore, pathwise,
\[
  \operatorname{cost}(A_{B'})
  \le \operatorname{cost}(B')
  \le \operatorname{cost}(B).
\]
Taking expectations and applying the finite-support
classical lower bound gives
\[
  \E[\ALG_{\rm delay}]
  \ge \E[\ALG_{\rm pg}]
  \ge cH_k\,\E[\OPT_{\rm pg}].
\]
Serving every classical fault immediately is feasible in the delayed model,
so $\OPT_{\rm delay}\le\OPT_{\rm pg}$ for each sequence.  Yao's principle now
gives the claimed $\Omega(H_k)$ lower bound against an oblivious adversary.
\end{proof}

\section{Weighted Fetch Costs}
\label{sec:weighted}

We now allow pages to have different fetch costs.  Each page $p$ has a
weight $w_p>0$, known to the algorithm.  Loading $p$, whether to clear a
pending episode or proactively, costs $w_p$; an episode served at time $s$
still contributes the unweighted age $s-a_p$.  Thus a lazy service of $p$
costs
\[
  w_p+(s-a_p).
\]
As in the unit-cost model, time is measured in cost-normalized units: one
unit of page-episode age contributes one unit to the objective, and $w_p$ is
expressed on the same scale.
Let $M_w$ denote the sum of fetch costs, keep $D$ for the episode-delay
cost, and let $\OPT_w$ denote the weighted delayed optimum.  Define
\[
  w_{\min}=\min_{p\in\Pages}w_p,
  \qquad
  w_{\max}=\max_{p\in\Pages}w_p,
  \qquad
  \rho=\frac{w_{\max}}{w_{\min}}.
\]
The unweighted model is the case $w_p=1$ for every page.

This is the standard fetch-cost convention for weighted paging
\cite{young1994kserver,bansal2007weighted}.  Charging the evicted page
instead changes the cost of a trajectory by a boundary term involving the
initial and final caches.  To avoid importing such a boundary term into the
stated guarantees, we keep the fetch-cost convention throughout this section.

The main message is two-sided.  The structural results and the exact
fixed-number-of-holes algorithms survive without a loss.  For arbitrary page
sets, the
temporal reduction also survives, but its guarantees depend on the weight
spread $\rho$.  A strict spread-dependent lower bound appears already when
the cache has one slot.

\subsection{Weighted laziness and exact fixed-number-of-holes algorithms}

The one-sided discrepancy used in Section~\ref{sec:model} has a direct
weighted form.

\begin{lemma}[Weighted causal lazy projection]
\label{lem:weighted-lazy}
Every feasible weighted schedule $O$ has a causal nonproactive projection
$L$ such that
\[
  M_w(L)\le M_w(O),
  \qquad
  D(L)\le D(O).
\]
Consequently, proactive loading cannot improve the weighted offline optimum
or a competitive guarantee against an oblivious input.
\end{lemma}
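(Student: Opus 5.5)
The plan is to rerun the construction of Lemma~\ref{lem:lazy-optimum} verbatim, with the same shadow/physical coupling, the same batch-time and load-time repairs, and the same fixed page order for choosing repairs and victims. The only change is to the potential. Replace $|S\setminus C|$ by the weighted discrepancy
\[
  \Phi_w(S,C)=\sum_{p\in S\setminus C}w_p .
\]
The repair rule stays as before: whenever a page $p$ is physically pending and lies in $S\setminus C$, load $p$ and evict some $q\in C\setminus S$. Equal cache sizes guarantee that such a $q$ exists. This repair costs $w_p$ and lowers $\Phi_w$ by exactly $w_p$, because $p$ leaves $S\setminus C$ and $q\notin S$ leaves the difference unaffected. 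Hence every batch-time repair satisfies $g+\Delta\Phi_w=0$.

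The key local step concerns a shadow replacement of $O$ that loads $p$ and evicts $v$, at cost $f=w_p$. Removing $v$ from $S$ can only decrease $\Phi_w$. Adding $p$ to $S$ raises $\Phi_w$ by at most $w_p$, and only when $p\notin C$. If $p$ is physically pending after this update, the paper's rule immediately repairs $p$ at cost $w_p$, which removes the same $w_p$ again. So in each case $g+\Delta\Phi_w\le f$. The two cases are:
\begin{itemize}
  \item $p\in C$: no repair occurs, and $\Delta\Phi_w\le 0$.
  \item $p\notin C$ and $p$ not pending: no repair occurs, and $\Delta\Phi_w\le w_p$.
\end{itemize}
The caches agree initially and $\Phi_w\ge 0$ at the end. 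Telescoping therefore gives $M_w(L)+\Phi_{w,\rm final}\le M_w(O)$. The main point to verify is that the increase term is charged to the page loaded by $O$, not to the evicted page. This is exactly why the section adopts the fetch-cost convention, and it is where the argument would break under the eviction-cost convention, since that introduces a boundary term.

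For delay, the injection argument of Lemma~\ref{lem:lazy-optimum} transfers unchanged, because episode ages are unweighted and the repair timing rule does not depend on the weights. A positive-length physical episode of $p$ starting at $a$ forces $p\notin S$ at $a$. The shadow load of $p$ then ends both that episode and the corresponding shadow episode at the same time $s$. Distinct physical episodes map to distinct shadow episodes. This yields $D(L)\le D(O)$, and it also shows that $L$ is feasible.

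Causality holds as before, since every repair depends only on the current batch, the current shadow micro-action and the two current caches. For randomized $O$ the inequalities hold pathwise, so they also hold in expectation against an oblivious input. Applying the projection to a weighted optimum, or to any online algorithm, shows that proactive loading does not help. I expect no real obstacle beyond the weighted potential bookkeeping and the free-terminal-cache remark.
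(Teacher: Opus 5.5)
Your proposal is correct and follows essentially the same route as the paper's proof. You use the same shadow/physical coupling with the weighted discrepancy $\Phi_w(S,C)=\sum_{p\in S\setminus C}w_p$, the local inequality $g+\Delta\Phi_w\le f$ for shadow loads with a zero net change for batch-time repairs, telescoping from equal initial caches, and the delay injection of Lemma~\ref{lem:lazy-optimum}, which carries over unchanged because episode ages are unweighted.
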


\begin{proof}
Use $O$ as a shadow execution with cache $S$, maintain a physical cache $C$,
and set
\[
  \Phi_w(S,C)=\sum_{p\in S\setminus C}w_p.
\]
After a raw batch, if a physically pending page $p$ belongs to $S\setminus
C$, choose any $q\in C\setminus S$ and physically replace $q$ by $p$.
This repair costs $w_p$ and decreases $\Phi_w$ by exactly $w_p$, so its cost
and potential change sum to zero.

Next consider one shadow replacement that loads $p$.  It costs $w_p$ and,
before any physical repair, can increase $\Phi_w$ by at most $w_p$.  If $p$
is physically pending after the shadow update, then $p\in S\setminus C$;
the same repair costs $w_p$ and decreases $\Phi_w$ by exactly $w_p$.
Therefore, for every ordered shadow micro-action,
\[
  g+\Delta\Phi_w\le f,
\]
where $f=w_p$ is its shadow fetch cost and $g\in\{0,w_p\}$ is the physical
fetch cost.  Batch-time repairs satisfy the same inequality with right-hand
side zero.  The caches agree initially and the final potential is
nonnegative, so telescoping gives $M_w(L)\le M_w(O)$.

The delay comparison from Lemma~\ref{lem:lazy-optimum} is unchanged.  Every
positive-length physical $p$-episode starts no earlier and ends at the same
shadow load as a distinct shadow $p$-episode.  Its length is therefore no
larger, while zero-length batch repairs add no delay.  The construction uses
only the current batch, the current shadow action, and the two current
caches, so it is causal.
The arrival-first, same-time, and terminal cases are exactly those recorded
in Remark~\ref{rem:projection-boundaries}; weights change only the movement
potential.
\end{proof}

The exact dynamic programs also extend by changing only the service-edge
cost.

\begin{theorem}[Exact weighted fixed-number-of-holes algorithms]
\label{thm:weighted-fixed-hole}
The weighted offline optimum can be computed in $O(nk)$ time and $O(k)$
working space when $m=k+1$.  For general $r=m-k$, it can be computed in
\[
  O\!\left(n\binom mr2^r(1+rk)\right)
\]
arithmetic operations and $O\!\left(\binom mr2^r\right)$ numeric-label
working space in the unit-cost indexed-state model of
Theorem~\ref{thm:general-dp}.  Thus the weighted problem is polynomial-time
solvable for every fixed $r$; an explicit representation incurs the same set
and indexing overhead described there.
\end{theorem}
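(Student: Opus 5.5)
The plan is to reduce to the unit-cost arguments of Theorems~\ref{thm:one-hole-dp} and~\ref{thm:general-dp}, changing only the cost attached to a service edge. First, Lemma~\ref{lem:weighted-lazy} lets us restrict to nonproactive weighted schedules without increasing either $M_w$ or $D$. Lemma~\ref{lem:arrival-layer-normal-form} transfers verbatim, because it moves whole action words, leaves the movement multiset and the caches seen by batches unchanged, and only shortens service ages. Lemma~\ref{lem:holding} concerns only delay, which stays unweighted. So the event-layer state $(H,B)$ with holding increment $|B|\Delta_i$ remains exact, as does the arrival map $B\leftarrow B\cup(A_i\cap H)$.

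For general $r$, I would keep the same recurrence, but let the service edge~\eqref{eq:general-service-edge} loading $p$ cost $w_p$ instead of one. The correctness argument of Theorem~\ref{thm:general-dp} is unchanged. Every lazy replacement is exactly one edge and conversely, and the closure remains a DAG because $|B|$ strictly decreases. Relaxing in decreasing $|B|$ therefore computes shortest zero-time service paths under arbitrary nonnegative edge weights. The state count, the edge count $rk$ per state, and the space bound are identical, so the stated bounds follow, with edge costs assumed to be unit-cost arithmetic operations.

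For $m=k+1$, I would modify the intercept implementation of Theorem~\ref{thm:one-hole-dp}. Loading the pending hole $h$ while evicting $v$ now costs $w_h$, so the update becomes
\[
  U_v\leftarrow\min\Bigl\{U_v,\;t_i+\min_{h\ne v}\bigl(P_h+w_h\bigr)\Bigr\}.
\]
This is the main point needing care: the exclusion minimum must be taken over the shifted keys $P_h+w_h$, not over $P_h$ and then adding a weight. The weight belongs to the loaded page $h$, not to the victim $v$, so it can be folded into the key before minimizing. Maintaining the minimum and second minimum of $P_h+w_h$ over a common post-arrival snapshot then evaluates every exclusion in $O(k)$ per layer. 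This gives $O(nk)$ time and $O(k)$ space, with at most one service per epoch as before, since laziness and one hole allow only one pending page.

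Finally, polynomiality for fixed $r$ and the explicit-representation overhead are inherited word for word from Theorem~\ref{thm:general-dp}.
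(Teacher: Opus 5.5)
Your proposal is correct and follows the paper's proof essentially step for step. It uses the same $(H,B)$ recurrence, with each service edge now costing $w_p$ and justified by Lemma~\ref{lem:weighted-lazy}, and the same one-hole intercept update, with the minimum and second minimum taken over the shifted keys $P_h+w_h$. Your explicit check that Lemma~\ref{lem:arrival-layer-normal-form} survives weighting is a detail the paper leaves implicit. The paper, for its part, adds that for rational data the label arithmetic has polynomial bit complexity; this is what supports the ``polynomial-time for fixed $r$'' claim outside the unit-cost model, and you should state it rather than rely on inheritance alone.
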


\begin{proof}
The holding increment over a gap of length $\Delta$ remains $|B|\Delta$.
With one hole $h$, serving its pending episode and creating hole $v\ne h$
costs $w_h$.  In the intercept implementation of
Theorem~\ref{thm:one-hole-dp}, the service update becomes
\[
  U_v\leftarrow
  \min\left\{U_v,\ t_i+\min_{h\ne v}(P_h+w_h)\right\}.
\]
The minimum and second minimum of the values $P_h+w_h$ evaluate all
exclusions in $O(k)$ time per arrival layer.

For general $r$, the state is still $(H,B)$.  A service of $p\in B$ that
evicts $v\notin H$ follows the same edge
\[
  (H,B)\longrightarrow (H-p+v,B-p),
\]
now with cost $w_p$.  Each edge decreases $|B|$, so the same zero-time DAG
and the same state and edge counts apply.  Lemma~\ref{lem:weighted-lazy}
shows that these lazy paths contain an optimum.  For rational input data,
the additions and comparisons have polynomial bit complexity.
\end{proof}

The reachability conditions of Proposition~\ref{prop:same-epoch-closure}
are unchanged.  Moreover, every action word joining $(H,B)$ to $(H',B')$
loads exactly the pages in $B\setminus B'$, and hence has weighted movement
$\sum_{p\in B\setminus B'}w_p$, independent of its micro-order.

\subsection{A spread-sensitive temporal reduction}

Fix $\theta>0$.  The first raw arrival to $p$ not covered by an open window
anchors the page-specific window
\[
  [a_p,d_p]=[a_p,a_p+\theta w_p].
\]
As in Section~\ref{sec:online-general-randomized}, all later $p$-arrivals
through the right endpoint join the window, and one virtual request $p@d_p$
is output after the complete raw batch at $d_p$.  Simultaneous virtual
requests use a fixed page order.  The resulting sequence is determined by
the timed input and the weights, before any cache choices or random bits.

Run a lazy classical weighted-paging algorithm $A$ on this virtual sequence
in a shadow cache.  At a representative $p@d_p$, perform a physical action
only when $p$ is currently pending; then load $p$ and evict a fixed page in
$C\setminus S$, exactly as in Algorithm~\ref{alg:aw}.  Let $F_w$ denote the
total shadow fetch cost and let $G_w$ denote the total physical fetch cost.

\begin{lemma}[Weighted pending-only projection]
\label{lem:weighted-physical}
Pathwise,
\[
  G_w\le F_w,
  \qquad
  \ALG_w\le(1+\theta)G_w\le(1+\theta)F_w.
\]
Every physical replacement is nonproactive.
\end{lemma}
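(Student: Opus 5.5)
The plan is to mirror the proof of Lemma~\ref{lem:aw-physical}, replacing the cardinality potential by the weighted discrepancy $\Phi_w(S,C)=\sum_{p\in S\setminus C}w_p$ from Lemma~\ref{lem:weighted-lazy}. The proof has two parts: a movement comparison $G_w\le F_w$, and a per-service cost bound using the page-specific window length $\theta w_p$.

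\textbf{Movement comparison.} Process all representatives in their fixed order, including simultaneous ones, and track $\Phi_w$ across each representative $p@d_p$.
\begin{itemize}
\item \emph{Shadow step.} If the shadow algorithm $A$ hits, $S$ is unchanged. If it faults, it loads $p$ at cost $f=w_p$ and evicts some page. Removing a page from $S$ can only remove a term from $\Phi_w$. Adding $p$ increases $\Phi_w$ by at most $w_p$, and only when $p\notin C$. Since $A$ is lazy, only the requested page $p$ is loaded, so no other weight enters.
\item \emph{Physical step.} If $p$ is physically pending after the shadow update, then $p\notin C$ and $p\in S$. Equal cache sizes give some $q\in C\setminus S$. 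Swapping $q\to p$ costs $g=w_p$. It removes the term $w_p$ from $\Phi_w$, and inserting $q$ into the hole set adds nothing, because $q\notin S$.
\end{itemize}
Hence $g_j+\Delta\Phi_w\le f_j$ at every representative. The caches agree initially and $\Phi_w\ge0$, so telescoping gives $G_w\le F_w$.

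\textbf{Service and cost bound.} I would reuse the argument of Lemma~\ref{lem:aw-physical} essentially verbatim. Physical loads of $p$ occur only at $p$-representatives, and successive $p$-windows are separated strictly after each deadline. Therefore every physical $p$-episode has its oldest request inside the current window $[a_p,a_p+\theta w_p]$. That episode stays pending until $d_p$ and is cleared there, so its delay is at most $\theta w_p$.

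Each physical replacement thus costs at most $w_p+\theta w_p=(1+\theta)w_p$. Summing over replacements gives $\ALG_w\le(1+\theta)G_w$. Nonproactivity is immediate, since a swap happens only for a pending page. Completeness of service follows as before: every raw miss lies in some window and is cleared at its deadline, and the boundary conventions of Remark~\ref{rem:aggregation-boundaries} carry over unchanged.

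\textbf{Main point of care.} The delicate step is the shadow-fault accounting. I must check that the increase of $\Phi_w$ is bounded by the weight of the \emph{loaded} page, which is exactly the fetch-cost convention, and not by the weight of the evicted page. The physical repair must likewise be paid by the same $w_p$. Because both the shadow cost and the repair are indexed by the same page $p$, the weighted inequality closes with no $\rho$ loss here. Spread-dependence enters only later, through the comparison with the delayed optimum.
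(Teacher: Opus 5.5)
Your proposal is correct and follows the paper's proof essentially step for step. You use the same weighted discrepancy $\Phi_w(S,C)=\sum_{p\in S\setminus C}w_p$, the same per-representative inequality $g+\Delta\Phi_w\le f$ with telescoping, and the same window argument bounding each physical episode's age by $\theta w_p$; your remark that both the shadow fetch and the physical repair are indexed by the same page $p$, so no $\rho$ enters here, also matches the paper's own observation that spread arises only in the comparison with the delayed schedule.
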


\begin{proof}
Use $\Phi_w(S,C)=\sum_{p\in S\setminus C}w_p$.  A shadow fault that fetches
$p$ costs $w_p$ and increases $\Phi_w$ by at most $w_p$.  If $p$ is
physically pending after the shadow request, then $p\in S\setminus C$; the
physical fetch of $p$ costs $w_p$ and decreases $\Phi_w$ by exactly $w_p$.
A shadow hit followed by such a repair has physical cost plus potential
change zero.  Thus physical cost plus potential change is at most the shadow
fault cost at every representative.  Telescoping gives $G_w\le F_w$.

No earlier $p$-representative lies inside the current $p$-window, and
physical actions load only their own representative pages.  Hence a physical
$p$-episode created in the window remains pending until $d_p$ and has age at
most $\theta w_p$.  Its fetch and delay together cost at most
$(1+\theta)w_p$.  Summing over the physical replacements proves the second
inequality.  Remark~\ref{rem:aggregation-boundaries} applies verbatim to
page-dependent window lengths.
\end{proof}

We next compare the virtual sequence with a delayed schedule $O$.  Let $M_w$
and $D$ denote its weighted movement and delay.  A representative $p@d_p$ is
\emph{bad} if $p\notin C_O^+(d_p)$.  Let $Z$ denote the number of bad
representatives and let
\[
  Z_w=\sum_{p@d_p\ \mathrm{bad}}w_p.
\]

\begin{lemma}[Weighted bad-representative charges]
\label{lem:weighted-bad}
For every feasible delayed schedule,
\[
  Z_w\le \frac D\theta+\rho M_w,
  \qquad
  Z\le \frac{D}{\theta w_{\min}}+\frac{M_w}{w_{\min}}.
\]
\end{lemma}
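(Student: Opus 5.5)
We will follow the two-certificate argument of Lemma~\ref{lem:bad-representatives}, now with page-dependent window lengths $\theta w_p$, and keep track of weights throughout. Fix a bad representative $p@d_p$ with window $[a_p,d_p]=[a_p,a_p+\theta w_p]$. We sort it into one of two classes.

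\emph{Holding class.} The page $p$ is absent when the batch at $a_p$ arrives, and it belongs to no cache microstate of $O$ through the post-action state at $d_p$. Then the request at $a_p$ stays $O$-pending on all of $[a_p,d_p)$. This gives a page-labelled holding segment of length exactly $\theta w_p$.

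\emph{Eviction class.} The page $p$ is present at some microstate in the window. Since $p\notin C_O^+(d_p)$, the window contains a last present-to-absent transition of $p$. We charge the representative to that eviction.

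Disjointness is inherited from the unit-cost lemma. Successive $p$-windows have disjoint half-open supports $[a_p,d_p)$, because the next anchor lies strictly after the preceding deadline. Lemma~\ref{lem:holding} therefore lets us sum the segments over pages inside $D$. Also, each eviction of $p$ is the last eviction for at most one $p$-window.

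For the weighted inequality, a holding-class representative contributes $w_p$ to $Z_w$ and uses $\theta w_p$ units of holding. These representatives therefore contribute at most $D/\theta$ in total.

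For the eviction class, the evicted page $p$ contributes $w_p\le w_{\max}$. We need to pay for this with fetch cost, and this is the only step that needs care. We are under the fetch-cost convention, so the eviction of $p$ is a replacement that loads some other page $p'$ at cost $w_{p'}\ge w_{\min}$. Hence $w_p\le \rho\,w_{p'}$. Each replacement evicts exactly one page, so the injection from eviction-class representatives to replacements has total weight at most $\rho M_w$. The injection is needed because, after the first micro-action, a replacement is also a load. This gives $Z_w\le D/\theta+\rho M_w$.

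Charging to the fetch cost of the replacement, rather than to a load of $p$ itself, is the point where $\rho$ enters. It avoids a boundary term when $p$ was initially cached and never loaded inside the window.

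For the unweighted count, the same classification is used with different normalizations. A holding-class representative uses $\theta w_p\ge\theta w_{\min}$ units of holding, so there are at most $D/(\theta w_{\min})$ of them. Each eviction-class representative injects into a distinct replacement of fetch cost at least $w_{\min}$, so there are at most $M_w/w_{\min}$ of them. Adding the two bounds gives the second inequality.

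The main obstacle is again the boundary microstate semantics. These are the zero-time load–evict sequences at $d_p$, and the case where $p$ is present at the start of the window. Both are covered by the post-action cut $C_O^+(d_p)$ exactly as in Lemma~\ref{lem:bad-representatives}, so we will note that that argument applies verbatim with only the window lengths changed.
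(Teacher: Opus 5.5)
Your proof is correct and follows the paper's argument. It uses the same holding/eviction dichotomy, the same disjoint page-labelled holding segments of length $\theta w_p$, and the same injection of eviction-class representatives into replacements of $O$. The only cosmetic difference is how you bound the charge. You compare each charge to the fetch cost of the evicting replacement, $w_p\le\rho\,w_{p'}$. The paper instead counts at most $M_w/w_{\min}$ replacements and multiplies by $w_{\max}$. Both give $\rho M_w$.
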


\begin{proof}
Consider a bad $p$-window.  If $p$ is absent when the anchor batch arrives
and never enters the cache through the post-action state at its deadline,
then its anchor request remains pending for $\theta w_p$ time.  Charge the
window to this holding segment.  Such segments are page-labelled and
disjoint for successive windows, so the total weight of these bad
representatives is at most $D/\theta$.  Since each weight is at least
$w_{\min}$, their number is at most $D/(\theta w_{\min})$.

Otherwise, the window contains a last transition that evicts $p$ before the
post-action deadline state.  Charge the representative to that eviction.
Successive windows of one page have disjoint interiors, so each replacement
receives at most one such charge.  The number of replacements is at most
$M_w/w_{\min}$.  Each charged representative has weight at most $w_{\max}$,
so their total weight is at most
$w_{\max}M_w/w_{\min}=\rho M_w$.  Adding the two classes proves both
inequalities.
\end{proof}

This comparison of different page identities is the only source of $\rho$ in
the reduction.  The weighted potential and the page-scaled window are exact;
spread enters when the weight of a bad representative or a restored detour
page is compared with the fetch cost of a different page in the delayed
comparator.

\begin{lemma}[Weighted virtual comparator]
\label{lem:weighted-comparator}
Let $\OPT_{\mathrm{wpg}}(\sigma_{\theta,w})$ denote the classical weighted-
paging optimum for the virtual sequence and the common initial cache.  Then
\[
  \OPT_{\mathrm{wpg}}(\sigma_{\theta,w})
  \le
  \frac{1+\rho}{\theta}D+(1+2\rho)M_w.
\]
\end{lemma}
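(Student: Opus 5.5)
Following Lemma~\ref{lem:virtual-comparator}, I would build a proactive weighted-paging schedule $A$ on $\sigma_{\theta,w}$ that tracks the delayed comparator $O$. Between virtual requests, $A$ copies every replacement of $O$, paying $w_p$ for each load, for a total of $M_w$. At a deadline $d$, $A$ first applies all of $O$'s actions at that timestamp and then handles the simultaneous representatives in the fixed order. A good representative $p@d$ hits, since $p\in C_O^+(d)$.

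For a bad representative, $A$ performs the two-move detour: load $p$ while evicting some cached $q$, serve $p$, then reload $q$ while evicting $p$. This restores $C_O^+(d)$, so detours at the same timestamp concatenate. The detour costs $w_p+w_q$. I would take $q$ to be a page of $C_O^+(d)$, which is all one can guarantee, and bound $w_q\le w_{\max}\le\rho w_p$. So each bad detour costs at most $(1+\rho)w_p$, and the total cost of $A$ is at most $M_w+(1+\rho)Z_w$.

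Next I would remove proactivity as in the second half of Lemma~\ref{lem:virtual-comparator}, now with the weighted potential $\Phi_w(A,L)=\sum_{p\in A\setminus L}w_p$. A proactive load of $p$ costs $w_p$ and raises $\Phi_w$ by at most $w_p$. A lazy fetch of a requested $p\in A\setminus L$ costs $w_p$ and lowers $\Phi_w$ by exactly $w_p$. Telescoping from equal initial caches, and using that the terminal cache is free, shows that $\OPT_{\mathrm{wpg}}(\sigma_{\theta,w})$ is at most the cost of $A$.

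It then remains to insert Lemma~\ref{lem:weighted-bad}: $(1+\rho)Z_w\le(1+\rho)D/\theta+(1+\rho)\rho M_w$. Combined with $M_w$ this gives movement coefficient $1+\rho+\rho^2$, not the claimed $1+2\rho$. So the main obstacle is the charge for the two bad classes, and the plan is to redo it class by class rather than quote the lemma.

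For a holding-certified bad window, the total weight of such representatives is at most $D/\theta$. Each detour costs at most $(1+\rho)w_p$, so this class contributes at most $(1+\rho)D/\theta$.

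For an eviction-certified bad window, I would charge the whole detour to the last eviction of $p$ inside the window. This eviction is an $O$-replacement whose loaded page $x$ has $w_x\ge w_{\min}$. The detour costs $w_p+w_q\le 2w_{\max}\le 2\rho w_x$. Each replacement receives at most one such charge, so this class contributes at most $2\rho M_w$.

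Adding the copied movement $M_w$ gives the bound $\frac{1+\rho}{\theta}D+(1+2\rho)M_w$. The delicate point is making the eviction charge per replacement rather than per unit of $M_w/w_{\min}$. This relies on the injectivity argument of Lemma~\ref{lem:weighted-bad}, including its same-timestamp load–evict convention at the post-action cut.
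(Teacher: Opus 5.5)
Your proof is correct and follows essentially the paper's route: the same two-move detour schedule, the same weighted-potential lazy projection, and the same two certificate classes from Lemma~\ref{lem:weighted-bad}, and your class-by-class accounting gives exactly $\frac{1+\rho}{\theta}D+(1+2\rho)M_w$. The only difference is bookkeeping, and it also shows you did not need to reopen that lemma: the paper splits each detour into its first fetch (total $Z_w$) and its restoring fetch (total at most $w_{\max}Z$), then applies both inequalities of Lemma~\ref{lem:weighted-bad} as stated, obtaining the same bound $(1+\rho)D/\theta+2\rho M_w$ for the detours.
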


\begin{proof}
Construct a proactive classical weighted-paging schedule that mimics every
replacement of $O$, at total fetch cost $M_w$.  A good representative hits.
At a bad representative for $p$, fetch $p$ while evicting an arbitrary
cached page $q$, serve $p$, and immediately fetch $q$ back while evicting
$p$.  This detour restores $C_O^+(d_p)$ and costs $w_p+w_q$.  The first
fetches over all bad representatives cost $Z_w$, while the restoring fetches
cost at most $w_{\max}Z$.  Lemma~\ref{lem:weighted-bad} therefore bounds the
proactive cost by
\begin{align*}
  M_w+Z_w+w_{\max}Z
  &\le M_w+\frac D\theta+\rho M_w
     +\frac{\rho D}{\theta}+\rho M_w \\
  &=\frac{1+\rho}{\theta}D+(1+2\rho)M_w.
\end{align*}

For completeness, this proactive schedule can be projected to a lazy
classical one without increasing fetch cost.  Couple its cache $A$ to a lazy
cache $L$ and use
\[
  \Psi(A,L)=\sum_{x\in A\setminus L}w_x.
\]
A proactive fetch of $x$ costs $w_x$ and increases $\Psi$ by at most $w_x$.
When a virtual request finds $p\in A\setminus L$, equal cache sizes provide
some $q\in L\setminus A$; the lazy cache fetches $p$, pays $w_p$, and lowers
$\Psi$ by exactly $w_p$.  The caches agree initially and the final cache is
free, so telescoping shows that the lazy cost is no larger.  The classical
optimum is therefore bounded by the displayed proactive cost.
\end{proof}

Combining the two lemmas gives a black-box reduction.

\begin{theorem}[Weighted aggregation reduction]
\label{thm:weighted-reduction}
Suppose that a lazy classical weighted-paging algorithm $A$ satisfies
\[
  \E[F_A(\sigma)]
  \le \alpha\OPT_{\mathrm{wpg}}(\sigma)+\beta
\]
for every fixed request sequence with the common initial cache, where
$F_A(\sigma)$ is its total fetch cost.  Its
page-weighted temporal projection satisfies, against every delayed comparator
with costs $M_w,D$,
\[
  \E[\ALG_w]
  \le
  \alpha(1+\theta)
  \left(
    \frac{1+\rho}{\theta}D+(1+2\rho)M_w
  \right)
  +(1+\theta)\beta.
\]
For randomized $A$, the raw timed input is fixed obliviously.
\end{theorem}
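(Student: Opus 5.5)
The plan is to copy the proof of Theorem~\ref{thm:aggregation-reduction} and substitute the weighted lemmas for the unit-cost ones. First I fix the raw timed input $I$ and the weights. The aggregation map $[a_p,a_p+\theta w_p]$ is a deterministic function of $(I,w,\theta)$, so the virtual sequence $\sigma_{\theta,w}$ is determined before $A$ draws any random bits. This is the point where obliviousness is used: the hypothesis on $A$ is stated for a fixed sequence, and it therefore applies to $\sigma_{\theta,w}$ without conditioning on the shadow or physical cache. It gives
\[
  \E[F_w]\le \alpha\,\OPT_{\mathrm{wpg}}(\sigma_{\theta,w})+\beta ,
\]
where $F_w=F_A(\sigma_{\theta,w})$ is the shadow fetch cost. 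The shadow run is exactly $A$ on $\sigma_{\theta,w}$ from the common initial cache, because raw hits never touch $S$ and every representative is processed even when no physical action follows.

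Second, I apply Lemma~\ref{lem:weighted-physical} pathwise, for every coin outcome, to get $\ALG_w\le(1+\theta)F_w$. Taking expectations then gives $\E[\ALG_w]\le(1+\theta)\alpha\,\OPT_{\mathrm{wpg}}(\sigma_{\theta,w})+(1+\theta)\beta$. The inequality from that lemma is linear with a nonnegative coefficient, so it survives expectation directly.

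Third, I fix an arbitrary feasible delayed comparator $O$ with costs $M_w$ and $D$ and apply Lemma~\ref{lem:weighted-comparator} to bound $\OPT_{\mathrm{wpg}}(\sigma_{\theta,w})$ by $\frac{1+\rho}{\theta}D+(1+2\rho)M_w$. This bound is deterministic and does not depend on $A$. Substituting it and using $\alpha\ge 0$ yields the displayed inequality. The additive term stays $(1+\theta)\beta$ and is not multiplied by anything from the comparator.

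There is no real obstacle, since the content sits in the two lemmas. The one step that needs care is confirming that the virtual sequence is genuinely input-determined, including simultaneous deadlines (handled by the fixed page order) and endpoint arrivals. For this I rely on Remark~\ref{rem:aggregation-boundaries}, which the weighted projection lemma states applies verbatim. It is this fact that makes the per-sequence guarantee for $A$ legitimate rather than adaptive.
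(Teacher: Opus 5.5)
Your proposal is correct and follows essentially the same route as the paper. You observe that the page-scaled windows make $\sigma_{\theta,w}$ a deterministic function of the input and weights, so the classical guarantee for $A$ applies; then you compose Lemma~\ref{lem:weighted-physical} (pathwise, then in expectation) with Lemma~\ref{lem:weighted-comparator} for an arbitrary feasible delayed comparator.
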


\begin{proof}
The page-specific windows are a deterministic function of the input and the
known weights.  Thus the classical guarantee applies to the virtual sequence
without conditioning on the shadow cache.  Apply
Lemmas~\ref{lem:weighted-physical} and
\ref{lem:weighted-comparator}.
\end{proof}

The two multiplicative coefficients balance at
\[
  \theta=\frac{1+\rho}{1+2\rho},
\]
where both equal $3\rho+2$.

\begin{corollary}[Weighted offline and online guarantees]
\label{cor:weighted-guarantees}
For every finite weighted page universe:
\begin{enumerate}[label=\textup{(\roman*)},leftmargin=*]
\item there is a deterministic nonproactive polynomial-time
      $(3\rho+2)$-approximation offline;
\item there is a deterministic online guarantee with multiplicative factor
      $O(\rho k)$;
\item there is a randomized online guarantee with multiplicative factor
      $O(\rho H_k)$ against an oblivious adversary.
\end{enumerate}
Any additive term in the online bounds is the term inherited from the
classical weighted-paging algorithm, multiplied by $1+\theta$.
\end{corollary}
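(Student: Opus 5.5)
The plan is to instantiate Theorem~\ref{thm:weighted-reduction} three times with different classical weighted-paging algorithms $A$, always at the balancing window parameter
\[
  \theta=\frac{1+\rho}{1+2\rho}\in\left(\tfrac12,1\right].
\]
At this $\theta$ we have
\[
  (1+\theta)\frac{1+\rho}{\theta}
  =(1+\theta)(1+2\rho)
  =\frac{(2+3\rho)(1+2\rho)}{1+2\rho}
  =3\rho+2 .
\]
So for every feasible delayed comparator the reduction reads $\E[\ALG_w]\le\alpha(3\rho+2)(D+M_w)+(1+\theta)\beta$. Choosing the comparator to be a weighted optimum turns $D+M_w$ into $\OPT_w$. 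Lemma~\ref{lem:weighted-lazy} guarantees that a lazy optimum exists, although the reduction as stated already holds for every feasible comparator. The additive part is exactly $(1+\theta)\beta$, which is the inheritance claimed in the final sentence.

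For (i), I take $A$ to be an exact offline lazy weighted-paging optimum on $\sigma_{\theta,w}$, so $\alpha=1$ and $\beta=0$. This yields the factor $3\rho+2$. Nonproactivity follows from Lemma~\ref{lem:weighted-physical}. For polynomial time, I note three facts. The virtual sequence has at most as many representatives as raw page occurrences, and the windows $[a_p,a_p+\theta w_p]$ are computed with exact rational arithmetic. Offline weighted paging is polynomial, since it is a special case of $k$-server on a weighted star metric, solvable by min-cost flow, or equivalently via the integral interval LP. Finally, the shadow run together with the projection is a linear pass. I also need the optimum to be lazy. If the solver does not provide this, I will apply the weighted lazy projection from the proof of Lemma~\ref{lem:weighted-comparator} (the potential $\Psi$) to the solver's output, which does not increase its cost.

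For (ii), I take $A$ to be a deterministic lazy $k$-competitive weighted-paging algorithm such as GreedyDual \cite{young1994kserver}. This gives $\alpha=k$ and a multiplicative factor $k(3\rho+2)=O(\rho k)$. For (iii), I take the randomized $O(H_k)$-competitive algorithm of \cite{bansal2007weighted}, which gives $O(\rho H_k)$. For the oblivious setting I invoke the remark in Theorem~\ref{thm:weighted-reduction}: the virtual sequence is a deterministic function of the raw input and the weights, so the classical guarantee applies to it directly.

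The main obstacle is the laziness hypothesis in the randomized case. The Bansal--Buchbinder--Naor algorithm is obtained by rounding a fractional solution and need not be lazy as stated. I will first make it lazy by running it as a shadow and applying the weighted $\Psi$-projection from Lemma~\ref{lem:weighted-comparator} on each sample path. This is causal, since it uses only the current request and the two current caches, and it does not increase fetch cost pathwise. The lazy version therefore keeps the same $\alpha$ and $\beta$ in expectation. The same step also covers any non-lazy deterministic choice in (ii).
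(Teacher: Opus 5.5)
Your proposal is correct and is essentially the paper's proof. You instantiate Theorem~\ref{thm:weighted-reduction} at the balancing $\theta=(1+\rho)/(1+2\rho)$ with an exact min-cost-flow optimum ($\alpha=1$, $\beta=0$), a deterministic $k$-competitive algorithm, and the Bansal--Buchbinder--Naor algorithm, with the additive term becoming $(1+\theta)\beta$.

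Your pathwise $\Psi$-lazification of a possibly non-lazy randomized algorithm is a sound extra check that the paper leaves implicit. The one case the paper handles separately is $k=1$, where an $O(\log k)$ citation is vacuous, but your part~(ii) already gives $O(\rho)=O(\rho H_1)$ there.
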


\begin{proof}
Offline, compute an exact classical weighted-paging optimum for the virtual
sequence by its standard polynomial-time min-cost flow formulation
\cite{fairstein2024distributional} and use
$\alpha=1$, $\beta=0$.  Online, use a deterministic weighted-paging algorithm
with multiplicative factor $k$.  For $k=1$, the unique lazy paging rule is
$1$-competitive.  For $k\ge2$, use the randomized
$O(\log k)=O(H_k)$ algorithm of Bansal, Buchbinder, and Naor
\cite{young1994kserver,bansal2007weighted}.  Theorem~\ref{thm:weighted-reduction}
and the displayed choice of $\theta$ give the claims.
\end{proof}

When $\rho=1$, the comparison coefficient $3\rho+2$ is five, exactly the
coefficient in the unit-cost temporal reduction.  Thus bounded weight spread
preserves the unit-cost offline constant and the deterministic and randomized
online orders.  The next subsection shows why arbitrary spread is different.

\subsection{A temporal lower bound with one cache slot}

The strict weight dependence is not only a weakness of the upper-bound proof.  A
cheap pending page can be served now, at the risk of displacing an expensive
resident page, or can wait for a possible future hit to that resident page.
The following two-input construction makes this uncertainty quantitative.

\begin{theorem}[Strict weight-spread lower bound]
\label{thm:weighted-lower}
Fix $u>0$ and $\rho\ge2$.  There is a two-page weighted instance with cache
size one, minimum weight $u$, and weight spread $\rho$ such that every
randomized online algorithm has an
obliviously fixed input $I$ satisfying
\[
  \frac{\E[\ALG_w(I)]}{\OPT_w(I)}
  \ge \frac{\sqrt\rho-1/\sqrt\rho}{2}
  \ge \frac{\sqrt\rho}{4}.
\]
Hence no strict multiplicative competitive ratio depending only on $k$ is
possible over arbitrary positive page weights.
\end{theorem}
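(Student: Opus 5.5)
The plan is a two-input Yao-style argument on two pages, a cheap page $c$ with $w_c=u$ and an expensive page $e$ with $w_e=\rho u$, with cache size one and initial cache $\{e\}$. Both candidate inputs begin with the request $c@0$. Input $I_\infty$ has no further request. Input $I_T$ adds the request $e@T$ for a threshold $T>0$ to be tuned, which I expect to be of order $\sqrt\rho\,u$. By Lemma~\ref{lem:weighted-lazy} I may assume the online algorithm is lazy. Until time $T$ the two inputs are indistinguishable, so a randomized online algorithm determines a single distribution over the time $s$ at which it loads $c$ (evicting $e$) when it has seen only $c@0$. Let $q=\Pr[s<T]$, where under the arrival-first convention a load exactly at $T$ comes after the $e@T$ hit.

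First I compute the two optima. On $I_\infty$, $\OPT_w=u$ (load $c$ at time $0$). On $I_T$ there are two candidate schedules. One loads $c$ at $0$, then reloads $e$ with delay $0$, at cost $u+\rho u$. The other waits, lets $e@T$ hit, and then loads $c$, at cost $u+T$. Every lazy schedule must eventually load $c$, so $\OPT_w(I_T)=u+\min\{T,\rho u\}$, which equals $u+T$ for $T\le\rho u$.

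Next I lower-bound the online cost by conditioning on $s$. On $I_T$, if $s<T$ the algorithm has already evicted $e$, so the request $e@T$ misses and costs at least $\rho u$ more. The total is then at least $u+\rho u$, and more precisely $u+s+\rho u$. If $s\ge T$, the algorithm pays at least $u+T$. Hence $\E[\ALG_w(I_T)]\ge u+q\rho u+(1-q)T$. On $I_\infty$ the delay is $s$, which is at least $T$ with probability $1-q$, so $\E[\ALG_w(I_\infty)]\ge u+(1-q)T$, plus whatever the event $s<T$ contributes to $\E[s]$. The adversary picks whichever input gives the larger ratio, so the bound is
\[
  \min_{q\in[0,1]}\max\left\{\frac{1+q\rho+(1-q)\tau}{1+\tau},\;1+(1-q)\tau\right\},
  \qquad \tau=T/u.
\]

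The main obstacle is getting exactly the constant $(\sqrt\rho-1/\sqrt\rho)/2$. With a crude threshold $q\gtrless1/2$, the case $q\ge1/2$ yields only $\rho/(2(1+\sqrt\rho))$, which falls short for large $\rho$. I will therefore solve the min--max exactly. The first ratio is increasing in $q$ once $\rho>\tau$, and the second is decreasing in $q$. So the minimizing $q$ equates the two expressions, and I then choose $\tau$ to maximize the common value. I expect $\tau$ near $\sqrt\rho$ to be optimal. If the two-point family still leaves a gap, I would retain the extra delay term $\E[s\mid s<T]$ on $I_\infty$, or replace $I_T$ by the family $\{I_t\}$ with $t$ drawn from a density proportional to $1/t$ on $[u,\rho u]$, together with a continuous Yao averaging over the online service-time distribution. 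That version is the standard ski-rental route to a $\sqrt\rho$-type ratio.

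Finally, the last inequality in the statement follows because $\rho\ge2$ gives $1/\sqrt\rho\le\sqrt\rho/2$. Since $k=1$ is fixed and $\rho$ is arbitrary, no strict ratio depending only on $k$ can exist.
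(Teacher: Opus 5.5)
Your proposal is correct. It uses the same instance as the paper: the expensive page is initially cached, and the two inputs are $c@0$ and $c@0,\,e@T$ with $T=u\sqrt\rho$. The difference is in how you analyze it.

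\textbf{The paper's route.} It applies Yao's principle with a prior that picks the second input with probability $1/\sqrt\rho$. It shows that every deterministic algorithm has expected cost at least $u(\sqrt\rho-1/\sqrt\rho)$, dropping the fetch of the cheap page in the early-load case. It then divides by the expected optimum $2u$.

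\textbf{Your route.} You parameterize the randomized algorithm directly by $q=\Pr[s<T]$ and take the worse of the two per-input ratios.

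\textbf{Your min--max closes.} No fallback is needed. The ratios $1+q(\rho-\tau)/(1+\tau)$ and $1+(1-q)\tau$ cross at $q=\tau(1+\tau)/(\rho+\tau^2)$. Their common value is $1+\tau(\rho-\tau)/(\rho+\tau^2)$, which equals $(1+\sqrt\rho)/2$ at $\tau=\sqrt\rho$. This already exceeds $(\sqrt\rho-1/\sqrt\rho)/2$, so the $1/t$-density ski-rental family is unnecessary.

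\textbf{The crude split also works.} Your concern about the split $q\gtrless1/2$ comes only from dropping the $1+(1-q)\tau$ part of the numerator. If you keep it, then for $q\ge1/2$ the first ratio is at least $1+(\rho-\sqrt\rho)/(2(1+\sqrt\rho))$. For $q<1/2$ the second ratio exceeds $1+\sqrt\rho/2$. Both already beat the target.

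\textbf{Comparison.} Your direct route avoids Yao's averaging of numerator and denominator, and it gives a slightly stronger constant. The paper's route is shorter: it fixes the prior in advance and needs only two coarse case bounds for deterministic algorithms.

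The laziness reduction via Lemma~\ref{lem:weighted-lazy} is harmless but not needed. A proactive reload of the expensive page costs $\rho u$ just as a reactive one does.
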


\begin{proof}
Let the initial cache contain page $q$, with $w_q=\rho u$, and set $w_p=u$.
Put $\lambda=\sqrt\rho$ and
$T=u\lambda=\sqrt{w_pw_q}$.  Consider the input distribution
\[
  I_0=(p@0)
  \quad\text{with probability }1-\frac1\lambda,
  \qquad
  I_1=(p@0,q@T)
  \quad\text{with probability }\frac1\lambda.
\]
Fix any deterministic algorithm.  The two timed histories are identical
strictly before $T$.  If the algorithm loads $p$ at some time $\tau<T$, then on
$I_1$ it must eventually fetch $q$ at cost $\rho u$; proactively restoring $q$
before its request already pays the same cost.  Its expected cost under the
distribution is therefore at least $\rho u/\lambda=u\lambda$.

Otherwise, on $I_0$ the request to $p$ remains pending through time $T$ and
has accumulated delay $T$ before its eventual fetch cost $u$.  This
contributes expected cost at least
\[
  \left(1-\frac1\lambda\right)(u+T)
  =u\left(\lambda-\frac1\lambda\right).
\]
An action exactly at $T$ does not create a gap: on $I_1$
the complete $q$-batch is processed first, while on $I_0$ the $p$-episode
has still accumulated delay $T$.

The offline optimum loads $p$ immediately on $I_0$, so $\OPT_w(I_0)=u$.
On $I_1$, it keeps $q$ through its request at $T$ and then loads $p$, giving
$\OPT_w(I_1)=T+u$.  Hence the expected optimum is
\[
  \left(1-\frac1\lambda\right)u
  +\frac1\lambda(u+u\lambda)=2u,
\]
whereas every deterministic algorithm has expected cost at least
$u(\lambda-1/\lambda)$.
The same bound holds after averaging over the fixed coin outcomes of any
randomized algorithm.  Therefore one of the two inputs, fixed independently
of those coins, has ratio at least
$(\lambda-1/\lambda)/2$.  The final inequality in the
statement uses $\rho\ge2$.
\end{proof}

Theorem~\ref{thm:weighted-lower} concerns the strict convention fixed in
Section~\ref{sec:model}.  If an additive constant may depend on the complete
fixed weight system, one copy of the construction can be absorbed into that
constant; the theorem does not claim the same lower bound under that
different convention.  Separately, the classical unit-weight instances
still give the $\Omega(k)$ deterministic and $\Omega(H_k)$ randomized
orders.  We do not claim a product of the spatial and weight-spread lower
bounds.

The construction also explains why the direct rule ``serve $p$ when its age
reaches $\theta w_p$'' is not weight-robust.  With $w_p=1$, it serves $p$ at
constant relative age even when doing so removes a resident page whose weight
is $\rho$ times larger.
A future algorithm with better dependence on $\rho$ must therefore combine
temporal thresholds with the cost of the resident pages at risk, rather than
assigning a timer to the pending page alone.

\section{Discussion: Survival and Failure}
\label{sec:conclusion}

The guiding question of this paper was what remains of paging after a miss no
longer forces an immediate replacement.  The answer is not uniform.  Maximum
delay preserves paging's competitive hierarchy more robustly than its
classical offline optimality structure under unit fetch costs.  Unequal fetch
costs preserve much of the same structure, but reveal a second boundary:
timing uncertainty can depend on the weight spread even with one cache slot.

\paragraph{What survives.}
Under unit fetch costs, the deterministic online order remains $\Theta(k)$,
and the randomized order remains $\Theta(H_k)$ against an oblivious adversary.
LRU remains a useful eviction rule after a timer supplies the
missing timing decision.  Marker remains useful after cache-independent
aggregation supplies an ordinary-paging sequence fixed by the input.  Belady also
remains exact inside that virtual ordinary-paging problem.  Finally, the
classical preference for lazy physical behavior survives particularly
strongly: the causal lazy projection turns every schedule into a
nonproactive one without increasing movement or delay.  Its weighted form,
as well as the exact one-hole and fixed-number-of-holes dynamic programs, uses
the same
state structure.  When the weight spread is bounded, the offline constant
approximation and the deterministic and randomized online orders also
remain bounded as in the unit-cost setting.

\paragraph{What changes.}
Immediate service is no longer without loss.  A direct delayed Marker phase
can depend on earlier random victims, so its request stream need not remain
oblivious.  Most sharply, Belady's farthest-next-use rule can fail for the
physical delayed optimum already with three pages.  Waiting couples the time
of service, the pending page being loaded, and the victim being evicted.  The
exact offline state therefore retains both a hole configuration and a pending
subset beyond the one-hole case.  This does not prove computational hardness;
the unrestricted exact complexity remains unresolved.

Unequal fetch costs change the online picture more sharply than they do in
classical weighted paging.  Serving a cheap pending page may displace an
expensive resident page just before its next request; postponing that service
may instead waste delay if the resident page never returns.  The
$\Omega(\sqrt\rho)$ construction of Section~\ref{sec:weighted} isolates this
choice with two pages and one cache slot.  Thus strict weight dependence is
not only a consequence of choosing among many victims.

The common explanation is
\[
  \text{repeated requests do not add urgency, but cache state remains persistent.}
\]
The holding identity $\Delay=\int|\Pending(t)|\,dt$ removes repeated requests
to one missing page from the instantaneous urgency count.  It does not remove
the cache configuration from which later decisions begin.  In static
maximum-delay aggregation, combining repeated requests can support a block
dynamic program; here the persistent cache remains.

The temporal reduction makes the distinction algorithmic: input-only windows
remove cache dependence from the timing layer, a classical algorithm acts in
a shadow cache, and a discrepancy potential restores physical feasibility.
This one architecture gives the unit-cost approximation and randomized
online bounds and extends to weighted windows.  Belady is used only in the
virtual instance, so its role there is consistent with its failure for the
physical delayed optimum.

We leave four main open questions.
\begin{enumerate}[leftmargin=*]
\item Determine the exact complexity when $r=m-k$ is part of the input.  The
      present results give an exact fixed-$r$ configuration DP and a
      polynomial approximation, but neither a polynomial exact algorithm nor
      a hardness classification.
\item Improve the unit-cost factor five and the weighted
      $(3\rho+2)$ factor, and determine the integrality gap of the marginal
      LP.  In particular, it remains possible that a different weighted
      comparator gives an approximation independent of $\rho$.
\item Close the online weight-spread gap.  The present strict lower bound is
      $\Omega(\sqrt\rho)$ for $k=1$, while the temporal reduction gives
      $O(\rho)$ at that cache size.  Even the one-slot case calls for a
      victim-aware timing rule.
\item Determine whether a shadow cache is necessary for a nonproactive
      $O(H_k)$ algorithm, or whether the same input-determined phase
      structure has a direct paging-native implementation.
\end{enumerate}

\paragraph{General-metric outlook.}
Paging is the uniform-metric special case of the $k$-server problem, but the
general metric makes the service scale depend on both the request point and
the current server configuration.  Appendix~\ref{sec:kserver-transfer} is
deliberately an outlook rather than a solution of the general delayed
$k$-server problem.  It records a first transfer for the direct endpoint-batch
extension.  If
$\delta_X$ and $\Delta_X$ denote the smallest nonzero distance and the
diameter of a finite metric, respectively, then any lazy classical $k$-server
algorithm with a fixed-sequence $(\alpha,\beta)$ guarantee from the common
initial configuration yields a delayed algorithm with multiplicative factor
$\alpha(1+4\Delta_X/\delta_X)$ and an additive term of at most
$(1+2\Delta_X/(\delta_X+2\Delta_X))\beta$.  For a randomized classical
algorithm, the raw timed input must be fixed by an oblivious adversary.  This
establishes competitiveness on every fixed finite metric, but a single-scale
version of the reduction cannot remove the
aspect ratio.  An aspect-independent result appears to require multiscale
timing together with a persistent, geometry-aware assignment of pending
requests to servers.  If several server moves instead share one maximum-delay
charge, the configuration-space requests become sets rather than points and
even this transfer no longer applies.  We view these two general-metric
questions as distinct directions beyond paging.

Taken together, these results separate what waiting changes from what it
preserves.  Allowing misses to wait leaves the classical competitive orders
intact, but changes the mechanisms supporting them and breaks the standard
physical offline victim selection.  With unequal fetch costs, the state being
protected also has a price, and the decision to wait becomes spread-sensitive.
Per-replacement maximum delay therefore groups repeated requests in the
urgency count without removing either the persistent cache state or the cost
of displacing it.

\appendix
\section{Online Algorithms with One Hole}
\label{sec:online-one-hole}

This section assumes $m=k+1$.  The unique hole can have at most one pending
page episode, which makes the busy intervals of a fixed-threshold algorithm
strictly disjoint.

\subsection{Deterministic threshold LRU}

Fix a total order $\prec$ on $\Pages$.  Initially cached pages have artificial
last-arrival keys earlier than every real arrival, with ties resolved by
$\prec$.  A batch at time $t$ assigns every requested page last-arrival time
$t$, again resolving equal-time keys by $\prec$.  Loading a page creates no
additional arrival touch, and every LRU victim tie is resolved by $\prec$.

Fix $\theta>0$.  When the hole is first requested, wait exactly $\theta$,
then load it and evict the least recently requested cached page.  A complete
batch at the deadline is processed before the replacement.

\begin{theorem}[One-hole threshold-LRU bound]
\label{thm:one-hole-lru}
Let $M$, $D$, and $N$ denote the movement cost, delay cost, and number of
online services, respectively.  The first two quantities refer to a lazy
offline optimum.  Then
\[
  N\le \frac{D}{\theta}+(k+2)M.
\]
Consequently,
\[
  \ALG\le
  (1+\theta)\max\left\{\frac1\theta,k+2\right\}\OPT.
\]
Taking $\theta=1/(k+2)$ gives a strict multiplicative $(k+3)$ bound.
\end{theorem}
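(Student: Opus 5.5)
We follow the architecture of Theorem~\ref{thm:general-lru}, using the one-hole structure to remove the quarantine. Fix a lazy optimum $O$ from Lemma~\ref{lem:lazy-optimum} with movement $M$ and holding $D$. With one hole, the online algorithm has at most one pending episode at a time. Its busy intervals $[a,a+\theta]$ are therefore pairwise disjoint: a new activation needs a new arrival to the current hole, which happens only after the previous service. We split the online services into three classes. The first class contains activation misses, where the requested online hole $h$ is also absent from $C_O^-(a)$. The second contains crossing hits, where $h\in C_O^-(a)$ and the busy interval touches an $O$ movement. The third contains contained hits, where $h$ is in the fixed $O$ cache $S$ throughout a busy interval lying in one $O$ phase.

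For activation misses, we reuse the first part of Lemma~\ref{lem:det-activation-crossing} verbatim. The $O$ episode containing $a$ has length $L$ and contains at most $L/\theta+1$ activations, so their number is at most $D/\theta+M$. For crossing hits, disjointness of the busy intervals is the improvement over the general case. Each $O$ movement is touched by at most two busy intervals, and in fact by at most one at each endpoint side. We therefore charge at most one crossing hit per $O$ movement, or at most two if endpoint touching must be counted on both sides. This gives a contribution of at most $2M$, or $M$ with the sharper accounting.

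Contained hits carry the main argument. Since $m=k+1$, the $O$ cache $S$ omits exactly one page $x$, so $S=\Pages\setminus\{x\}$. A contained hit requests $h\in S$, so $h\ne x$ and the online cache is $\Pages\setminus\{h\}$, which contains $x$. We claim that within one $O$ phase there are at most $k$ contained hits whose page $x$ is not $O$-pending. The online cache holds $x$ with a fixed last-arrival key. If $x$ is $O$-clean, it received no arrival after the phase start, since such an arrival would make it $O$-pending until the phase ends. Each contained service evicts the LRU page. Every other cached page that gets activated is touched after the phase starts and receives a newer key. Hence after at most $k$ services, $x$ becomes the minimum key and is evicted, and from then on the online hole equals the $O$ hole. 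No later contained hit can occur in the phase, because that would require online to reinsert $x$, which happens only when $x$ is requested, making it $O$-pending. The remaining contained hits are those during which $x$ is $O$-pending for the whole busy interval. Integrating $|\Pending_O(t)|\ge1$ over these disjoint intervals of length $\theta$ gives at most $D/\theta$ minus the part of $D$ already used by activation misses.

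The main obstacle is to avoid double-counting the $D$ used for activation misses and for regular contained hits. We would resolve it by noting that a miss charges the $O$ episode of the online hole $h$, while a regular contained hit charges the $O$ episode of $x\ne h$, which is the unique $O$ hole. With one hole, $O$ has only one episode at a time. So we redo the miss charge by integration over disjoint busy intervals: both kinds of service occupy disjoint time intervals on which the $O$ hole is pending, except for the boundary $+1$ per episode, which is paid by $M$. Combining the charges gives
\[
  N\le D/\theta+M+M+kM=D/\theta+(k+2)M.
\]
Multiplying by the per-service cost $1+\theta$ and setting $\theta=1/(k+2)$ yields $(1+\theta)(k+2)=k+3$.
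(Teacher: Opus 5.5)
\textbf{Gap: contained hits that straddle the first request to the $O$-hole.}

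You take a different route from the paper. The paper sorts busy intervals only into \emph{covered} ones, which meet the closed span $\mathcal J_e$ of some OPT episode and number at most $L/\theta+2$ per episode because their left endpoints lie in a window of length $L+\theta$. The rest are \emph{uncovered} ones, which contain neither an OPT movement nor a request to OPT's hole and number at most $k$ per noninitial idle component. Your miss/hit and crossing/contained split, with holding charged by integrating over the pairwise disjoint busy intervals, can be made to work.

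As written, however, one case is left uncharged: a contained hit whose $O$-hole $x$ is clean at the activation $a$ but is first requested at some $\tau\in(a,a+\theta]$. It is not regular, because $x$ is not $O$-pending on $[a,\tau)$, so it does not carry $\theta$ units of holding. Your rank argument does not cover it either. The request at $\tau$ refreshes the key of $x$ before the service at $a+\theta$, so that service need neither evict $x$ nor lower its rank. The paper never faces this case: such an interval meets $\mathcal J_e$ at $\tau$ and is counted as covered. That is exactly what the extra $\theta$ of width in its per-episode window pays for.

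\textbf{Repair.} A straddling interval contains $\tau$, after which $x$ stays $O$-pending until the phase ends, so there is at most one per phase. If one occurs, $x$ was still online-cached at $a$. An evicted $x$ cannot be reloaded before $\tau$, so every earlier clean contained service was non-evicting. Each such service lowered the number of online pages whose key is below that of $x$: it evicts one such page and loads $h$, which was requested after the phase start. Arrivals after the phase start and all other services never raise this number. Hence clean and straddling contained hits together are still at most $k$ per phase.

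\textbf{Two smaller omissions.} First, you must note that the initial phase contributes no such hits. Online and $O$ share the hole $h_0$ until the first online service, whose activation is itself a request to $h_0$. This matters because there are $M+1$ phases while your tally charges only $kM$. Second, justify the crossing bound by disjointness rather than by endpoint-side counting. Each $O$ movement lies in at most one closed busy interval, so boundary misses and crossing hits \emph{together} number at most $M$. That frees one $M$ in your tally, which by itself would also pay for the straddling intervals. With the rank repair, your decomposition in fact gives $N\le D/\theta+(k+1)M$, slightly sharper than the stated bound.
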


\begin{proof}
Each online busy interval has the form $B=[a,a+\theta]$.  Consecutive online
activation times are separated by strictly more than $\theta$: after service
at $a+\theta$, the new hole is clean, and its next activation can occur only
at a strictly later arrival epoch.

Call $B$ \emph{covered} if $B\cap\mathcal J_e\ne\varnothing$ for some OPT
episode $e$.  Fix such an episode of length $L=s_e-a_e$.  If $h$ busy
intervals meet $\mathcal J_e$, their left endpoints lie in an interval of
length $L+\theta$, while consecutive left endpoints differ by strictly more
than $\theta$.  Hence $h\le L/\theta+2$, including when $L=0$ and $\mathcal
J_e$ is a singleton.  Summing over the $M$ lazy-OPT episodes, and allowing an
online interval meeting several spans to be overcounted, gives at most
$D/\theta+2M$ covered intervals.

Consider an uncovered busy interval.  Every lazy OPT movement clears one
episode and occurs in its closed service span, while every request to OPT's
current hole lies in the span of the episode that it starts or joins.  The
busy interval therefore contains neither an OPT movement nor a request to
OPT's hole.  It lies in one OPT idle component with a fixed hole $g$, and no
request to $g$ occurs there.  At each online service in the component, either
$g$ is evicted or its LRU rank among the online cached pages decreases: the
loaded page has an actual request inside the component and is strictly more
recent than $g$.  Once $g$ is the online hole, no further uncovered interval
can start before OPT moves again.  Thus every noninitial idle component has at
most $k$ uncovered services.  The initial component has none because the two
caches, and hence their holes, agree initially; OPT has at most $M$
noninitial components.  Therefore
$N\le D/\theta+(k+2)M$.  Since every online service costs exactly
$1+\theta$, the competitive bound follows.
\end{proof}

\subsection{Randomized service-anchored Marker}

Classical arrival-defined Marker phases do not survive unchanged: a pending
hole can cross an arrival phase boundary, and at its service time every cached
page may already be marked.  The following service-anchored rule avoids that
defect.

Initially mark all $k$ cached pages.  Every page arrival marks that page.  A
pending hole is served when its age reaches one.  If an unmarked cached page
exists, evict one uniformly.  Otherwise reset the marks to contain only the
page being loaded, and evict a uniformly random cached page.  The newly
created hole is clean and unmarked.

\begin{theorem}[One-hole service-anchored Marker]
\label{thm:sam}
Against an oblivious adversary,
\[
  \E[\ALG]\le 6H_k\OPT+2H_k+2,
\]
where $H_k=1+1/2+\cdots+1/k$.
\end{theorem}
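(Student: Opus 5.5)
We prove the bound with the service-anchored phase structure, mimicking the classical clean-page analysis of Marker. Cut the timeline into \emph{service phases}: a new phase begins at each reset service, i.e.\ a service at which every cached page is marked. Within a phase the marked set only grows, and all its pages are distinct. So a completed phase $i$ is a set $P_i$ of $k$ distinct pages: the page loaded at the reset together with the pages marked afterwards, up to the next reset. Put $c_i=|P_i\setminus P_{i-1}|$, the number of \emph{clean} pages, and call the remaining pages of $P_i$ stale.

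First we bound the expected number of online services in phase $i$ by $c_iH_k+O(1)$. Every service costs exactly $2$: unit movement plus age exactly one. Hence
\[
  \E[\ALG]\le 2\sum_i\bigl(c_iH_k+O(1)\bigr).
\]
For the per-phase bound we use deferred decisions. Each service evicts a uniformly random unmarked cached page, so when the $j$th stale page is requested, the probability that it was evicted is at most $c_i/(k-j+1)$. This gives $c_i(H_k-H_{c_i})$ stale faults plus at most $c_i$ clean faults. The reset service itself evicts uniformly from all cached pages rather than only unmarked ones. This extra randomization contributes $O(1)$ per phase, or is absorbed into the $H_k$ term. The additive $2H_k+2$ should come from the first phase, whose initial marking of all $k$ cached pages is artificial, and from the final incomplete phase.

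Second, we lower-bound $\OPT$, using a lazy offline optimum by Lemma~\ref{lem:lazy-optimum}. Since $m=k+1$, OPT has a single hole. Over two consecutive phases, $k+c_i$ distinct pages are requested, so OPT's cache, of size $k$, must fail to contain at least $c_i$ of them at their request times. Each such failure either costs OPT a movement in that window or makes OPT's hole pending. The difficulty is that requests in our phases can be up to one time unit older than the service that anchors them, because marking happens at arrival while resets happen at service. We therefore attribute each OPT movement and each unit of OPT holding to $O(1)$ overlapping phase windows. The aim is
\[
  \sum_i c_i\le 3\,(M+D)+O(1),
\]
where the constant $3$ is what turns $2H_k$ into $6H_k$. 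Roughly, the clean-page lemma gives a factor $2$ from pairs of phases, and the extra factor comes from the length-one service lag: a missed request either is served within unit time, costing OPT at least its movement, or waits at least one unit, costing OPT at least $1$ in delay.

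The main obstacle is this second step. We must show that a clean page forces OPT cost even though OPT may legitimately wait, and that the one-hole pending structure prevents OPT from paying a single delay for many clean pages. Holding is already integrated per pending page by Lemma~\ref{lem:holding}, and there is only one hole. So we would charge each clean page whose request is absent from OPT's cache to the OPT episode containing it, in the style of the covered/uncovered argument of Theorem~\ref{thm:one-hole-lru}. An episode of length $L$ then absorbs at most $L+O(1)$ clean pages per phase pair, since busy intervals are disjoint and of unit length. Summing over episodes gives the $3(M+D)$ bound, and multiplying by $2H_k$ gives the stated result.
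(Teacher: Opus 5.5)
Your second step, which you flag as the main obstacle, is where the proof actually lives, and your sketch does not supply it. Note first that with $m=k+1$ the clean-page bookkeeping is vacuous. Since $|P_i|=|P_{i-1}|=k$ inside a universe of $k+1$ pages, $c_i\le 1$. So $\sum_i c_i$ is at most the number $R$ of completed reset-to-reset blocks, and the real task is to bound $R$ by $\OPT$.

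The missing idea is that a \emph{single} block, not a pair of phases, already certifies one unit of comparator cost. Let $s_j<s_{j+1}$ be consecutive resets. The page loaded at $s_j$ was requested at $s_j-1$. The reset erases all other marks, while at $s_{j+1}$ every cached page is marked and the loaded page is pending. Hence every other page is requested after the reset at $s_j$ and no later than $s_{j+1}$. So $I_j=[s_j-1,s_{j+1}+1]$ contains requests to all $k+1$ pages, the last of them at least one time unit before its right endpoint. A comparator that makes no movement in $I_j$ misses one of these pages throughout, and so accrues at least one unit of holding inside $I_j$. Consecutive one-hole services are more than one unit apart, so $I_j\cap I_{j+3}=\varnothing$. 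The intervals therefore overlap at most three times, and $R\le 3\OPT$ pathwise.

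Your observation that a missed request costs either a move or a unit of waiting is the right mechanism. But pairing consecutive phases classically, plus an episode charge of $L+O(1)$, fixes neither the window nor the overlap. Nothing in it yields the factor $3$ needed for $6H_k$.

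The first step also needs repair, on three points.

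\begin{itemize}
\item \textbf{The per-phase slack.} Your $O(1)$ per phase must be exactly zero. Summed over phases it becomes a multiple of $R$, hence of $\OPT$, and spoils the coefficient $6H_k$. It is zero because the reset's uniform victim is not extra randomization. Once the marks are cleared, all $k$ old cached pages are unmarked, so the reset is precisely the uniform starting state of the next block. Each block then has at most $H_k$ expected services, including the next reset.

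\item \textbf{Random phase boundaries.} Unlike classical Marker, the reset times depend on earlier victim coins, because a service occurs only when the random hole is requested. Your phases and the $c_i$ are therefore random, and the fixed-phase deferred-decision argument does not apply verbatim. Let $B_j$ be the number of services after reset $j$ up to and including the next reset, and let $N$ be the total number of services. You must condition on the history just before each reset's fresh victim coin and sum $\E[N]=1+\sum_j\E[B_j]\le 1+H_k\,\E[R+1]$. Only then can you apply the pathwise bound $R\le3\OPT$. With cost $2$ per service, this gives exactly $6H_k\OPT+2H_k+2$.

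\item \textbf{The one-unit service lag.} Inside a block, further pages can be marked between an activation and its service. The classical estimate $c_i/(k-j+1)$ is therefore not justified as stated. What is needed is a recursion in the number $u$ of unmarked pages. It should show that the victim pool at service time only shrinks, $u'\le u-g$, where $g$ is the number of unmarked pages touched by the activating batch. It then uses $H_{u-g}+g/u\le H_u$.
\end{itemize}
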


\begin{proof}
If no service occurs, the claim is immediate.  Assume otherwise and set
$H_0=0$.
Immediately before the fresh victim coin of a reset is revealed, exactly the
loaded page is marked and the new hole is uniform among the $k$ unmarked
pages.  For $u\ge0$, let $F(u)$ denote the
supremum conditional expected number of services remaining before the next
reset, including that reset service, over admissible histories and oblivious
input suffixes at a clean-hole state with no active pending episode, in which
$u$ pages are unmarked and the hole is uniform among them under information
that has not revealed that fresh hole coin.  Put $F(0)=0$.

Condition on such a history with $u>0$.  Ignore batches that touch no new
unmarked page, and suppose the next relevant batch touches $g$ of the $u$
pages; if there is no such batch, the remaining service count is zero.  For
analysis, defer revealing the uniform hole coin until this batch distinguishes
it.  With probability $(u-g)/u$ the hole is untouched; conditional on this
event it remains uniform among the $u-g$ unmarked pages.  With probability
$g/u$ the batch activates the hole and exactly one service occurs one time
unit later.  Additional arrivals before that service may mark more pages, so
at service time the number $u'$ of unmarked victim candidates need only
satisfy $u'\le u-g$.  If $u'=0$, that service is the next reset.  Otherwise
its fresh uniform victim coin makes the new hole uniform among those $u'$
pages.  Strong induction on $u$ and monotonicity of harmonic numbers therefore give
\[
  F(u)
  \le \frac{u-g}{u}H_{u-g}
     +\frac gu\bigl(1+H_{u-g}\bigr)
  =H_{u-g}+\frac gu
  \le H_u.
\]
The last inequality follows because each of the $g$ final summands of $H_u$
is at least $1/u$.  Hence the expected number of services after one reset up
to and including the next reset, or up to the end of the input if no reset
occurs, is at most $H_k$.  The conditioning is with respect to the information
immediately before the fresh victim coin is drawn.  Conditional on the
realized post-coin cache, the hole is known and the bound $F(u)\le H_u$ need
not hold.  Obliviousness fixes the future timed input suffix under the
required pre-coin conditioning.

Let $s_j$ and $s_{j+1}$ denote consecutive reset times, and let $a_j$ denote
$s_j-1$, the activation time of the reset service at $s_j$.  Resetting
retains only the loaded page's mark; that page was requested at $a_j$.
Every other page is then unmarked, and the next reset implies that each is
requested by time $s_{j+1}$.  Thus the interval
\[
  I_j=[a_j,s_{j+1}+1]
\]
contains requests to $k+1$ distinct pages.  A fixed $k$-cache misses at least
one of them.  Consider an arbitrary comparator.  If it moves during $I_j$,
the interval already contains one unit of movement.  Otherwise its cache is
fixed, so one of the $k+1$ requested pages is absent throughout.  That page is
requested no later than the reset time $s_{j+1}$, one full time unit before
the right endpoint of $I_j$.  Serving it requires movement, whereas leaving
it pending contributes at least one unit of holding inside $I_j$.  Thus every
comparator pays at least one unit on $I_j$.

With one hole and the arrival-first convention, consecutive online services
are separated by strictly more than one time unit: after a service the new
hole is clean, its next activation is strictly later, and its deadline is one
unit after that activation.  The intervals $I_j$ consequently have overlap
at most three.  If $R$ is the number of completed reset-to-reset blocks, then
$R\le3\OPT$ pathwise.

The first online service is a reset because all initial cache pages are
marked; count it separately.  Index reset services, including the first, by
$j=0,1,\ldots$.  Let $E_j$ be the event that reset $j$ occurs, and let
$\mathcal F_{j^-}$ contain the raw input prefix, all earlier victim coins, the
current marks, and the fact that this service is a reset, but not its fresh
victim coin.  Thus $E_j\in\mathcal F_{j^-}$.

On $E_j$, let $B_j$ count the subsequent services up to and including the
next reset, or all subsequent services if no later reset occurs; put $B_j=0$
on $E_j^c$.  The bound $F(k)\le H_k$, applied with the pre-coin conditioning,
gives
\[
  \E[B_j\mid\mathcal F_{j^-}]\le H_k
  \quad\text{on }E_j,
  \qquad
  \E[B_j]\le H_k\Pr(E_j).
\]
The finite input induces only finitely many services.  There are exactly
$R+1$ reset events, and every service after the first belongs to exactly one
$B_j$.  Therefore
\[
  \E[N]
  =1+\sum_{j\ge0}\E[B_j]
  \le1+H_k\sum_{j\ge0}\Pr(E_j)
  =1+H_k\E[R+1]
  \le 1+H_k(3\OPT+1).
\]
Every threshold-one service costs exactly two, which gives the displayed
bound.
\end{proof}

The oblivious-adversary restriction is essential to this proof.  An adaptive
adversary that observes the random victim may request it next, as in classical
randomized paging.

\section{The Marginal LP and Open Approximation Questions}
\label{sec:approximation}

The exact general-hole dynamic program is exponential when the effective
number of holes is not fixed, whereas
Theorem~\ref{thm:offline-five-approx} gives a polynomial-time
$5$-approximation.  The following relaxation is therefore not needed to
establish constant approximability.  It remains relevant to improving the
factor, understanding the polyhedral structure, and separating approximation
from exact complexity.  State explosion alone neither proves NP-hardness nor
excludes a polynomial exact algorithm.

\subsection{A marginal linear program}

After each arrival epoch introduce $r$ ordered zero-time service slots.  For
$i=1,\ldots,n$, let $x^{0,-}_{p,i},x^{1,-}_{p,i}$ be the clean- and
pending-hole marginals immediately before batch $A_i$.  Let
$x^0_{p,i,0},x^1_{p,i,0}$ denote the post-batch marginals before the first
slot, and let $x^0_{p,i,\ell},x^1_{p,i,\ell}$ be the marginals after slot
$\ell\in\{1,\ldots,r\}$.  At every one of these stages impose
\[
  x^0_{p,i,\ell},x^1_{p,i,\ell}\ge0,
  \qquad
  x^0_{p,i,\ell}+x^1_{p,i,\ell}\le1,
  \qquad
  \sum_p(x^0_{p,i,\ell}+x^1_{p,i,\ell})=r,
\]
and impose the analogous constraints on pre-batch marginals.

The initial and cross-epoch links are
\[
  x^{0,-}_{p,1}=\mathbf 1[p\notin C_0],
  \qquad x^{1,-}_{p,1}=0,
\]
and, for $i\ge2$ and $j\in\{0,1\}$,
\[
  x^{j,-}_{p,i}=x^j_{p,i-1,r}.
\]
The complete arrival batch acts linearly.  If $p\in A_i$, then
\[
  x^0_{p,i,0}=0,
  \qquad
  x^1_{p,i,0}=x^{0,-}_{p,i}+x^{1,-}_{p,i};
\]
if $p\notin A_i$, then
\[
  x^0_{p,i,0}=x^{0,-}_{p,i},
  \qquad
  x^1_{p,i,0}=x^{1,-}_{p,i}.
\]

For $p\ne q$, let $y_{pq,i,\ell}\ge0$ be the fractional movement in slot
$(i,\ell)$ from pending hole $p$ to cached victim $q$.  Write
\[
  o_{p,i,\ell}=\sum_{q\ne p}y_{pq,i,\ell},
  \qquad
  \operatorname{in}_{q,i,\ell}=\sum_{p\ne q}y_{pq,i,\ell}.
\]
Relative to the immediately preceding slot-state, impose
\[
  o_{p,i,\ell}\le x^1_{p,i,\ell-1},
  \qquad
  \operatorname{in}_{q,i,\ell}
  \le1-x^0_{q,i,\ell-1}-x^1_{q,i,\ell-1},
  \qquad
  \sum_{\substack{p,q\in\Pages\\p\ne q}}y_{pq,i,\ell}\le1,
\]
and update
\[
  x^1_{p,i,\ell}=x^1_{p,i,\ell-1}-o_{p,i,\ell},
  \qquad
  x^0_{p,i,\ell}=x^0_{p,i,\ell-1}
    +\operatorname{in}_{p,i,\ell}.
\]
The objective is
\[
  \min\quad
  \sum_{i=1}^n\sum_{\ell=1}^r
  \sum_{\substack{p,q\in\Pages\\p\ne q}}y_{pq,i,\ell}
  +\sum_{i=1}^{n-1}(t_{i+1}-t_i)\sum_p x^1_{p,i,r},
\]
with terminal constraints $x^1_{p,n,r}=0$ for every page $p$.
Every integral lazy schedule has at most $r$ same-epoch services and embeds
slot by slot in this LP, so $\LP\le\OPT$.  The formulation has
$O(nrm^2)$ variables and constraints and is polynomial in the input size.

\subsection{Certified nonintegrality}

We use the standard supremum convention
\[
  \operatorname{gap}(\LP)
  =\sup_{I:\,\LP(I)>0}\frac{\OPT(I)}{\LP(I)}.
\]

\begin{proposition}[An $8/7$ gap]
\label{prop:gap-eight-seven}
The integrality gap of the marginal LP is at least $8/7$.
\end{proposition}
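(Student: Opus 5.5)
The plan is to exhibit one explicit instance $I$ with $\OPT(I)=2$ and a feasible fractional solution of cost $7/4$. The abstract says four pages, so take $m=4$. The natural symmetric choice is $k=2$ and $r=2$, initial cache $\{c_1,c_2\}$ and holes $\{h_1,h_2\}$, with a short timed sequence. On this sequence every integral schedule must trade a movement against holding of $1/2$ or $1$, much as in Proposition~\ref{ex:delayed-belady-fails}. The fractional solution then averages two incompatible integral strategies, so that each pays only half of the penalty the other avoids.

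Concretely, first I would build a candidate instance in the style of Proposition~\ref{ex:delayed-belady-fails}. Requests to both holes arrive at time $0$, and later requests to the cached pages come at times chosen so that every integral choice of victims forces either a third movement or delay of at least $1/2$ beyond two movements. This should give $\OPT=2$ after the usual case analysis, restricted to lazy schedules by Lemma~\ref{lem:lazy-optimum}. For an instance this small, the exact value can also be certified by running the configuration dynamic program of Theorem~\ref{thm:general-dp} by hand. The state space $\binom42 2^2=24$ is small enough to enumerate the relevant transitions, and I would present the lower bound as a short case split on the first replacement: which hole is loaded, which cached page is evicted, and when.

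Second, I would write down the fractional solution explicitly. At each epoch I would list the marginals $x^0,x^1$ and the flows $y_{pq,i,\ell}$, typically splitting movements as $1/2$ to each of two victims. Then I would check each constraint family: the per-stage mass $\sum_p(x^0+x^1)=r$, the capacity bounds $o\le x^1$ and $\operatorname{in}\le 1-x^0-x^1$, the per-slot budget $\sum y\le1$, the arrival linear maps, and terminal pending mass zero. Finally, summing movement and holding should give $7/4$, hence ratio $8/7$. The key LP weakness to exploit is that the marginals do not see joint configurations. Half a unit of pending mass on a page can be cleared using half a victim whose other half is still needed elsewhere, so the fractional solution avoids the discrete ``reversal'' cost.

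The main obstacle is tuning the instance so that the LP value is exactly $7/4$ while the integral optimum stays at $2$. I would first try requests $h_1,h_2@0$, then $c_1@\tfrac12$ and $c_2@1$, then later reuse of each page. I would adjust the times, guided by the dual of the LP, until some fractional schedule pays $3/2$ movement plus $1/4$ holding. If needed, I would certify $\LP\le 7/4$ only by the primal witness, since the claim needs only a lower bound on the gap, and certify $\OPT\ge2$ by exhaustive lazy case analysis.
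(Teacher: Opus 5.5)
\textbf{Genuine gap: no instance is given, and the sketched direction cannot work.} The proposition is exactly the existence of one certified instance. Your proposal leaves the instance, the fractional witness, and the integral lower bound all unspecified, and names the tuning as the ``main obstacle.'' The concrete direction you sketch also cannot succeed, for three reasons.
\begin{enumerate}
\item In this LP, arrivals only convert clean-hole mass into pending mass, and pending mass decreases only through $o_{p,i,\ell}=\sum_{q}y_{pq,i,\ell}$. Since terminal pending mass is zero, total LP movement equals $\sum_i\sum_{p\in A_i}x^{0,-}_{p,i}$, the pending mass created by arrivals. If both initial holes are requested at time $0$, each contributes $x^{0,-}_{p,1}=1$. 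Hence $\LP\ge 2$, and your target $\LP=7/4$ is impossible for that shape.
\item Your targets are inconsistent. If every integral schedule needs a third movement or at least $1/2$ delay beyond two movements, then $\OPT\ge 5/2$, not $2$; this is what happens in Proposition~\ref{ex:delayed-belady-fails}.
\item ``Averaging two incompatible integral strategies'' cannot create a gap. Integral lazy schedules embed slot by slot and all constraints and the objective are linear. So any convex combination of embedded schedules is LP-feasible with cost equal to the average of the integral costs, which is at least $\OPT$.
\end{enumerate}
A gap needs fractional moves that do not decompose into integral schedules. Your later remark about clearing pending mass with ``half a victim whose other half is still needed elsewhere'' points in the right direction, but the plan never turns it into a witness.

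\textbf{What the paper does.} It builds exactly such a non-decomposable trajectory with $k=1$ and $r=3$. The pages are $a,b,c,d$, the initial cache is $\{d\}$, and the batches are $\{a,d\}@0$, $\{c\}@\varepsilon$, $\{b,d\}@1+\varepsilon$. The fractional schedule plays $\tfrac12(a\to d)$ at time $0$. At $\varepsilon$ it plays the cycle $\tfrac12(c\to a)$, $\tfrac12(a\to c)$, $\tfrac12(c\to a)$, each move in its own slot and each using cached mass created by the previous one. This cycle clears pending $a$ with victim mass from the branch that just served $c$, and it leaves half of $d$ cached. The final batch therefore creates only $3/2$ units of pending mass, cleared by $\tfrac12(b\to c)$, $\tfrac12(d\to b)$, $\tfrac12(b\to d)$. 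This gives $\LP\le 7/2+\varepsilon/2$. A three-case lazy analysis (both, exactly one, or neither of $a,c$ served before the last batch) gives $\OPT=4$. The ratio tends to $8/7$ under the supremum convention, so no exact-ratio instance is needed.
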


\begin{proof}
Take four pages $a,b,c,d$, cache size one, initial cache $\{d\}$, and batches
\[
  \{a,d\}@0,
  \qquad \{c\}@\varepsilon,
  \qquad \{b,d\}@1+\varepsilon.
\]
The fractional actions are
\[
  \tfrac12(a\to d),
\]
then
\[
  \tfrac12(c\to a),\quad
  \tfrac12(a\to c),\quad
  \tfrac12(c\to a),
\]
and finally
\[
  \tfrac12(b\to c),\quad
  \tfrac12(d\to b),\quad
  \tfrac12(b\to d).
\]
Table~\ref{tab:lp-gap-trajectory} lists the resulting marginals; a formal sum
such as $b+c+\tfrac12d$ means one unit of clean-hole mass on $b$ and $c$ and
one half unit on $d$.

\begin{table}[h]
  \centering
  \small
  \begin{tabular}{@{}lll c@{}}
    \toprule
    Stage & Clean-hole mass $x^0$ & Pending-hole mass $x^1$
      & Movement increment \\
    \midrule
    after $\{a,d\}@0$ & $b+c$ & $a$ & $0$ \\
    after $\tfrac12(a\to d)$
      & $b+c+\tfrac12d$ & $\tfrac12a$ & $\tfrac12$ \\
    after $\{c\}@\varepsilon$
      & $b+\tfrac12d$ & $\tfrac12a+c$ & $0$ \\
    after $\tfrac12(c\to a)$
      & $\tfrac12a+b+\tfrac12d$ & $\tfrac12a+\tfrac12c$
      & $\tfrac12$ \\
    after $\tfrac12(a\to c)$
      & $\tfrac12a+b+\tfrac12c+\tfrac12d$ & $\tfrac12c$
      & $\tfrac12$ \\
    after $\tfrac12(c\to a)$
      & $a+b+\tfrac12c+\tfrac12d$ & $0$ & $\tfrac12$ \\
    after $\{b,d\}@1+\varepsilon$
      & $a+\tfrac12c$ & $b+\tfrac12d$ & $0$ \\
    after $\tfrac12(b\to c)$
      & $a+c$ & $\tfrac12b+\tfrac12d$ & $\tfrac12$ \\
    after $\tfrac12(d\to b)$
      & $a+\tfrac12b+c$ & $\tfrac12b$ & $\tfrac12$ \\
    after $\tfrac12(b\to d)$
      & $a+\tfrac12b+c+\tfrac12d$ & $0$ & $\tfrac12$ \\
    \bottomrule
  \end{tabular}
  \caption{Marginal trajectory for the $8/7$-gap instance.  Every row has
  total hole mass three and per-page mass at most one.}
  \label{tab:lp-gap-trajectory}
\end{table}

Every displayed half-move respects the outgoing pending mass and the incoming
cached mass in its immediately preceding slot-state.  The movement totals
$\tfrac12+\tfrac32+\tfrac32=7/2$.  Only the half unit of pending $a$-mass
survives a positive time interval, namely $[0,\varepsilon)$, so the holding
cost is $\varepsilon/2$.

By Lemma~\ref{lem:lazy-optimum}, it suffices to consider lazy schedules.
Integral OPT is four.  If both $a,c$ are served before the last epoch, two
movements have occurred and both $b,d$ are then holes, requiring two more.
If exactly one of $a,c$ is served early, the other page together with $b,d$
is pending at the final epoch, again requiring four total movements.  If
neither is served early, their delay plus the three final movements exceeds
four.  Immediate services $a,c,b,d$ attain four.  Thus, for every
$\varepsilon>0$,
\[
  \LP(I_\varepsilon)\le\frac72+\frac\varepsilon2,
  \qquad
  \OPT(I_\varepsilon)=4.
\]
Taking the supremum and then letting $\varepsilon\downarrow0$ gives
\[
  \operatorname{gap}(\LP)
  \ge\sup_{\varepsilon>0}
    \frac{4}{\frac72+\frac\varepsilon2}
  =\frac87.
\]
\end{proof}

Slotwise rounding that preserves the exact clean-hole and pending-hole
marginals is impossible.  For example, take with equal probability the
colored bases $\{A,B\}$ with both pending and $\{C,D\}$ with both pending.
The LP permits half a unit of $A\to B$.  Clearing $A$ must happen in the
first realization, where $B$ is not cached; creating clean $B$ in the second
realization necessarily clears pending $C$ or $D$.  Thus the target colored
marginals are unreachable at any movement cost.

\begin{openproblem}
Can the approximation factor five be improved?  Independently, does the
marginal LP have constant integrality gap under a whole-epoch or amortized
rounding that need not preserve intermediate marginals?
\end{openproblem}

Theorem~\ref{thm:offline-five-approx} does not imply an integrality-gap upper
bound: its analysis compares the returned schedule with $\OPT$, not with
$\LP$.  In particular, no inequality of the form
$\ALG_{\rm off}\le5\LP$ has been proved, and the certified $8/7$ lower bound
remains compatible with either a constant or an unbounded gap.

A complementary direction is to characterize the full epoch-closure
polytope or strengthen the relaxation by configuration-correlation
constraints with a polynomial separation oracle.

\section{A First General-Metric Transfer}
\label{sec:kserver-transfer}

This appendix asks how far the temporal reduction survives beyond the
uniform metric.  We obtain an aspect-ratio-dependent black-box theorem for the
direct endpoint-batch extension of the model.  The result is not presented
as a solution to delayed $k$-server on arbitrary metrics: the final
subsection isolates why a metric-independent theorem needs a different
timing mechanism.

\subsection{Endpoint-batch model}

Let $(X,d)$ denote a known finite metric and let a configuration denote a
multiset of $k$ server locations.  All schedules start from the same
configuration.  A request at an occupied point is served immediately.  If a
point $x$ is unoccupied, all requests to $x$ join one pending episode whose
oldest arrival time is $a_x$.  Moving one server from $y$ to a pending $x$ at
time $t$ costs
\[
  d(y,x)+(t-a_x)
\]
and clears the episode.  A move to a nonpending point is proactive and pays
only distance.  We retain the arrival-first convention and a free terminal
configuration.

This model charges maximum delay within one location episode, but it adds the
charges of distinct episodes.  On a uniform metric, starting from distinct
server locations, every nonproactive trajectory keeps the locations distinct
and coincides with a paging trajectory.  Thus the lazy specialization
recovers the model of Section~\ref{sec:model}.  Assume $|X|\ge2$ and write
\[
  \delta_X=\min_{x\ne y}d(x,y),
  \qquad
  \Delta_X=\max_{x,y}d(x,y),
  \qquad
  \Gamma_X=\frac{\Delta_X}{\delta_X}.
\]
The one-point metric has zero cost and may be omitted.
If $\Pending(t)$ denotes the set of active location episodes at time $t$,
the same holding identity gives
\[
  D=\int_0^\infty |\Pending(t)|\,dt.
\]

For two server configurations $S,C$, let
\[
  \Phi(S,C)
  =\min_{\pi}\sum_{s\in S}d\bigl(s,\pi(s)\bigr)
\]
denote their minimum perfect-matching distance, where the matching is between
server copies.

\begin{theorem}[General-metric temporal transfer]
\label{thm:kserver-transfer}
Suppose a lazy classical $k$-server algorithm $A$ satisfies
\[
  \E[F_A(\sigma)]
  \le \alpha\OPT_{\rm KS}(\sigma)+\beta
\]
for every fixed classical request sequence $\sigma$ with the common initial
configuration, where $F_A(\sigma)$ is its total server movement.  Then the
endpoint-batch delayed model has a causal
nonproactive algorithm satisfying
\[
  \E[\ALG(I)]
  \le
  \alpha(1+4\Gamma_X)\OPT_{\rm del}(I)
  +\left(1+\frac{2\Gamma_X}{1+2\Gamma_X}\right)\beta.
\]
For randomized $A$, the raw timed input is fixed by an oblivious adversary.
\end{theorem}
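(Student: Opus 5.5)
The plan is to copy the architecture of the weighted reduction (Lemmas~\ref{lem:weighted-physical}, \ref{lem:weighted-bad}, \ref{lem:weighted-comparator} and Theorem~\ref{thm:weighted-reduction}). The one change is that the matching potential $\Phi(S,C)$ replaces the weighted set-difference potential. The window length is $L=\theta\delta_X$ at every point, with $\theta=2\Gamma_X/(1+2\Gamma_X)$ chosen at the end.

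\textbf{Aggregation and projection.} The first uncovered raw arrival at $x$ opens the window $[a_x,a_x+L]$. Later arrivals at $x$ through the right endpoint join it, and after the batch at the deadline we output one virtual request $x@(a_x+L)$. The resulting sequence $\sigma$ depends only on the timed input, so the classical guarantee applies to it without conditioning on coins. We run $A$ on $\sigma$ in a shadow configuration $S$ and keep a physical configuration $C$, with $S=C$ initially.

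At a representative $x$, let $A$ move first; this raises $\Phi$ by at most the shadow movement $f$. If $x$ is physically pending, then $x\in S$ but $x$ is unoccupied in $C$. Take a minimum matching $\pi$ and move the $C$-server matched to the $S$-copy at $x$ onto $x$. This repair costs $g=d(\pi^{-1}(x),x)$, and deleting that matched pair lowers $\Phi$ by exactly $g$. So $g+\Delta\Phi\le f$ holds at every step, and telescoping gives $G\le F$ for total physical movement $G$ and shadow movement $F$.

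As in Lemma~\ref{lem:aw-physical}, only the representative's own point is ever loaded. Hence every physical episode is cleared at its deadline with age at most $L$, and the algorithm is nonproactive. Each physical service moves a server to a point it does not occupy, so it travels at least $\delta_X$. The number of services is therefore at most $G/\delta_X$, and
\[
  \ALG\le G+L\,G/\delta_X\le(1+\theta)F .
\]

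\textbf{Comparator.} Fix a feasible delayed schedule $O$ with movement $M$ and delay $D$. Call a representative bad if $x$ is unoccupied in $O$ after all actions at the deadline, and let $Z$ be the number of bad representatives. A bad window has one of two certificates, exactly as in Lemma~\ref{lem:bad-representatives}:
\begin{itemize}
\item $x$ is unoccupied throughout the window. Then the anchor request waits the full length $L$, and these holding segments are disjoint per point.
\item $x$ is occupied at some moment in the window. Then the window contains a last departure of a server from $x$. Each server move is charged at most once, and there are at most $M/\delta_X$ moves.
\end{itemize}
Thus
\[
  Z\le \frac{D}{\theta\delta_X}+\frac{M}{\delta_X}.
\]

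The classical schedule mimics $O$ and, at each bad representative, sends some server to $x$ and back, at cost at most $2\Delta_X$. This detour restores $O$'s configuration, so
\[
  \OPT_{\rm KS}(\sigma)\le M+2\Delta_X Z\le(1+2\Gamma_X)M+\frac{2\Gamma_X}{\theta}D .
\]
This proactive classical schedule is converted into a lazy one without extra cost by the same matching-potential argument, using the free terminal configuration.

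Combining the two parts gives
\[
  \E[\ALG]\le(1+\theta)\Bigl(\alpha\OPT_{\rm KS}(\sigma)+\beta\Bigr).
\]
At $\theta=2\Gamma_X/(1+2\Gamma_X)$ the two coefficients agree:
\[
  (1+\theta)(1+2\Gamma_X)=1+4\Gamma_X=(1+\theta)\frac{2\Gamma_X}{\theta}.
\]
This yields $\alpha(1+4\Gamma_X)$ and the additive term $(1+\theta)\beta$, which is the one stated.

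\textbf{Main obstacle.} The delicate step is the projection with configurations that are multisets. Two points need care: the repair step must genuinely remove one matched pair from an optimal matching, and the analysis must handle a lazy classical move that lands on an already-occupied location. I would fix $\pi$ as a minimum matching that pairs coincident copies with each other first. Then $x\in S$ and $x\notin C$ force the $S$-copy at $x$ to be matched to a different point. Removing that pair leaves a matching of the remaining copies of cost $\Phi-g$, and since $\Phi$ is a minimum this can only decrease further.

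A second point to check is the $\delta_X$ lower bound per service, which is what converts delay into movement; it holds because a service always moves a server onto an unoccupied point.
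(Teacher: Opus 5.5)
Your proposal is correct and follows the paper's proof essentially step for step: the same input-only windows of length $L=\theta\delta_X$, the same minimum-matching projection with $g+\Delta\Phi\le f$ and the $\delta_X$ lower bound per repair, the same two-certificate bad-representative count with the $2\Delta_X$ detour comparator, and the same balancing choice $\theta=2\Gamma_X/(1+2\Gamma_X)$. One minor correction: the repair lowers $\Phi$ by at least $g$, not exactly $g$ (your later sentence states this correctly). Also, the special tie rule pairing coincident copies is unnecessary, because a physically pending $x$ has no physical copy there, so its shadow copy is always matched to a different point.
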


The theorem is strict when the classical guarantee is strict.  In
particular, the standard work-function guarantee gives an
$O(k\Gamma_X)$ ratio with an additive constant
\cite{koutsoupias1995kserver}.  Its strict $(4k-2)$ form gives a strict
$O(k\Gamma_X)$ ratio \cite{emek2010additive}.  When $X$ is uniform,
$\Gamma_X=1$ and the temporal coefficient is exactly five.

\subsection{Input-only windows and matching projection}

Fix $\theta>0$ and put $L=\theta\delta_X$.  Independently for each point
$x$, the first raw arrival not covered by an open $x$-window anchors
\[
  [a_x,d_x]=[a_x,a_x+L].
\]
All later $x$-arrivals through the right endpoint join the window.  After the
complete raw batch at $d_x$, output one virtual classical request $x@d_x$
and close the window.  Simultaneous representatives use a fixed point order.
The virtual sequence $\sigma_L$ is determined by the timed input before any
server choices or random bits.

Run $A$ on $\sigma_L$ in a shadow configuration $S$ and maintain a physical
configuration $C$, with $S=C$ initially.  At a representative $x@d_x$, first
let $A$ process $x$.  If $x$ is physically pending, take an optimal matching
between $C$ and the updated $S$, choose a physical server matched to a shadow
copy at $x$, and move that server to $x$.  If $x$ is not pending, take no
physical action.

\begin{lemma}[Matching projection]
\label{lem:kserver-matching-projection}
If $F$ and $G$ denote the shadow and physical movement, respectively, then
pathwise
\[
  G\le F,
  \qquad
  \ALG\le(1+\theta)G\le(1+\theta)F.
\]
Every physical movement is nonproactive.
\end{lemma}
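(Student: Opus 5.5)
The plan follows the proof of Lemma~\ref{lem:aw-physical}, with the discrepancy potential $\Phi=|S\setminus C|$ replaced by the matching distance $\Phi(S,C)$, and argues representative by representative in the fixed processing order.

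\textbf{Movement inequality.} Fix one representative $x@d_x$, and let $f$ be the shadow movement spent by $A$ on it. Moving a single shadow server by distance $f$ changes the optimal matching value by at most $f$. This follows from the triangle inequality applied to the previous optimal matching, so the shadow update alone satisfies $\Delta\Phi\le f$.

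Next suppose $x$ is physically pending and a physical server is moved. A lazy shadow algorithm has a server at $x$ after processing $x$. Fix an optimal matching $\pi$ between $C$ and the updated $S$, and let $c$ be the physical server that $\pi$ matches to a shadow copy at $x$. Since $x$ is physically pending, $x$ is unoccupied in $C$, so $c\ne x$.

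Moving $c$ to $x$ costs $g=d(c,x)$. Keep the same matching $\pi$: the pair $(c,x)$ now has distance zero and every other pair is unchanged. The new optimal matching value is at most the matching value of this modified $\pi$, so $\Delta\Phi\le -g$. Combining the two updates gives $g+\Delta\Phi\le f$ for every representative.

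When $x$ is not pending, $g=0$ and only the shadow update acts, so the same inequality holds. Telescoping from $\Phi_{\rm initial}=0$ and using $\Phi_{\rm final}\ge0$ yields $G\le F$ pathwise.

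\textbf{Delay and feasibility.} This step copies the service argument of Lemma~\ref{lem:aw-physical}. Physical moves go only to the point of the current representative. The current $x$-window contains no earlier $x$-representative, and the next anchor comes strictly after the previous deadline. Hence a physical $x$-episode that starts inside the window remains pending until $d_x$, where the prescribed move clears it. Its oldest request lies in $[a_x,d_x]$, so its age is at most $L=\theta\delta_X$.

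Every physical move ends at a pending point, so it is nonproactive. Because that point was unoccupied, the move also has positive distance, at least $\delta_X$. Therefore the delay of each physical move is at most $\theta\delta_X\le\theta\,d(c,x)$, and summing over moves gives $\ALG\le(1+\theta)G$.

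\textbf{Main obstacle.} The step needing the most care is the choice of matched server. The updated shadow configuration may place several copies at $x$, and the physical server matched to one of them must not already sit at $x$; this is exactly what pendingness guarantees. A second point to check is that the optimal matching is recomputed after the shadow update, so that the zero-distance pair really lowers $\Phi$ by the full $g$.
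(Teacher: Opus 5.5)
Your proposal is correct and takes essentially the same route as the paper. You use the minimum-matching potential with the per-representative inequality $g+\Delta\Phi\le f$, where the shadow move raises $\Phi$ by at most $f$ by the triangle inequality and the repair along the selected matching edge lowers it by at least $g$; you then bound each episode's age by $L=\theta\delta_X\le\theta g$ via the window argument and the fact that a pending point is unoccupied, exactly as the paper does.
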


\begin{proof}
A shadow move of length $f$ increases $\Phi(S,C)$ by at most $f$, by the
triangle inequality applied to the matching edge incident to the moving
shadow copy.  If a physical repair moves a server from $z$ to $x$, let
$g=d(z,x)$.  The selected matching edge of length $g$ becomes a zero-length
edge, so the repair decreases the minimum matching potential by at least
$g$.  Therefore every ordered representative satisfies
\[
  g+\Phi_{\rm after}-\Phi_{\rm before}\le f.
\]
The configurations agree initially and the terminal potential is
nonnegative.  Summing gives $G\le F$.

Physical actions load only the current representative point.  Hence a
physical episode created in an $x$-window remains pending until its
representative and has age at most $L$.  Because a pending $x$ is unoccupied,
its physical repair has positive length $g\ge\delta_X$.  Thus its delay is at
most $\theta g$.  Summing over repairs proves the displayed cost bound.
Arrival-first processing and the fixed order of simultaneous representatives
give the same argument at shared deadlines.
\end{proof}

\begin{remark}[Multiplicity and event boundaries]
Configurations are multisets.  If several shadow servers occupy $x$, any
copy matched to a physical server may be chosen; if $x$ is physically
pending, no physical copy occupies $x$, so every repair has positive length
at least $\delta_X$.  The arrival-first and fixed-order conventions of
Remark~\ref{rem:aggregation-boundaries} also apply here.  In particular,
same-time departures are read at the post-action cut, and open deadlines
remain scheduled after the final raw arrival without an end-of-input signal.
\end{remark}

\begin{algorithm}[h]
\caption{General-metric temporal projection}
\label{alg:kserver-projection}
\begin{algorithmic}[1]
\State $S\gets C_0$; $C\gets C_0$; initialize no open point windows
\ForAll{raw arrival epochs and window deadlines $t$ in increasing order}
  \State Process the complete raw batch at $t$ against $C$
  \ForAll{raw arrivals $x@t$ not covered by an open $x$-window}
    \State Open $[t,t+L]$ for $x$
  \EndFor
  \State Let $R_t$ denote the representatives whose windows end at $t$
  \ForAll{$x@t\in R_t$ in the fixed point order}
    \State Let the classical algorithm $A$ process $x$ in $S$
    \If{$x$ is physically pending}
      \State Compute a minimum perfect matching between $C$ and $S$
      \State Let $z\in C$ be matched to a shadow server at $x$
      \State Move $z$ to $x$ in $C$ and clear the $x$-episode
    \EndIf
    \State Close the $x$-window
  \EndFor
\EndFor
\end{algorithmic}
\end{algorithm}

\subsection{Comparison with a delayed schedule}

Fix any delayed schedule $O$ with movement $M$ and delay $D$.  Call a
representative $x@d_x$ \emph{bad} if $O$ has no server at $x$ after its
complete raw-batch and action sequence at $d_x$.  Let $Z$ denote the number
of bad representatives.

\begin{lemma}[Bad representatives]
\label{lem:kserver-bad-representatives}
\[
  Z\le \frac{D}{\theta\delta_X}+\frac{M}{\delta_X}.
\]
\end{lemma}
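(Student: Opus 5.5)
The plan follows the unit-cost Lemma~\ref{lem:bad-representatives} and its weighted analogue Lemma~\ref{lem:weighted-bad}. Each bad window is charged either to a holding segment of $O$ or to a departure of an $O$-server from $x$. The metric constant $\delta_X$ then converts each charge class into the stated units.

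Fix a bad representative $x@d_x$ with window $[a_x,d_x]$, $d_x=a_x+L$ and $L=\theta\delta_X$. I split into two cases.

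\emph{Case (a).} $O$ has no server at $x$ when the anchor batch at $a_x$ arrives, and no server at $x$ in any intermediate microstate through the post-action cut at $d_x$. Then the anchor request starts or joins an $O$-pending $x$-episode that stays active on all of $[a_x,d_x)$. This gives $L$ units of location-labelled holding. Successive $x$-windows have disjoint interiors, and the next anchor is strictly later than the previous deadline. So the segments $[a_x,d_x)\times\{x\}$ are disjoint for fixed $x$. By the holding identity $D=\int|\Pending(t)|\,dt$, segments of different locations add. Hence the number of case-(a) windows is at most $D/L=D/(\theta\delta_X)$.

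\emph{Case (b).} $O$ has a server at $x$ at some point of the window, whether initially or after a move into $x$. Since $x$ is unoccupied in $O$'s post-action configuration at $d_x$, the window contains a last transition in which the final server copy at $x$ leaves $x$. I charge the representative to that server move. Its length is $d(x,y)\ge\delta_X$ because it moves between distinct points. A single move leaves only one point, so it is a last-departure event for at most one window of that point, again by disjoint window interiors. Multiset configurations need some care here. The charged event is the move that empties $x$; earlier departures of other copies are not charged. Each move is still counted at most once. Summing, the number of case-(b) windows is at most the number of moves of length $\ge\delta_X$, which is at most $M/\delta_X$. Zero-length moves are irrelevant, since a move from $x$ to $x$ does not empty $x$.

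The main obstacle is making the case split exhaustive under the arrival-first and same-time conventions. In particular, a server can arrive at and depart from $x$ at the deadline timestamp itself, within the micro-order of actions at $d_x$. Reading ``bad'' at the post-action cut puts such sequences into case (b), exactly as in Lemma~\ref{lem:bad-representatives}. A nonlazy $O$ that moves a server to a nonpending $x$ and away again is also covered, because only the departure is charged. Adding the two classes gives $Z\le D/(\theta\delta_X)+M/\delta_X$.
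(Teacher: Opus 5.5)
Your proposal is correct and follows the paper's own proof step for step. Windows in which $x$ is never occupied are charged to $\theta\delta_X$ units of location-labelled holding. Every other bad window is charged to the last departure that empties $x$; that move has length at least $\delta_X$ and receives at most one charge, and your remarks on multiset copies and same-time micro-actions agree with the paper's conventions.
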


\begin{proof}
Consider a bad $x$-window.  If $O$ has no server at $x$ in any cache
microstate from the anchor batch through the post-action deadline state, the
anchor request remains pending for the full window length
$\theta\delta_X$.  Successive windows of one point have disjoint interiors,
and holding segments at different points are added by the holding identity.
These windows therefore number at most $D/(\theta\delta_X)$.

Otherwise the window contains a last transition that makes $x$ unoccupied.
Charge the representative to that departure.  If another server remains at
$x$, the departure does not qualify.  One movement has only one origin, and
successive windows of one point have disjoint interiors, so every movement is
charged at most once.  Every charged movement has length at least
$\delta_X$, and these windows therefore number at most $M/\delta_X$.
\end{proof}

\begin{lemma}[Virtual $k$-server comparator]
\label{lem:kserver-virtual-comparator}
\[
  \OPT_{\rm KS}(\sigma_L)
  \le
  \frac{2\Gamma_X}{\theta}D
  +(1+2\Gamma_X)M.
\]
\end{lemma}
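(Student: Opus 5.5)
I will follow the template of Lemma~\ref{lem:virtual-comparator} and Lemma~\ref{lem:weighted-comparator}: build a proactive classical $k$-server schedule on $\sigma_L$ whose cost is bounded by $M$ plus a detour charge for each bad representative, and then argue that the classical optimum is at most this proactive cost.

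\emph{Proactive schedule.} Between virtual requests, the schedule mimics every server movement of $O$, in the same micro-order. At a deadline $d$, it first mimics all of $O$'s actions at that timestamp and then processes the simultaneous representatives in the fixed point order. A good representative $x@d$ finds a server at $x$ in $O^+(d)$ and costs nothing. For a bad representative, the schedule takes any server, at some point $y$, moves it to $x$, serves $x$, and moves it back to $y$. This detour costs $2d(y,x)\le2\Delta_X$ and restores $O^+(d)$ exactly, so simultaneous detours concatenate and the mimicked trajectory resumes unchanged. The total proactive cost is therefore at most $M+2\Delta_X Z$.

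\emph{Charging.} Substituting Lemma~\ref{lem:kserver-bad-representatives} gives
\[
  M+2\Delta_X\Bigl(\frac{D}{\theta\delta_X}+\frac{M}{\delta_X}\Bigr)
  =\frac{2\Gamma_X}{\theta}D+(1+2\Gamma_X)M,
\]
which is exactly the claimed bound.

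\emph{Main obstacle: removing proactivity.} Classical $k$-server optimality is usually stated over all schedules, and proactive moves are allowed there, so if $\OPT_{\rm KS}$ is defined that way nothing more is needed. If instead the classical optimum is taken over lazy schedules, a projection is required. I would couple the proactive configuration $A$ with a lazy configuration $L$ through the matching potential $\Phi(A,L)$. A proactive move of length $f$ raises $\Phi$ by at most $f$, by the triangle inequality, exactly as in Lemma~\ref{lem:kserver-matching-projection}. When a virtual request at $x$ is unoccupied in $L$, the lazy schedule moves the server matched to a copy of $x$ in $A$; this costs $g$ and lowers $\Phi$ by at least $g$. Telescoping from equal initial configurations, with a free terminal configuration, shows the lazy cost is at most the proactive cost.

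The only delicate points are the same-time conventions. The post-action cut must be used in the definition of a bad representative, so that a same-timestamp arrive-and-depart is charged to the last departure counted in Lemma~\ref{lem:kserver-bad-representatives}. Multiset configurations must be handled so that a detour from a multiply occupied point is still well defined. Neither point changes the bound.
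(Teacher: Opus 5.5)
Your proposal is correct and follows essentially the same route as the paper. You mimic $O$, pay a round-trip detour of cost at most $2\Delta_X$ for each bad representative, substitute Lemma~\ref{lem:kserver-bad-representatives}, and, for completeness, remove proactive moves with the minimum-matching potential, using the same triangle-inequality increase and matched-edge decrease that the paper uses.
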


\begin{proof}
Construct a proactive classical trajectory that mimics all movements of $O$
between representatives.  At a deadline, first mimic all same-time actions
of $O$ and then process the representatives in their fixed order.  A good
representative hits.  For a bad $x@d_x$, move any current server from $q$ to
$x$ and immediately return it to $q$.  The detour serves $x$, restores the
configuration, and costs at most $2\Delta_X$.  Lemma
\ref{lem:kserver-bad-representatives} bounds the proactive cost by
\begin{align*}
  M+2\Delta_X Z
  &\le
  M+2\Delta_X
  \left(\frac{D}{\theta\delta_X}+\frac{M}{\delta_X}\right)\\
  &=\frac{2\Gamma_X}{\theta}D+(1+2\Gamma_X)M.
\end{align*}

For completeness, couple this proactive trajectory to a lazy classical
configuration and use the minimum-matching distance as a potential.  A
proactive move increases the potential by at most its length.  At a virtual
request missing from the lazy configuration, move the physical server
matched to the requested shadow copy; its movement is paid by the potential
drop.  The configurations agree initially and the terminal configuration is
free, so telescoping removes all proactive movements without increasing
cost.  The classical optimum is no larger than the resulting lazy cost.
\end{proof}

\begin{proof}[Proof of Theorem~\ref{thm:kserver-transfer}]
The virtual sequence is an input-only object, so the classical guarantee
applies directly.  Lemmas~\ref{lem:kserver-matching-projection} and
\ref{lem:kserver-virtual-comparator} give
\[
  \E[\ALG]
  \le
  \alpha(1+\theta)
  \left(
    \frac{2\Gamma_X}{\theta}D+(1+2\Gamma_X)M
  \right)
  +(1+\theta)\beta.
\]
The delay and movement coefficients agree when
\[
  \theta=\frac{2\Gamma_X}{1+2\Gamma_X}.
\]
Their common value is $1+4\Gamma_X$.  Apply the inequality to an optimal
delayed schedule.
\end{proof}

\subsection{Why the aspect ratio remains}

The dependence on $\Gamma_X$ is a limitation of the proved single-scale
transfer, not a lower bound for every delayed algorithm.  The following
simple test shows that one common window length cannot give an
aspect-independent shadow reduction.

\begin{figure}[h]
\centering
\begin{tikzpicture}[
  >=Latex,
  point/.style={circle,fill=black,inner sep=1.8pt},
  request/.style={circle,draw=red!70!black,fill=red!12,inner sep=2.3pt},
  note/.style={font=\small,align=left}
]
  \node[note,font=\small\bfseries,anchor=east] at (2.2,1.55) {close alternating test};
  \draw[thick] (2.7,1.35) -- (11.4,1.35);
  \node[point,label=below:$a$] at (2.7,1.35) {};
  \node[request,label=below:$z$] at (4.0,1.35) {};
  \node[point,label=below:$p$] at (11.4,1.35) {};
  \draw[<->,blue!65!black] (2.7,1.82) -- node[above] {$\delta$} (4.0,1.82);
  \node[note,anchor=west] at (4.55,1.55)
    {$z,a,z,a,\ldots$: long-run loss $\ge 1+\tau/\delta$};

  \node[note,font=\small\bfseries,anchor=east] at (2.2,-0.35)
    {far alternating test};
  \draw[thick] (2.7,-0.55) -- (11.4,-0.55);
  \node[point,label=below:$a$] at (2.7,-0.55) {};
  \node[point,label=below:$z$] at (4.0,-0.55) {};
  \node[request,label=below:$p$] at (11.4,-0.55) {};
  \draw[<->,orange!80!black,thick,bend left=18]
    (2.7,-0.38) to node[above] {$p,a,p,a,\ldots$} (11.4,-0.38);
  \node[note,anchor=west] at (4.55,-1.15)
    {$d(a,p)=\Delta$; long-run loss $\ge 1+\Delta/\tau$};
\end{tikzpicture}
\caption{The single-scale trap.  A long window is repeatedly expensive on a
nearby pair, while a short window forces repeated long crossings.  Both tests
are repeatable, so their losses cannot be hidden by an additive constant.}
\label{fig:kserver-two-scale}
\end{figure}

\paragraph{Representative-only single-scale reductions.}
Fix $\tau>0$.  A \emph{representative-only single-scale reduction}
$\mathcal R_\tau$ opens at each point $x$ a window $[t,t+\tau]$ whenever a
raw arrival is not covered by an existing $x$-window.  Later $x$-arrivals
through the right endpoint join that window, and one representative is issued
after the complete batch at its deadline.  The physical server moves only
when processing representatives.  At an $x$-representative it moves to $x$
exactly when $x$ is physically pending, and then clears that episode; if $x$
is not pending it does not move.  Representatives are processed in event
order, with a fixed order at common deadlines.  Internal shadow states are
permitted but do not alter these physical-service rules.

\begin{proposition}[Single-scale representative-only limitation]
\label{prop:kserver-single-scale}
Fix $0<\delta\le\Delta/2$ and let $X=\{a,z,p\}$ be the three-point line metric
with $a=0$, $z=\delta$, and $p=\Delta$, with the common initial server located
at $a$.  Then
\[
  \delta_X=\delta,
  \qquad \Delta_X=\Delta,
  \qquad \Gamma_X=\frac{\Delta}{\delta}.
\]
For every $\tau>0$, if finite constants $c\ge0$ and $\beta$ satisfy
\[
  \mathcal R_\tau(I)\le c\OPT_{\rm del}(I)+\beta
\]
for every finite timed input on this fixed metric, then
\[
  c\ge
  \max\left\{1+\frac{\tau}{\delta},
              1+\frac{\Delta}{\tau}\right\}
  \ge1+\sqrt{\Gamma_X}.
\]
In particular, a representative-only reduction using one common window
length has competitive ratio at least $\sqrt{\Gamma_X}$ even when an additive
constant depending on the fixed metric and on $\tau$ is allowed.
\end{proposition}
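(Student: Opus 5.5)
We establish the two lower bounds $c\ge 1+\tau/\delta$ and $c\ge 1+\Delta/\tau$ separately, each from a repeatable family of inputs whose length grows while the additive constant $\beta$ stays fixed. Dividing by the number of repetitions and letting it tend to infinity eliminates $\beta$. The final inequality then follows from AM--GM: if $x=\tau/\delta$, then $\Delta/\tau=\Gamma_X/x$, and $\max\{1+x,1+\Gamma_X/x\}\ge 1+\sqrt{\Gamma_X}$.

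\emph{Close alternating test.} Issue single requests $z@0,\ a@T,\ z@2T,\ a@3T,\ldots$ with $2n$ requests and gap $T>\tau$, so that each window closes before the next arrival. The single server starts at $a$.
\begin{itemize}
\item Each request to the currently unoccupied point of $\{a,z\}$ is physically pending when it arrives.
\item $\mathcal R_\tau$ may move only at the representative, at age $\tau$, so each request costs it $\delta+\tau$.
\item We must check that the server is always at the other point of the pair. It never visits $p$, because $p$ is never requested and $\mathcal R_\tau$ moves only to pending representative points.
\end{itemize}
Hence $\mathcal R_\tau=2n(\delta+\tau)$. The offline schedule serves each request immediately, so $\OPT_{\rm del}\le 2n\delta$. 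Dividing and letting $n\to\infty$ gives $c\ge 1+\tau/\delta$.

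\emph{Far alternating test.} Issue $p@0,\ a@T_1,\ p@T_1+T_2,\ldots$, i.e.\ $p$ and $a$ alternately, with long gaps $T\gg\Delta+\tau$.
\begin{itemize}
\item Each request to the unoccupied point again makes $\mathcal R_\tau$ pay $\Delta+\tau$.
\item A better comparator waits for a short time $\epsilon$ per request; this is not the right idea, because the requests alternate and the server must cross anyway.
\item Instead, each request to $p$ is followed by a request to $a$ at a short gap $\eta<\tau$, then a long pause before the next $p$. So each block is $p@t,\ a@t+\eta$.
\end{itemize}

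We analyse one block.
\begin{itemize}
\item \textbf{Algorithm.} The $p$-window closes at $t+\tau$, while $a$ is hit at $t+\eta$ because the server is still at $a$. At $t+\tau$ the server moves to $p$, paying $\Delta+\tau$, and $a$ is no longer pending. The next block's $a$ request would then cost a return trip.
\item \textbf{Repaired block.} Use $p@t,\ a@t+\eta$, with the next block starting $p@t'$ and $a$ following. Then $\mathcal R_\tau$ pays about $\Delta+\tau$ to go to $p$ and $\Delta+\tau$ to return to $a$.
\item \textbf{Offline.} $\OPT$ waits $\eta$ at cost $\eta$ and then serves $p$ with cost $\Delta+\eta$.
\end{itemize}
Per block, with $\eta\to0$, the ratio is $2(\Delta+\tau)/(\Delta+\eta)$. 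This is not exactly $1+\Delta/\tau$, so the comparison must be refined.

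The main obstacle is choosing the far test so that the ratio is exactly $1+\Delta/\tau$, matching the stated bound. The first attempt will be to request $a$ repeatedly at spacing slightly larger than $\tau$ while a single $p$ request is outstanding. Concretely, use $p@0$ followed by $a@\tau/2,\ a@\tau/2+\tau^+,\ldots$ for $K$ requests. Then:
\begin{itemize}
\item $\mathcal R_\tau$ crosses to $p$ at time $\tau$ and then is forced back, paying at least $\Delta+\tau$ plus $\Delta$ per cycle.
\item $\OPT$ keeps the server at $a$ and pays holding $K\tau$ plus a final $\Delta$.
\end{itemize}
The constants will be tuned, with the ratio per cycle approaching $(\Delta+\tau)/\tau$, to yield $c\ge 1+\Delta/\tau$. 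Repeating this cycle and letting the repetition count tend to infinity again eliminates $\beta$.
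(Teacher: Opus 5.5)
Your close alternating test and the closing AM--GM step are correct and match the paper. For the metric constants, also note that $\delta\le\Delta/2$ gives $\Delta-\delta\ge\delta$, hence $\delta_X=\delta$. The genuine gap is the far test: you never complete it, and none of your candidate instances yields $1+\Delta/\tau$.

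With long gaps $T\gg\Delta+\tau$, or with short $p,a$ blocks separated by long pauses, the comparator simply crosses when needed. It pays about $\Delta$ per crossing against $\Delta+\tau$ for $\mathcal R_\tau$, so these families only give $c\ge 1+\tau/\Delta$. Your block estimate $2(\Delta+\tau)/(\Delta+\eta)$ forgets that the comparator must also return to $a$.

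The instance you finally propose, $p@0$ followed by $K$ requests to $a$, forces no repeated crossings. The server goes to $p$ at time $\tau$, is forced back once by the second $a$-request, and from then on every $a$-request hits. Hence $\mathcal R_\tau(I)\le 2(\Delta+\tau)$ for every $K$, a constant that $\beta$ absorbs (and $\OPT_{\rm del}(I)\le 2\Delta$ anyway). The ``$\Delta$ per cycle'' you claim does not exist unless $p$ is requested again. The promised tuning of constants is exactly the missing argument.

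The missing idea is the one you dismissed when you wrote that ``the server must cross anyway'': the comparator never needs to cross. Under per-episode maximum delay, every later $p$-request joins the $p$-episode anchored at time $0$. An offline server parked at $a$ therefore pays holding only at rate one, however often $p$ recurs.

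Keep the alternation $p,a,p,a,\ldots$ you first wrote down, but at spacing just above $\tau$, say $t_j=j(\tau+\varepsilon)$ for $j=0,\ldots,N-1$. Each representative fires strictly before the next arrival, and by induction every request finds the server at the other point. So $\mathcal R_\tau(I_N)=N(\Delta+\tau)$. The offline schedule stays at $a$, lets all $p$-requests join the single episode, and moves to $p$ once after the final batch, at cost $\Delta+(N-1)(\tau+\varepsilon)$. Divide $N(\Delta+\tau)\le c\bigl(\Delta+(N-1)(\tau+\varepsilon)\bigr)+\beta$ by $N$, let $N\to\infty$, and then let $\varepsilon\downarrow 0$. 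This gives $c\ge 1+\Delta/\tau$, and it is exactly the paper's far test.
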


\begin{proof}
The nonzero pairwise distances are $\delta$, $\Delta-\delta$, and $\Delta$.
The restriction $\delta\le\Delta/2$ therefore gives the displayed values of
$\delta_X$, $\Delta_X$, and $\Gamma_X$.

Fix $\varepsilon>0$ and release $N$ singleton requests at times
$t_j=j(\tau+\varepsilon)$, for $j=0,\ldots,N-1$.  First alternate their
locations between $z$ and $a$, beginning at $z$.  Every representative occurs
strictly before the next raw arrival, consecutive visits to one point are
more than $\tau$ apart, and induction shows that the server is at the other
point when each representative is processed.  Hence
\[
  \mathcal R_\tau(I_N^{\rm close})=N(\delta+\tau).
\]
Serving every request immediately is feasible and costs $N\delta$, so
\[
  N(\delta+\tau)\le cN\delta+\beta.
\]
Letting $N\to\infty$ gives $c\ge1+\tau/\delta$.  Thus, unlike the former
one-request test, the close-point loss cannot be absorbed by an additive
constant.

For the far-point test use the same release times and alternate between $p$
and $a$, beginning at $p$.  The same event-order argument gives
\[
  \mathcal R_\tau(I_N^{\rm far})=N(\Delta+\tau).
\]
A feasible offline schedule stays at $a$ throughout the raw arrivals.  All
$a$-requests hit, all $p$-requests join the episode anchored at time zero,
and immediately after the final batch the schedule moves once to $p$.  Its
cost is $\Delta+(N-1)(\tau+\varepsilon)$, and hence
\[
  N(\Delta+\tau)
  \le c\bigl(\Delta+(N-1)(\tau+\varepsilon)\bigr)+\beta.
\]
First letting $N\to\infty$ and then $\varepsilon\downarrow0$ gives
$c\ge1+\Delta/\tau$.  Combining the two repeatable tests yields
\[
  c\ge1+\max\left\{\frac{\tau}{\delta},
                    \frac{\Delta}{\tau}\right\}
  \ge1+\sqrt{\frac{\Delta}{\delta}}
  =1+\sqrt{\Gamma_X}.
\]
All representatives precede the next raw arrival strictly, so the
construction uses no simultaneous-event boundary case.  The final offline
move clears the outstanding $p$-episode and is feasible under the free
terminal-configuration convention.
\end{proof}

A purely local assignment rule also fails.  On the line
$a=0,x=1,b=3$ with initial servers at $a,b$, alternate requests between
$x$ and $a$, always releasing the next request just after the preceding
service.  A rule that triggers when the oldest age reaches the nearest-server
distance and then moves that nearest server crosses $a x$ forever.  The
offline solution moves the server at $b$ to $x$ once and thereafter hits
every request.  Thus persistent credit and geometry-aware server assignment
are necessary even when the aspect ratio is constant.

One promising direction is to grow temporal credit at several metric scales
and retain server--subtree commitments across local services.  The missing
global step is a multi-root certificate: one offline server trajectory may
visit terminals belonging to several online forest components, so a direct
component-to-server matching can fail already for $k=2$.

Finally, another natural model lets one or several servers visit multiple
request points in one event and charges one maximum delay over their union.
In configuration space, a request at $x$ then corresponds to the service set
\[
  \{C:C\text{ contains }x\},
\]
not to one configuration.  The preceding request identity no longer pins
the next spatial anchor, and Theorem~\ref{thm:kserver-transfer} does not apply.
This shared-event model should therefore be treated separately from the
endpoint-batch extension proved here.

\section*{Declaration of Generative AI Use}
The authors used OpenAI language-model systems, including Codex, as
interactive assistants during this project.  These systems helped explore
proofs, search for counterexamples, write formal arguments, check
computations, draft LaTeX, and revise the paper.  The human authors chose the
research questions and model, checked the generated arguments, and take full
responsibility for every claim and for the final manuscript.

\bibliographystyle{plain}
\bibliography{references}

@article{fiat1991competitive,
  author  = {Fiat, Amos and Karp, Richard M. and Luby, Michael and McGeoch, Lyle A. and Sleator, Daniel D. and Young, Neal E.},
  title   = {Competitive Paging Algorithms},
  journal = {Journal of Algorithms},
  volume  = {12},
  number  = {4},
  pages   = {685--699},
  year    = {1991},
  doi     = {10.1016/0196-6774(91)90041-V}
}

@article{mcgeoch1991strongly,
  author  = {McGeoch, Lyle A. and Sleator, Daniel D.},
  title   = {A Strongly Competitive Randomized Paging Algorithm},
  journal = {Algorithmica},
  volume  = {6},
  number  = {6},
  pages   = {816--825},
  year    = {1991},
  doi     = {10.1007/BF01759073}
}

@article{belady1966study,
  author  = {Belady, L{\'a}szl{\'o} A.},
  title   = {A Study of Replacement Algorithms for a Virtual-Storage Computer},
  journal = {IBM Systems Journal},
  volume  = {5},
  number  = {2},
  pages   = {78--101},
  year    = {1966},
  doi     = {10.1147/sj.52.0078}
}

@article{sleator1985amortized,
  author  = {Sleator, Daniel D. and Tarjan, Robert E.},
  title   = {Amortized Efficiency of List Update and Paging Rules},
  journal = {Communications of the ACM},
  volume  = {28},
  number  = {2},
  pages   = {202--208},
  year    = {1985},
  doi     = {10.1145/2786.2793}
}

@article{young1994kserver,
  author  = {Young, Neal E.},
  title   = {The $K$-Server Dual and Loose Competitiveness for Paging},
  journal = {Algorithmica},
  volume  = {11},
  number  = {6},
  pages   = {525--541},
  year    = {1994},
  doi     = {10.1007/BF01189992}
}

@inproceedings{bansal2007weighted,
  author    = {Bansal, Nikhil and Buchbinder, Niv and Naor, Joseph},
  title     = {A Primal-Dual Randomized Algorithm for Weighted Paging},
  booktitle = {Proc. FOCS},
  pages     = {507--517},
  year      = {2007},
  doi       = {10.1109/FOCS.2007.43}
}

@inproceedings{fairstein2024distributional,
  author    = {Fairstein, Yaron and Naor, Joseph and Tsachor, Tomer},
  title     = {Distributional Online Weighted Paging with Limited Horizon},
  booktitle = {Proc. APPROX/RANDOM},
  volume    = {317},
  pages     = {15:1--15:15},
  year      = {2024},
  doi       = {10.4230/LIPIcs.APPROX/RANDOM.2024.15}
}

@inproceedings{azar2017osd,
  author    = {Azar, Yossi and Ganesh, Arun and Ge, Rong and Panigrahi, Debmalya},
  title     = {Online Service with Delay},
  booktitle = {Proc. STOC},
  pages     = {551--563},
  year      = {2017},
  doi       = {10.1145/3055399.3055475}
}

@article{gupta2022caching,
  author  = {Gupta, Anupam and Kumar, Amit and Panigrahi, Debmalya},
  title   = {Caching with Time Windows and Delays},
  journal = {SIAM Journal on Computing},
  volume  = {51},
  number  = {4},
  pages   = {975--1017},
  year    = {2022},
  doi     = {10.1137/20M1346286}
}

@inproceedings{krnetic2020kserver,
  author    = {Krneti\'c, Predrag and Melnyk, Darya and Wang, Yuyi and Wattenhofer, Roger},
  title     = {The $k$-Server Problem with Delays on the Uniform Metric Space},
  booktitle = {Proc. ISAAC},
  volume    = {181},
  pages     = {61:1--61:13},
  year      = {2020},
  doi       = {10.4230/LIPIcs.ISAAC.2020.61}
}

@misc{lu2026mlamax,
  author       = {Lu, Tianhang and Ren, Runtian and Liu, Shengcai and Tang, Ke},
  title        = {Online Multi-Level Aggregation with Per-Batch Maximum Delay},
  howpublished = {arXiv:2608.06796},
  year         = {2026},
  eprint       = {2608.06796},
  archivePrefix = {arXiv},
  primaryClass = {cs.DS}
}

@article{asiatici2022request,
  author  = {Asiatici, Mikhail and Ienne, Paolo},
  title   = {Request, Coalesce, Serve, and Forget: Miss-Optimized Memory Systems for Bandwidth-Bound Cache-Unfriendly Applications on {FPGAs}},
  journal = {ACM Transactions on Reconfigurable Technology and Systems},
  volume  = {15},
  number  = {2},
  pages   = {13:1--13:33},
  year    = {2022},
  doi     = {10.1145/3466823}
}

@inproceedings{atre2020delayed,
  author    = {Atre, Nirav and Sherry, Justine and Wang, Weina and Berger, Daniel S.},
  title     = {Caching with Delayed Hits},
  booktitle = {Proc. SIGCOMM},
  pages     = {495--513},
  year      = {2020},
  doi       = {10.1145/3387514.3405883}
}

@inproceedings{gurushankar2025latency,
  author    = {Gurushankar, Keerthana and Singer, Noah G. and Subercaseaux, Bernardo},
  title     = {Latency Guarantees for Caching with Delayed Hits},
  booktitle = {Proc. INFOCOM},
  pages     = {1--10},
  year      = {2025},
  doi       = {10.1109/INFOCOM55648.2025.11044511}
}

@article{feder2004combining,
  author  = {Feder, Tom{\'a}s and Motwani, Rajeev and Panigrahy, Rina and Seiden, Steve and van Stee, Rob and Zhu, An},
  title   = {Combining Request Scheduling with Web Caching},
  journal = {Theoretical Computer Science},
  volume  = {324},
  number  = {2--3},
  pages   = {201--218},
  year    = {2004},
  doi     = {10.1016/j.tcs.2004.05.016}
}

@article{albers2010reordering,
  author  = {Albers, Susanne},
  title   = {New Results on Web Caching with Request Reordering},
  journal = {Algorithmica},
  volume  = {58},
  number  = {2},
  pages   = {461--477},
  year    = {2010},
  doi     = {10.1007/s00453-008-9276-x}
}

@inproceedings{avigdor2013rbm,
  author    = {Avigdor-Elgrabli, Noa and Rabani, Yuval},
  title     = {An Optimal Randomized Online Algorithm for Reordering Buffer Management},
  booktitle = {Proc. FOCS},
  pages     = {1--10},
  year      = {2013},
  doi       = {10.1109/FOCS.2013.9}
}

@inproceedings{adamaszek2011almost,
  author    = {Adamaszek, Anna and Czumaj, Artur and Englert, Matthias and R{\"a}cke, Harald},
  title     = {Almost Tight Bounds for Reordering Buffer Management},
  booktitle = {Proc. STOC},
  pages     = {607--616},
  year      = {2011},
  doi       = {10.1145/1993636.1993717}
}

@article{koutsoupias1995kserver,
  author  = {Koutsoupias, Elias and Papadimitriou, Christos H.},
  title   = {On the $k$-Server Conjecture},
  journal = {Journal of the ACM},
  volume  = {42},
  number  = {5},
  pages   = {971--983},
  year    = {1995},
  doi     = {10.1145/210118.210128}
}

@article{emek2010additive,
  author  = {Emek, Yuval and Fraigniaud, Pierre and Korman, Amos and Ros{\'e}n, Adi},
  title   = {On the Additive Constant of the $k$-Server Work Function Algorithm},
  journal = {Information Processing Letters},
  volume  = {110},
  number  = {24},
  pages   = {1120--1123},
  year    = {2010},
  doi     = {10.1016/j.ipl.2010.09.003}
}

@inproceedings{bhore2026general,
  author        = {Bhore, Sujoy and Paw{\l}owski, Micha{\l} and Umboh, Seeun William},
  title         = {Online {TCP} Acknowledgment under General Delays},
  booktitle = {Proc. APPROX/RANDOM (to appear)},
  year          = {2026}
}

\end{document}